\documentclass{article} 

\usepackage{amsmath,amssymb,amsthm,natbib,graphicx,url}
\usepackage[a4paper, total={6in, 8in}]{geometry}

\usepackage{booktabs} 
\usepackage[ruled]{algorithm2e} 

\usepackage[hidelinks]{hyperref}
\usepackage{cleveref}
\usepackage{authblk}

\usepackage{thmtools} 
\usepackage{thm-restate}
\usepackage{MnSymbol} 

\SetAlFnt{\small}
\SetAlCapFnt{\small}
\SetAlCapNameFnt{\small}
\SetAlCapHSkip{0pt}
\IncMargin{-\parindent}
\usepackage{color}

\newcommand{\cM}{\mathcal M}

\newcommand{\cI}{\mathcal I}

\newcommand{\supp}{\mathsf{supp}}
\newcommand{\E}[1]{\mathbb{E}\left[ #1 \right]}
\renewcommand{\P}[1]{\mathbb{P}\left( #1 \right)}

\newcommand{\util}{\mathsf{u}}
\newcommand{\utili}[2]{\mathsf{u}\left({#1}; {#2}\right)}

\newcommand{\pre}{\mathsf{pre}}
\newcommand{\suff}{\mathsf{suff}}

\newcommand{\inst}{\mathcal{I}}
\newcommand{\indicator}[1]{\mathbb{I}_{\{#1\}}}
\newcommand{\transformeps}[1]{\mathcal{T}^{#1}}
\newcommand{\transformber}{\mathcal{T}}

\newcommand{\istar}{{i^*}}
\newcommand{\piStar}{\pi^*}

\newcommand{\A}{r}

\newcommand{\B}[1]{{\color{blue} #1}}
\newcommand{\R}[1]{{\color{red} #1}}

\newtheorem{observation}{Observation}
\newtheorem{claim}{Claim}[section]
\newtheorem{definition}{Definition}[section]
\newtheorem{proposition}{Proposition}[section]

\newtheorem{lemma}{Lemma}[section]
\newtheorem{theorem}{Theorem}[section]

\newtheorem{example}{Example}

\title{Pandora's Problem with Combinatorial Cost\thanks{Michal Feldman and Ben Berger are partially supported by the European Research Council (ERC) under the European Union’s
Horizon 2020 research and innovation program (grant agreement No. 866132), by the Israel Science Foundation
(grant number 317/17), by an Amazon Research Award, and by the NSF-BSF (grant number 2020788).
Federico Fusco and Tomer Ezra are partially supported by ERC Advanced Grant 788893 AMDROMA ``Algorithmic and Mechanism Design Research in Online Markets'' and MIUR PRIN project ALGADIMAR ``Algorithms, Games, and Digital Markets''.}}

\author[1]{Ben Berger}
\author[2]{Tomer Ezra}
\author[1]{Michal Feldman}
\author[2]{Federico Fusco}

\affil[1]{Tel Aviv University, Tel Aviv, Israel}
\affil[2]{Sapienza Universit\`a di Roma, Rome, Italy}

\date{}

\begin{document}

\maketitle

\begin{abstract}
Pandora’s problem is a fundamental model in economics that studies optimal search strategies under costly inspection.
In this paper we {initiate} the study of Pandora’s problem with {\em combinatorial costs},
capturing many real-life scenarios where search cost is non-additive. Weitzman's celebrated algorithm [1979] establishes the remarkable result that, for additive costs, 
the optimal search strategy is non-adaptive and computationally feasible. 

We inquire to which extent this structural and computational simplicity extends beyond additive cost functions. Our main result is that the class of submodular cost functions admits an optimal strategy that follows a fixed, non-adaptive order, thus preserving the structural simplicity of additive cost functions. In contrast, for the more general class of subadditive (or even XOS) cost functions the optimal strategy may already need to determine the search order adaptively.
On the computational side, obtaining any approximation to the optimal utility requires super polynomially many 
queries to the cost function,
even for a strict subclass of submodular cost functions. 
\end{abstract}

\clearpage

\section{Introduction}
\label{sec:intro}
Pandora's problem captures the challenge of searching for a good alternative
among multiple options, 
under costly evaluation.
This problem was introduced in the seminal paper of \citet{weitzman79}, as a stochastic search problem 
over $n$ boxes,  each associated with an independent hidden stochastic value, and an exploration cost.
At every point in time, the decision maker chooses which box (if any) to open. 
Upon opening a box, the decision maker incurs its exploration cost, and observes its realized value.
Then, the decision maker can either 
decide 
to open an additional box or halt and obtain the maximum value observed so far.
The goal is to maximize the expected maximum value over the set of opened boxes minus the sum of their exploration costs. 

This setting captures many real-life scenarios, such as hiring employees or searching for an apartment, where there is an inherent tension between the desire to explore many options in an attempt to find one with high reward, and the desire to minimize the total exploration cost.
\citet{weitzman79} 
showed that the optimal strategy 
for this problem
exhibits both structural simplicity and computational simplicity. 
In particular, it opens the boxes according to a fixed-order, determined at the outset; only the stopping time is determined online, depending on the observed values.  Moreover, the entire optimal strategy can be computed efficiently. 

The last few years have seen a renewed interest in Pandora's problem, leading to a line of work that studies
several extensions of the original model.
Most studies focus on extending one of two features of the original problem: either considering a different notion of value derived from the set of opened boxes \cite[e.g.,][]{Olszewski15,Singla18}; or modifying the rules of exploration \cite[e.g.,][]{Doval18,EsfandiariHLM19,BoodaghiansFLL20,FuLX18}.
However, all of them share one fundamental assumption, namely that
each box is associated with an individual cost, and these costs accumulate additively just like in the original model.

However, in many real-life scenarios, exploring one alternative may affect the exploration cost of other alternatives. For instance, when recruiting a new employee, there is a fixed cost for setting up the hiring process, while evaluating each additional candidate induces a small marginal cost.
As another example, when searching for an apartment, each individual visit incurs a cost, but visiting multiple apartments in the same neighborhood is clearly less expensive than the sum of the costs of visiting them separately.

In this paper, we initiate the study of Pandora's problem with combinatorial cost functions, namely, a cost function that assigns a real value to every set of boxes. 
In this model, a decision maker who opens an extra box, 
given a set $S$ of opened boxes, incurs its {\em marginal cost} given $S$. 
We inquire to which extent the structural and computational simplicity of \citet{weitzman79} extends beyond additive cost functions.
As it turns out, the structural simplicity of the original problem does not carry over to general cost functions. 
In particular, the exploration order in the optimal strategy may unavoidably be {\em adaptive}. 
This is demonstrated in the following example.

\begin{example}
\label{ex:adaptive-order}
    Consider an instance with $3$ boxes. The value in box $1$ is $10$ with probability $\tfrac 12$ and $0$ otherwise. The value in box $2$ is $12$ with probability $\tfrac 12$ and $0$ otherwise. The value in box $3$ is $10$ with probability $1$.
    The total cost of exploring a set of boxes from the collection $\{\emptyset, \{1\},\{2\},\{3\},\{1,2\},\{1,3\}\}$ is $0$, and the total cost of exploring a set of boxes from the collection $\{\{2,3\},\{1,2,3\}\}$ is $20$.
    It is not too difficult to observe that opening both boxes 2 and 3 is too expensive for any reasonable strategy. In fact, it can be shown (see Claim~\ref{cl:proof-example}) that the (unique) optimal strategy for this instance is the following: open box $1$. If its value is $10$, then open box $2$, otherwise (\emph{i.e.}, the value in box 1 is $0$), open box $3$. 
\end{example}

In the example above, boxes 2 and 3 exhibit strong complementarity in their cost; namely, the cost of opening both of them is (much) greater than the sum of their individual costs (which is 0). 
Many real-life scenarios, however, exhibit the opposite phenomenon, where the cost of the whole is smaller than the sum of the costs of its parts.  
This structure is captured by the class of subadditive cost functions, where $c(S \cup T) \leq c(S) + c(T)$ for any sets of boxes $S$ and $T$, also known as complement-free functions.

A widely-encountered subclass of subadditive functions is the class of submodular functions, defined by {\em decreasing marginal contribution}. 
Indeed, many real-life exploration tasks exhibit this structure; e.g., where some fixed cost is incurred, followed by smaller individual costs. 
A hierarchy of complement-free functions has been provided by \citet{LehmannLN06}, including
the prominent classes of additive, submodular, and subadditive functions, as well as fractionally-subadditive functions (also known as XOS), where additive $\subset$
submodular $\subset$ XOS $\subset$ subadditive.

Given the prevalence of complement-free cost functions in real-life exploration scenarios, it is natural to study the structure of optimal strategies in these scenarios, and the corresponding computational problem. 
These are the main problems that drive us in this work. 
In particular, we ask whether Pandora's problem 
under different classes of complement-free cost functions preserves the structural and computational simplicity of the original problem with additive costs.

\subsection{Our Results}
\label{sec:results}

{As mentioned above, Example~\ref{ex:adaptive-order} shows an example of a {\em general} cost function, where an adaptive exploration order is inevitable. 
We first show that this phenomenon is not unique to cost functions that exhibit complementarities. 
Indeed, there exist instances with XOS cost functions for which an adaptive exploration order is inevitable\footnote{{Notably, for the larger class of subadditive cost functions, we find that an example demonstrating the necessity of adaptive order can be induced by a (seemingly unrelated) example that has been given in a completely different model of Pandora's box under constrained exploration order~\citep{BoodaghiansFLL20} (see Claim~\ref{cl:subadditive-example} in the Appendix).
}}.}

\vspace{0.1in}

\noindent {\bf Theorem 1 (see \Cref{thm:XOSadaptive}):} There exists an instance of the Pandora's problem with an XOS cost function that admits no optimal strategy with non-adaptive exploration order.

\vspace{0.1in}

{On the face of it, the above theorem seems to be unrelated to Example~\ref{ex:adaptive-order}, where the cost-function exhibits strong complementarity. 
However, we identify a close connection between the two results. 
In particular, we show that every
{instance with a}
(monotone and normalized) cost function over $n$ boxes induces an ``equivalent" 
{instance with an}
XOS cost function over $n+1$ boxes, that inherits the adaptive exploration order of its source
{instance}
(see \Cref{cl:transform-to-xos}).
With this result, the necessity of an adaptive order 
{under}
XOS cost functions can be derived from Example~\ref{ex:adaptive-order}.}

{A key property of XOS functions that enables this construction is that a marginal function of an XOS function
{$c$}
(namely, 
{for some fixed $T$, $c'(S) :=$}
$c(S\mid T) := c(S \cup T)-c(S)$) is unrestricted, and 
{in particular}
can exhibit complementarities.}

{In stark contrast, the class of submodular functions is closed under marginal value; namely, if the cost function $c$ is submodular, then so is the function $c(\cdot \mid T)$ for any fixed set $T$.
In particular, the scenario depicted in Example~\ref{ex:adaptive-order}, where the combined cost of opening boxes 2 and 3 is excessive, while opening each of them separately is cheap, cannot be replicated in an example utilizing a submodular cost function, even with the addition of more boxes.

A natural question is then whether
{instances of}
Pandora's problem with submodular cost functions preserve the structural simplicity of additive costs. 
That is,
{we ask}
whether these instances admit optimal strategies that
{open}
the boxes according to a fixed, non-adaptive order. 
Our first main result answers this question in the affirmative (see Sections \ref{sec:MTT} and \ref{sec:reduction}).}

\vspace{0.1in}

\noindent {\bf Theorem 2 (see \Cref{thm:main}):} Every instance of Pandora's problem with a submodular cost function admits an optimal strategy with non-adaptive exploration order.

\vspace{0.1in}

{Our second main result shows that, while the structural simplicity is preserved under submodular cost functions, the computational simplicity is not preserved. 
In particular, in \Cref{sec:computational} we prove the following stronger result.}

\vspace{0.1in}
\noindent {\bf Theorem 3 (see \Cref{thm:oracle}):} The problem of deciding whether a given instance of Pandora's problem with a submodular cost function admits a strategy that attains strictly positive utility requires super-polynomially many queries to the cost function.
\vspace{0.1in}

{Notably, this theorem implies that no approximation to the optimal utility can be obtained with polynomially-many cost queries.}

\subsection{Our Techniques}
\label{sec:techniques}

    The main technical tool to solve Pandora's problem is the notion of reservation value of a box \cite[e.g.,][]{weitzman79,KleinbergK18,BoodaghiansFLL20,EsfandiariHLM19,Singla18}. 
    This is the maximum value, presumably among those observed in previously opened boxes, for which opening the box achieves the same marginal utility as not opening it. 
    Formally, the reservation value of a box with random reward $V$ and (additive) cost $c$ is the solution $z$ of the following equation: $\E{(V-z)^+} = c$. Weitzman's optimal strategy opens the boxes in decreasing order of reservation value, halting when the current maximum observed reward exceeds the reservation value of the next
    {unopened}
    box. 
    Since they are also easy to compute, reservation values simultaneously establish structural and computational simplicity for the problem.
    In the combinatorial setting that we study, however, this approach may yield an arbitrarily bad performance. 
    \begin{example}
        Consider an instance with $2$ identical boxes, each with a random reward of $2$ with probability $\frac{1}{3}$ (and $0$ otherwise), and a symmetric unit-demand cost function with a cost of $1$ (i.e., $c(\{1\})=c(\{2\})=c(\{1,2\})=1$, and $c(\emptyset)=0$). The reservation value of the two boxes is negative, thus Weitzman's strategy would not open any one of them. However, the best strategy for this instance opens both boxes, achieving an expected utility of $2\cdot \frac{5}{9}-1 >0$.
    \end{example}

    The example illustrates why the reservation value is not suitable in the presence of combinatorial costs: the intrinsic importance of a box in the exploration is not solely determined by its random reward or its current marginal cost, but also by its influence on the marginal cost of all the (exponentially many) possible subsets of boxes that may be opened in the future. 
    
    In what follows we describe our techniques for our structural and computation results.
    {We first present our techniques for the main structural result for Bernoulli instances, and then show how to extend it from Bernoulli to general instances. 
    Finally, we present our techniques for our computational impossibility result.}

    \vspace{0.1in}
    \noindent \textbf{Bernoulli instances.}
    In \Cref{sec:MTT} we prove Theorem 2 for Bernoulli instances, \emph{i.e.}, instances 
    where each box $i$ has value $v_i$ with probability $p_i$ and value 0 otherwise.

    A key notion in our analysis is that of an \emph{impulsive strategy}.  
    Such a strategy is determined by an ordered subset of boxes, 
    and proceeds by opening them in the given order and halting upon the first time that a non-zero value of a box is observed (or if all boxes of the strategy have been opened).
    We show that every Bernoulli instance admits an optimal strategy that takes the form of an impulsive strategy. To establish this result, we follow the following steps.

We first show that we may assume the existence of an optimal strategy $\piStar$ that takes the following form: 
It starts by opening 
an arbitrary
box $\A$. If its non-zero value is realized, then it executes some impulsive sub-strategy $\pi^Y$, and if its realized value is 0, then it executes another impulsive sub-strategy $\pi^N$.
This is proved by induction, using the fact that the marignal cost of a submodular function is also submodular.

Under this assumption, we proceed as follows:
    {Assume towards contradiction that there is no optimal strategy which is impulsive.}
    If all boxes of $\pi^N$ appear also in $\pi^Y$, then it is straightforward to argue that the impulsive strategy that first executes $\pi^Y$, and then opens $\A$ if no non-zero value was observed, is an impulsive strategy that yields at least the same utility as $\piStar$, {and we are done.} 

    Therefore it remains to handle the case where there exists a box in $\pi^N$ that does not appear in $\pi^Y$.
    In this case, we show that there exists a subset of $\pi^N \setminus \pi^Y$
    that can be concatenated to $\pi^Y$ to improve the overall utility and thus obtain a contradiction.

    The main tool we use to this end is the notion of an \emph{impulsive strategy with dummies}.
    This is a randomized strategy which is determined by a (deterministic) impulsive strategy $\pi$ and a subset $A$ {of its boxes,}
    {denoted $\pi_A$,}
{and proceeds as follows: For a box $i \in A$, it proceeds as usual (open the box, observe its value, incur its marginal cost and halt if the observed value is non-zero).}
{For a box $i \notin A$, instead of opening $i$, it halts with probability $p_i$ and otherwise continues to the next box.
In particular $\pi_A$ only opens boxes from $A$.}

    Such a strategy is appealing, since it 
    {restricts the set of boxes that might be opened}
    while retaining {some of the} properties of the original strategy.
    For example, {the contribution of any box $i\in A$ to the expected reward is the same in $\pi_A$ as in $\pi$.}
    Furthermore,
    {every impulsive strategy with dummies is a probability distribution over deterministic impulsive strategies. Thus, any lower bound on its utility applies also to the utility of the best impulsive strategy in its support.}

    We use the notion of impulsive strategies with dummies to identify a strategy that can be concatenated to $\pi^Y$ which has a positive marginal utility, thus reaching a contradiction.
    In particular, we prove that given an impulsive strategy $\pi$ and any partition $A\cupdot B$ of its boxes,
    the utility attained by $\pi$ is at most the utility of $\pi_B$ 
    plus the \emph{marginal} utility of $\pi_A$ when executed after the boxes in $B$ have been opened.
    The submodularity of the cost function is crucial to obtain this technical property.
    The desired strategy that can be concatenated to $\pi^Y$ can now be identified, by applying this lemma with $\pi := \pi^N$, $A = \pi^N \setminus \pi^Y$, $B = \pi^Y \cap \pi^N$.
    In particular, we prove that there exists such a strategy in the support of $\pi^N_A$.

\vspace{0.1in}
\noindent \textbf{From Bernoulli to arbitrary instances.}
In \Cref{sec:reduction}, we show how to extend Theorem 2 to hold for arbitrary distributions. We do so using the following steps: 
     We devise a transformation that, given an arbitrary instance $\inst$ creates a Bernoulli instance $\inst'$, which maintains 
     {submodularity of the cost function}
     as well as other properties.
    First, the transformation discretizes the (possibly) continuous and unbounded distributions to have finite supports, and then it ``Bernoullifies'' each box by associating it with a set of Bernoulli boxes. 
 
    We {then} show a correspondence between strategies for the two instances in which an impulsive strategy for $\inst'$ is associated with a fixed order strategy for $\inst$. The correspondence preserves the utility up to an {arbitrarily} small precision.
We conclude that if there is an instance that admits a gap between the best fixed-order strategy and the best arbitrary strategy, then it implies that there is a Bernoulli instance that admits a gap between the best impulsive strategy and best arbitrary strategy, contradicting the main result of Section~\ref{sec:MTT}.
   The instance-transformation we use might be of independent interest and find applications in other stochastic settings (such as prophet setting).

    \vspace{0.1in}
    \noindent \textbf{Computational hardness.}
    In \Cref{sec:computational} we prove Theorem 3
    even for a very simple subclass of submodular functions (\emph{i.e}, matroid rank functions).
    To this end, we
    {follow the construction of \citet{SvitkinaF11}, and}
    design two instances of Pandora's box problem whose cost functions are ``indistinguishable''
    using polynomially many cost queries, 
    but only one of them admits a strategy that yields positive utility.
    Since no algorithm can distinguish between them efficiently,
    we conclude that the problem of deciding whether a given instance admits a strategy that attains positive utility is unsolvable with polynomially many 
    cost queries. 
    Moreover, this implies that no approximation can be obtained by an efficient algorithm.

\subsection{Related Work}
\label{sec:related}

    Pandora's Problem originated in economics but has suscitated a keen interest in the computer science community. Weitzman's optimal solution is based on the clever idea of reservation value, a quantity that captures the intrinsic value of a box in the exploration process. The reservation value has a deep connection with the notion of Gittins index \citep{Weber92}; actually, \citet{Dumitriu03} showed that it is possible to rephrase Pandora's problem as a Markov game whose Gittins index coincides with the reservation value. Recently, a simpler proof of the optimality of Weitzman's rule was also given by \citet{KleinbergWW16}. Following these papers, many interesting modifications of Pandora's Problem have been considered.
    
    \citet{Singla18} used an adaptivity gap approach to approximately solve Pandora's problem under various combinatorial models, while \citet{Olszewski15} studied to which extent a threshold strategy like Weitzman's is optimal when the definition of the reward of the exploration goes beyond the $\max$ function. 

    A successful line of work has also focused on Pandora's Problem with non-obligatory inspection. Here, at the end of the exploration, the decision maker can decide to select an  unopened box without having to open it (and thus without paying its cost). \citet{Doval18} introduced this model, highlighting the surprising property that there are instances where the optimal strategy is adaptive in the order of boxes it chooses. A sequence of papers then closed this problem from the computational perspective \citep{BeyhaghiK19,BeyhaghiC22,FuJD22}: Pandora's problem with non-obligatory inspection is NP-hard to solve but a PTAS exists for it. Interestingly enough, this minor tweak in the exploration rule (i.e., giving the possibility of getting a single box ``for free'' without inspection) hindered both the computational and structural simplicity of the original setting.

    Constraints on the order in which the boxes can be opened have also been studied. \citet{EsfandiariHLM19} considered the case where the boxes have to be opened consistently with a total ordering of the boxes (possibly skipping some). In contrast, \citet{BoodaghiansFLL20} investigated partial orderings on the boxes modeled by precedence graphs. In that work, the authors investigated to which extent the simplicity of the original Pandora's problem extends under order constraints: when the partial ordering on the boxes is represented by a tree, then there exists an optimal strategy that is fixed order and can be computed efficiently; however, under general partial ordering the problem becomes NP-hard to solve, and there are instances where adaptivity is needed to achieve optimality. We further elaborate on the relations with our work in \Cref{app:pandora}.

    \citet{FuLX18,Segev021} studied Pandora's Problem with commitment, when, similarly to what happens in online selection problems like secretary or prophet inequalities, only the reward in the last opened box can be collected. \citet{ChawlaGTTZ20,ChawlaGMT21} investigated what happens when the assumption on the independence of the random rewards in the boxes is dropped, \citet{AlaeiMM21} introduced the revenue maximization version of the problem, while \citet{BechtelDP22} considered a delegated version of Pandora's problem.
    Finally, Pandora's problem has also been studied from the learning perspective, both in the sample complexity framework \citep{Guo0T021}, and in online learning \citep{GergatsouliT22,Gatmiry22}.

\section{Preliminaries}
\label{sec:preliminaries}
In Pandora's problem there are $n$ boxes, containing hidden values $V_i$ which are distributed according to the independent non-negative distributions $D_i$.
We
denote by $\supp$ the union of the supports of these distributions.
The cost of inspecting a set of boxes is given by a combinatorial cost function $c:2^{\left[n\right]} \rightarrow \mathbb{R}_{\geq 0}$, where $[n]$ denotes the set $\{1,\ldots,n\}$.  We assume that $c$ is always normalized and monotone, \emph{i.e}, $c(\emptyset) = 0$ and $S\subseteq T$ implies $c(S) \leq c(T)$. We also use $(x)^+$  to denote $\max(x,0)$ for any number $x \in \mathbb{R}$.

We denote an \emph{instance} of the problem by $\inst = (D_1,\ldots,D_n,c)$.
Given an instance $\inst$, a \emph{strategy} $\pi$ for $\inst$ inspects the boxes in a sequential manner where each inspection of box $i$ reveals its hidden (random) value $V_i$.
At each round the strategy may choose any uninspected box to inspect next, or it may halt and attain as utility the difference between the largest observed value and the cost of the set of opened boxes. The decisions are based on the given instance and the sequence of opened boxes and realized values so far.
Given a strategy $\pi$ for $\inst$, we use the following notation:
\begin{itemize}
    \item $S(\pi)$ - the (random) ordered set of boxes opened by $\pi$.
    \item $V(\pi) := \max_{i \in S(\pi)} V_i$ - the maximum value observed by $\pi$.  We also refer to this as the reward obtained by $\pi$.
    \item $\util(\pi) := \E{V(\pi)} - \E{c(S(\pi))}$ - the expected utility (i.e., value minus cost) achieved by $\pi$.
\end{itemize}
Note that the quantities defined above depend on the given instance. When not clear from the context, we shall use $\utili{\inst}{\pi}$ to denote the expected  utility of strategy $\pi$ for instance $\inst$. 

A \emph{randomized} strategy can toss coins before every decision point.  Note that these coins can be tossed a priori before the first box is inspected.  Therefore
every randomized strategy is a distribution over deterministic strategies.  In particular, for every randomized strategy there is a deterministic strategy that achieves at least the same utility (the one with the highest utility in the support of the distribution).

Given an instance $\inst$, we denote by $\Pi$ the set of all strategies for $\inst$.
An \emph{optimal} strategy $\pi^*$ is a strategy that maximizes the utility, \emph{i.e.}, $\pi^* = \arg\max_{\pi \in \Pi} \util(\pi)$.
By the paragraph above, we can assume without loss of generality that $\pi^*$ is deterministic. Note also that if there is some $i$ for which $\E{V_i} = \infty$, then the strategy that opens $i$ and halts achieves infinite utility.
We thus assume that all distributions $D_i$ have finite expectations.

A \emph{fixed order strategy} $\pi$ is a strategy in which the order of inspection is non-adaptive.  Formally, such a strategy is characterized by a
permutation $\sigma: \left[n\right] \rightarrow \left[n\right]$ such that at every round $i$, the strategy either opens the box $\sigma(i)$, or halts.
A strategy $\pi$ is called a \emph{fixed order strategy with thresholds} $t_1,\ldots,t_{n} \in \mathbb{R}$ if it is fixed order, and at every round $i$, $\pi$ halts if and only if the maximum value inspected so far is at least $t_i$.
The proof of the following observation is deferred to \Cref{app:preliminaries}.

\begin{restatable}{observation}{OptimalFixThresholds}\label{obs:fix-thresholds}
    For every permutation $\sigma$, the optimal strategy with fixed order $\sigma$ is a fixed order strategy with thresholds.
\end{restatable}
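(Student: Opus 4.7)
The plan is to proceed by backward induction on the round index $i$, showing that the optimal expected continuation value depends on the past only through the running maximum of the observed values, and that the net benefit of continuing over halting is non-increasing in this maximum. Fix a permutation $\sigma$ and, for notational convenience, write $B_i := \{\sigma(1),\ldots,\sigma(i-1)\}$ and $\Delta_i := c(B_{i+1}) - c(B_i)$. Because the order of inspection is fixed, at round $i$ the cost $c(B_i)$ is already sunk; moreover, for any later stopping round $j$ the additional cost $c(B_j) - c(B_i)$ is a deterministic function of $(i,j)$, while the final reward equals $\max(m, V_{\sigma(i)}, \ldots, V_{\sigma(j-1)})$, where $m$ is the current running maximum. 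Hence the state relevant to future utility at round $i$ is just the pair $(i, m)$, and we may define $g_i(m)$ as the optimal expected reward minus future cost from round $i$ onward, given running maximum $m$.

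A standard dynamic programming argument yields the Bellman recursion
\begin{equation*}
    g_{n+1}(m) = m, \qquad g_i(m) = \max\bigl(m,\; \E{g_{i+1}(\max(m, V_{\sigma(i)}))} - \Delta_i\bigr),
\end{equation*}
with the optimal policy halting at round $i$ precisely when the first argument of the maximum attains it. Setting $h_i(m) := g_i(m) - m$ and using the identity $\max(m, V_{\sigma(i)}) - m = (V_{\sigma(i)} - m)^+$, the recursion rearranges into
\begin{equation*}
    h_i(m) = \max\bigl(0,\; \E{h_{i+1}(\max(m, V_{\sigma(i)}))} + \E{(V_{\sigma(i)} - m)^+} - \Delta_i\bigr).
\end{equation*}

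The key step is to establish by backward induction that each $h_i$ is non-increasing in $m$. The base case $h_{n+1} \equiv 0$ is trivial. For the inductive step, each term inside the outer maximum is non-increasing in $m$: the first because $\max(m, V_{\sigma(i)})$ is non-decreasing in $m$ and $h_{i+1}$ is non-increasing by the inductive hypothesis; the second by a direct computation; and $\Delta_i$ is a constant. Since $\max(0,\cdot)$ preserves monotonicity, $h_i$ is non-increasing as well. By monotonicity the set $\{m : h_i(m) > 0\}$ is an initial segment of $\mathbb{R}$ and $\{m : h_i(m) = 0\}$ is its complementary terminal segment, so there exists $t_i \in [-\infty, +\infty]$ such that continuing at round $i$ is optimal for $m < t_i$ and halting is optimal for $m > t_i$. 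Declaring that the strategy halts at the boundary $m = t_i$ (where, by definition, both actions are equally good) yields an optimal fixed-order strategy that halts at round $i$ if and only if $m \geq t_i$, as required. The main content of the argument is the monotonicity step; the boundary tie-breaking is a minor technicality that does not affect optimality.
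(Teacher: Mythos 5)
Your argument takes the same route as the paper: set up the backward Bellman recursion for the optimal extra utility $h_i(m)$ (the paper's $f_i(x)$) given the running maximum, show by backward induction that $h_i$ is non-increasing, and read off a threshold rule. The recursion you write is exactly the one in the appendix, and your monotonicity induction is correct.

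However, there is a genuine (if small) gap at the very end that you dismiss as ``a minor technicality that does not affect optimality.'' Monotonicity alone tells you that $\{m : h_i(m) > 0\}$ is an initial segment and $\{m : h_i(m) = 0\}$ is a terminal segment, so there is a cutoff $t_i$; but it does \emph{not} tell you that $h_i(t_i) = 0$. If in fact $h_i(t_i) > 0$ (i.e.\ the zero set is an open half-line $(t_i,\infty)$), then at $m = t_i$ continuing is \emph{strictly} better than halting, and the rule ``halt iff $m \geq t_i$'' --- which is the form imposed by the paper's definition of a fixed order strategy with thresholds --- would be suboptimal whenever the running maximum hits $t_i$ with positive probability. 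So your parenthetical ``by definition, both actions are equally good'' at $m = t_i$ is not justified. This is precisely why the paper's proof establishes, via a short coupling argument, that $f_i$ is $1$-Lipschitz and hence continuous: continuity together with monotonicity forces $\{m : f_i(m) = 0\}$ to be a \emph{closed} terminal segment $[t_i,\infty)$, so $f_i(t_i)=0$ and halting at the boundary is indeed optimal. The fix is easy (show by backward induction that each $h_i$ is $1$-Lipschitz, since $m \mapsto (V-m)^+$ and $m \mapsto \max(m,V)$ are $1$-Lipschitz and these properties are preserved by expectation and by $\max(0,\cdot)$), but it is a necessary step, not a dispensable one.
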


A \emph{Bernoulli instance} is an instance where all distributions $D_i$ are weighted Bernoulli distributions,
\emph{e.g.}, with probability 0.7 $V_i = 18$ and otherwise $V_i = 0$.
An \emph{impulsive} strategy for a Bernoulli instance is a fixed order strategy that immediately halts if the value of the currently inspected box is non-zero (the strategy can also halt if the currently observed value is zero).  
An example of such a strategy is:  inspect box 1 and halt if its value is non-zero.  Otherwise, inspect box 2 and halt if its value is non-zero.  Otherwise, inspect box 7 and halt (regardless of the findings). An example of a non-impulsive strategy is:  inspect box 1.  If its value is non-zero, inspect box 2 and halt.  Otherwise, inspect box 3 and halt.
Note that an impulsive strategy is a fixed order strategy, where each threshold equals the weight of its corresponding Bernoulli box
(except for the threshold corresponding to the last box, which equals 0).
We also remark that the empty strategy which halts immediately without inspecting any boxes is considered an impulsive strategy.

\paragraph{Combinatorial functions.} In this paper we study combinatorial cost functions. In particular, given a base set $X$
of elements, we say that a function $c:2^X \to \mathbb{R}_{\geq0}$ is
\begin{itemize}
    \item submodular if $c(x\mid B) \le c(x \mid A)$ for all $A \subseteq B\subseteq X$, $x \in X\setminus B$, where $c(x \mid S):= c(S \cup \{x\}) - c(S)$ denotes the marginal contribution of element $x$ to set $S.$ 
    \item fractionally subadditive (XOS) if there exists a family of linear function $\{c_i\}$ such that $c(A) = \max_i c_i(A)$, for all $A \subseteq X$.
    \item subadditive if $c(A \cup B) \le c(A) + c(B)$ for all $A,B \subseteq X$.    
\end{itemize}
It is known that submodular $\subset$ XOS $\subset$ subadditive, with strict inclusions \citep{LehmannLN06}.

\paragraph{Computational setting.}
The computational problem we consider is the following (Pandora's) decision problem:
given an instance $\inst$, decide whether there exists a strategy $\pi$ for $\inst$ that achieves positive utility, \emph{i.e}, $\utili{\inst}{\pi} > 0$.
An algorithm for this problem gets access to the given cost function via cost queries (analogous to value queries for a combinatorial valuation function); namely, given a set $S$ of elements, a cost query returns $c(S)$.

\section{Impulsive Optimal Strategies for Bernoulli Instances}
\label{sec:MTT}

In this section, we prove our main structural result for the special case of Bernoulli instances. 
\begin{theorem}
\label{thm:MTT_Submodular}
For every Bernoulli instance with a submodular cost function there exists an optimal strategy that is impulsive. 
\end{theorem}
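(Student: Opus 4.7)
The plan is to prove the theorem by induction on the number of boxes $n$. The base case $n \le 1$ is immediate, since any strategy on at most one box is impulsive. For the inductive step, I first reduce an arbitrary optimal strategy to a convenient canonical form: if it opens no boxes it is already impulsive, and otherwise it opens some first box $\A$, after which it continues with a substrategy that depends only on whether $V_\A = v_\A$ or $V_\A = 0$. Each continuation is an optimal strategy for the residual $(n-1)$-box instance whose cost function is $c(\cdot \mid \{\A\})$, which is submodular because submodularity is closed under taking marginals. The inductive hypothesis then furnishes impulsive continuations $\pi^Y$ and $\pi^N$, yielding an optimal strategy $\pi^*$ that opens $\A$ and runs $\pi^Y$ upon a positive realization and $\pi^N$ upon a zero realization.

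I will then assume toward contradiction that no optimal strategy is impulsive, and split on whether $\pi^N \setminus \pi^Y$ is empty. In the easy case $\pi^N \subseteq \pi^Y$, the natural impulsive candidate is the strategy that first executes $\pi^Y$ and, if no non-zero value has been observed, opens $\A$ last. Comparing expected rewards and expected costs round by round, and using that every box of $\pi^N$ already appears inside $\pi^Y$ (so nothing is ``missed'' when $V_\A = 0$), this candidate matches $\util(\pi^*)$, contradicting the assumption.

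The technical heart of the argument is the remaining case $\pi^N \setminus \pi^Y \neq \emptyset$. Here I will introduce \emph{impulsive strategies with dummies}: for an impulsive $\pi$ and a subset $A$ of its boxes, $\pi_A$ processes the boxes in the order of $\pi$, but for each box $i \notin A$ replaces the actual inspection by a coin flip that halts with probability $p_i$ and continues otherwise. By construction $\pi_A$ is a distribution over deterministic impulsive strategies whose supports lie in $A$, so any lower bound on $\util(\pi_A)$ is realized by some impulsive strategy in its support. The key decomposition lemma I will prove states that for every partition $A \cupdot B$ of the boxes of $\pi$,
\begin{equation*}
    \util(\pi) \;\le\; \util(\pi_B) \;+\; \Delta(\pi_A \mid B),
\end{equation*}
where $\Delta(\pi_A \mid B)$ denotes the expected marginal utility of running $\pi_A$ after $B$ has been opened. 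This is precisely where submodularity is essential: the marginal cost contribution of any box of $A$ when appended after $B$ is no larger than its marginal cost contribution when opened alongside arbitrary subsets produced by the execution of $\pi$.

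To close the argument, I apply the lemma to $\pi := \pi^N$ with $B := \pi^N \cap \pi^Y$ and $A := \pi^N \setminus \pi^Y$. Combining this with the form of $\pi^*$ and the optimality assumption will force $\Delta(\pi^N_A \mid \pi^Y) > 0$, so some deterministic impulsive strategy $\tilde \pi$ in the support of $\pi^N_A$ has positive marginal utility after $\pi^Y$. Concatenating $\pi^Y$, then $\tilde \pi$, then $\A$ at the end yields a deterministic impulsive strategy whose utility strictly exceeds $\util(\pi^*)$, the desired contradiction. The main obstacle I anticipate is the decomposition lemma itself: cleanly separating the reward contributions of $A$ and $B$ along the random executions of $\pi$, and using submodularity to compare the cost of opening $A$ at its natural position inside $\pi$ with the cost of opening it after all of $B$, is the step that requires the most care.
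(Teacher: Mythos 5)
Your high-level plan matches the paper closely: you set up the same induction, reduce via closure of submodularity under marginals to a canonical strategy with impulsive continuations $\pi^Y,\pi^N$, introduce impulsive strategies with dummies, prove the decomposition lemma
\[
\util_N(\pi)\;\le\;\util_N(\pi_A\mid B)+\util_N(\pi_B),
\]
and apply it to $A=\pi^N\setminus\pi^Y$, $B=\pi^N\cap\pi^Y$. These are exactly the paper's ingredients.

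However, the concluding step has a genuine gap, and it is not a cosmetic one. First, the decomposition lemma controls the \emph{$\util_N$} marginal of $\pi^N_A$ after $\pi^Y$ (the utility when $V_\A=0$), but the quantity you need in order to append boxes to $\pi^Y$ in the $V_\A=v_\A$ branch is the $\util_Y$ or $\util_M$ marginal, which can be strictly smaller by $p_{(\pi^N_A)}\,v_\A$. You cannot pass from $\util_N(\pi^N_A\mid\pi^Y)\ge 0$ to a deterministic $\tilde\pi$ that is usefully appendable without a further argument; the paper handles this by showing $\util_M(\pi^A\mid\pi^Y)<0$ (its Claim on $\util_M(\pi^A\mid\pi^Y)$), then performing an algebraic rearrangement (its ``low-branch'' manipulation) that extracts a specific \emph{prefix} $(A^1,\dots,A^j)$ of $\pi^A$, not an arbitrary deterministic strategy in the support of $\pi^N_A$, with non-negative $\util_M$. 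Second, even after isolating such a prefix, the boxes with $v_i<v_\A$ must be filtered out via a second application of the dummy construction (the $\pi^{\mathsf{rand}}$ step) before the concatenation makes sense in the $V_\A=v_\A$ branch. Third, and most importantly, your proposed contradiction is against optimality itself via the impulsive strategy $(\pi^Y,\tilde\pi,\A)$; but forming this strategy reorders $\A$ to the end, which changes both the value comparison (you no longer know $V_\A$ when opening $\tilde\pi$) and the cost conditioning (opening $\pi^Y$ before $\A$ incurs different marginals). The paper deliberately avoids this and instead argues against a different invariant: it fixes a canonical optimal $\pi^*$ that \emph{maximizes $|\pi^Y|+|\pi^N|$}, appends the surviving prefix to $\pi^Y$ to get another canonical optimal strategy with strictly larger $|\pi^Y|+|\pi^N|$, and contradicts the maximality. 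You do not set up this maximality invariant, and without it the ``strictly exceeds $\util(\pi^*)$'' claim is not established. To complete your argument you would need to (i) introduce the maximality invariant, (ii) prove the analogue of $\util_M(\pi^A\mid\pi^Y)<0$, (iii) carry out the prefix-extraction inequality, and (iv) reapply the dummy filter for boxes with $v_i<v_\A$.
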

The crux of the proof of Theorem \ref{thm:MTT_Submodular} is 
captured by the following Lemma, which is the main technical result of the paper.
\begin{lemma}
\label{lem:one_fork_wlog}
Let $\inst$ be a Bernoulli instance with a submodular cost function.
If there exists an optimal strategy for $\inst$ of the following form: 
\begin{itemize}
    \item Inspect some first box, denoted $\A$, that follows the distribution $V_\A = v_r > 0$ with probability $p_\A > 0$.
    \item If  $V_\A = v_\A$,
            execute an impulsive sub-strategy $\pi^Y$.
    \item If  $V_\A = 0$,
            execute an impulsive sub-strategy $\pi^N$. 
\end{itemize}
Then, there exists an optimal strategy for $\inst$ which is impulsive.
\end{lemma}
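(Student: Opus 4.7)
My plan is to prove the lemma by a case analysis on whether the (unordered) set of boxes of $\pi^N$ is contained in that of $\pi^Y$. The two cases differ sharply in difficulty: Case 1 admits a direct swap-and-compare argument, whereas Case 2 needs the full machinery of impulsive strategies with dummies.

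Case 1 ($\pi^N\subseteq\pi^Y$ as box sets). I would consider the impulsive strategy $\pi'$ that first runs $\pi^Y$ impulsively and only inspects $\A$ at the end if every box opened so far returned zero. Since the cost depends solely on the unordered set of opened boxes, $\piStar$ and $\pi'$ can be compared realization by realization; aggregating these comparisons and using monotonicity together with submodularity of $c$ to control the extra marginal cost paid by $\piStar$ for $\A$, I expect to obtain $\util(\pi')\ge\util(\piStar)$, which already yields an impulsive optimum.

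Case 2 ($\pi^N\setminus\pi^Y\neq\emptyset$). Let $A:=\pi^N\setminus\pi^Y$ and $B:=\pi^N\cap\pi^Y$, and apply the partition lemma for impulsive strategies with dummies to $\pi^N$ with this partition, giving $\util(\pi^N)\le \util(\pi^N_B)+\mu$, where $\mu$ is the marginal utility of $\pi^N_A$ conditioned on the boxes of $B$ having already been opened. If $\mu\le 0$ then $\util(\pi^N)\le\util(\pi^N_B)$, and replacing the ``no'' sub-strategy of $\piStar$ by $\pi^N_B$ -- which is still impulsive and whose box set lies in $\pi^Y$ -- yields a strategy of the given form that is at least as good as $\piStar$ and satisfies the Case 1 hypothesis. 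If $\mu>0$ then, since $\pi^N_A$ is a distribution over deterministic impulsive strategies supported on $A$, there exists a deterministic impulsive $\tilde\pi$ on the boxes of $A$ whose marginal utility after $B$ is at least $\mu$. I would then let $\hat\pi$ be the impulsive strategy that (i) runs $\pi^Y$ impulsively; (ii) on the event that $\pi^Y$ reveals only zeros -- on which all of $\pi^Y$, and therefore all of $B$, has been inspected -- continues with $\tilde\pi$; and (iii) if still no non-zero value has been seen, opens $\A$. A branch-wise comparison of reward and cost should then give $\util(\hat\pi)\ge\util(\piStar)$.

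The main obstacle I anticipate is the cost analysis in Case 2: the comparison between $\util(\hat\pi)$ and $\util(\piStar)$ must track the marginal cost of each newly inspected box on top of the (random) set of previously opened boxes, across every branch of both strategies. Submodularity is essential here, as it allows a marginal cost on top of the full $\pi^Y$ inside $\hat\pi$ to be bounded by the corresponding marginal cost on top of $B\subseteq\pi^Y$ used in the partition lemma; it is also the property that keeps the cost function submodular under conditioning on a branch, which underlies the very existence of $\piStar$ in the hypothesised form. The dummy construction of $\pi^N_A$ is the other crucial ingredient, since it restricts the inspected boxes to $A$ while preserving the reward contributions of boxes in $A$, so that the gain $\mu$ identified in the partition lemma can in fact be extracted by appending $\tilde\pi$ to $\pi^Y$ inside $\hat\pi$.
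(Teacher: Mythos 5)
Your high-level plan shares the paper's backbone (a split based on whether $\pi^N\setminus\pi^Y$ is empty, the dummy partition lemma, and the extraction of a deterministic witness from a dummy distribution), but both cases as you describe them have real gaps. In Case~1, the ``realization-by-realization'' comparison between $\piStar$ and $\pi'=(\pi^Y,\A)$ does not go through: the two strategies open genuinely different box sets, so their costs are incomparable pointwise. For instance, with $\pi^Y=(1,2)$ and $\pi^N=(2)$, on the realization $V_\A=0,\ V_1=0,\ V_2>0$, $\piStar$ pays $c(\{\A,2\})$ while $\pi'$ pays $c(\{1,2\})$, and neither dominates in general. The paper circumvents this by first showing (using that $\pi^Y,\pi^N$ are each optimal for their branch, plus $p_{(\pi^Y)}\ge p_{(\pi^N)}$) that $\util_N(\pi^Y)=\util_N(\pi^N)$, so that $\pi^N$ can be replaced by $\pi^Y$; only then is the pointwise swap $\A\!\leftrightarrow\!\pi^Y$ performed on a pair of strategies that open the \emph{same} boxes in every realization.

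Case~2 with $\mu>0$ is where the proposal breaks most seriously, and the failure is not cosmetic. You pass from ``$\tilde\pi$ has positive $\util_N$-type marginal utility after $B$'' to ``$\util(\hat\pi)\ge\util(\piStar)$ by branch-wise comparison.'' This is false in general. Take a realization in which $V_\A=v_\A$, every box of $\pi^Y$ is zero, and some $i\in\tilde\pi$ realizes with $v_i<v_\A$: then $\hat\pi=(\pi^Y,\tilde\pi,\A)$ halts at $i$ with reward $v_i<v_\A$ and never reaches $\A$, whereas $\piStar$ collects $v_\A$; the cost of $\hat\pi$ is also weakly larger. The quantity that actually certifies a profitable concatenation to $\pi^Y$ is $\util_Y(\tilde\pi\mid\pi^Y)\ge 0$ (equivalently $\util_M(\tilde\pi\mid\pi^Y)\ge 0$ together with $v_i\ge v_\A$ on $\tilde\pi$), and positivity of $\util_N(\tilde\pi\mid B)$ does not imply this: $\util_Y\le\util_N$ always, and the $v_\A$-shift and the conditioning set both go the wrong way. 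Producing a $\tilde\pi$ with nonnegative $\util_M(\cdot\mid\pi^Y)$ is exactly what the paper's Claim~\ref{clm:low_branch_inequality_manipulation} accomplishes via the contiguous-block decomposition $\pi^N=(B^{\pre},A^1,B^1,\ldots)$, and the paper then runs a second dummy argument over $(A^1,\ldots,A^j)$ to discard the boxes with $v_i<v_\A$ before concatenating. Finally, your argument has no analogue of the paper's extremal choice of $\piStar$ (maximizing $|\pi^Y|+|\pi^N|$), which is what turns ``can profitably concatenate'' into a contradiction; without it, even a correct $\mu>0$ argument would not close the proof.
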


Before proving Lemma~\ref{lem:one_fork_wlog}, we show how it implies Theorem~\ref{thm:MTT_Submodular}.

\begin{proof}[Proof of Theorem \ref{thm:MTT_Submodular}]
We prove this by induction on the number of boxes $n$.
For $n=1$, the claim is trivially true, since every strategy is an impulsive strategy.
Assume by induction that for any Bernoulli instance $\inst'$ on $n-1$  boxes with a submodular cost function there exists a deterministic optimal strategy that is impulsive.
Let $\inst = \left(D_1,\ldots,D_n,c\right)$ be a Bernoulli instance with $n$ boxes whose cost function is submodular, and let $\piStar$ be a deterministic optimal strategy for $\inst$. 
Since the strategy $\piStar$ is deterministic, it either does not open any box (and thus $\piStar$ is an impulsive strategy), or there exists a box $i\in [n]$ that it inspects first. 
Note that if $V_i = 0$ with probability 1,
then $\pi'$can be weakly improved by skipping $i$ and proceeding to the next box:
the value obtained by this new strategy is the same for any realization of the boxes, but the incurred cost is weakly improved (by monotonicity of the cost function).
Thus we can assume without loss of generality that $V_i = v_i > 0$ with some probability $p_i > 0$, and $V_i = 0$ otherwise.

For each of the two possible realizations of box $i$, the instance remaining after opening box $i$ is either $\inst^N=(D_1,\ldots ,D_{i-1},{D_{i+1},\ldots},D_n,c')$ if $V_i=0$, or $\inst^Y=(D'_1,\ldots ,D'_{i-1},{D'_{i+1},\ldots},D'_n,c')$ if $V_i=v_i$, where $c':[n]\setminus \{i\}\rightarrow \mathbb{R}_{\geq 0}$ is the cost function $c'(S) = c(S \cup\{i\})-c(\{i\}) = c(S \mid \{i\})$, and $D'_j$ is the weighted Bernoulli distribution of $(v_j-v_i)^+$ with probability $p_j$ where $D_j$ is the Bernoulli distribution of having a value of $v_j$ with probability $p_j$.
Note that since $c'$ is the marginal function of $c$ given $\{i\}$ and since $c$ is submodular, then $c'$ is a submodular function.
By the induction hypothesis (since $\inst^Y,\inst^N$ have submodular cost functions and $n-1$ boxes), 
there exist two optimal strategies $\pi^Y$, $\pi^N$ for $\inst^Y, \inst^N$, respectively, that are impulsive.
Thus there exists an optimal strategy that opens box $i$, if its value is non-zero executes the sub-strategy $\pi^Y$, and otherwise it execute the sub-strategy $\pi^N$. By applying  Lemma~\ref{lem:one_fork_wlog}, we establish that there exists an optimal strategy for $\inst$ that is impulsive.
\end{proof}

The remainder of this section is dedicated to the proof of Lemma \ref{lem:one_fork_wlog}. In Section~\ref{sec:setup} we make the required preparation, and in Section~\ref{sec:proof-body} we provide the full proof of the lemma.

\subsection{Setup for Lemma \ref{lem:one_fork_wlog}}
\label{sec:setup}
In this section we introduce the notation and constructs that we shall need for the proof of Lemma \ref{lem:one_fork_wlog}.
Let $\inst = (D_\A, D_1,\ldots,D_n,c)$ be a Bernoulli instance where $c$ is a submodular cost function.
For every $i \in \{\A\}\cup[n]$, the random value $V_i$ in box $i$ is set to $v_i$ with probability $p_i$, and to $0$ otherwise (with probability $q_i = 1 - p_i$).
We can assume without loss of generality that $v_i > 0$ and $p_i > 0$ for every box $i \in \{\A\} \cup [n]$,
since otherwise $V_i = 0$ with probability 1, 
in which case 
any strategy that does open $i$ can be weakly improved by skipping $i$ and proceeding as if its value 0 was observed.
If a box $i$ satisfies $p_i = 1$ then we say it is a \emph{deterministic} box.

An impulsive (sub-)strategy $\pi$ is given by a tuple of box indices (with no repetitions), \emph{e.g}, $(1,2,7)$ stands for the impulsive strategy that first inspects box 1 and halts if $V_1 = v_1$, otherwise it proceeds to inspect box 2 and halts if $V_2 = v_2$, 
and otherwise it proceeds to inspect box 7 and halts.
An impulsive strategy can also be given by a tuple of impulsive sub-strategies $(\pi_1,\ldots,\pi_k)$, \emph{e.g.}, $((1,2),(7))$ stands for the strategy $(1,2,7)$.
The empty strategy that does not inspect any box is also considered an impulsive strategy and is denoted by the tuple $(\emptyset)$.
We shall occasionally abuse notation and identify an impulsive strategy $\pi$ with the set of boxes that form $\pi$,
\emph{e.g.}, $i \in (1,2,7)$ stands for $i \in \{1,2,7\}$, and $\pi \subseteq \{1,2,3,4\}$ means that all boxes outside of $\{1,2,3,4\}$
are never inspected by $\pi$.

Let $\piStar$ be a deterministic optimal strategy for $\inst$ in the form given by the statement of Lemma \ref{lem:one_fork_wlog}.
Thus, $\piStar$ first inspects box $\A$; if it observes that $V_\A = v_\A$ then it executes the impulsive sub-strategy $\pi^Y$ and otherwise it executes the impulsive sub-strategy $\pi^N$.  Note that $\pi^Y$ and $\pi^N$ both inspect boxes with indices from $[n]$.
We can assume without loss of generality that for every $i \in \pi^Y$, we have $v_i \geq v_\A$: 
Otherwise $\piStar$ can be weakly improved by 
removing $i$ from $\pi^Y$ --- note that the reward obtained in the end of the process is unaffected by the realized value of $V_i$ in this case, and therefore continuing to the suffix of $\pi^Y$ after $i$ is also optimal.
%
We also assume that each of $\pi^Y$ and $\pi^N$ contains at most one deterministic box, in which case it is the last one in the tuple.
This too is without loss of generality since impulsive strategies always halt after inspecting a deterministic box.
Note that if $\pi^Y$ is the empty strategy (\emph{i.e} it halts immediately without opening any boxes), then $\piStar$ is an impulsive strategy by itself, and we are done.
Thus we assume that $\pi^Y$ is not empty, \emph{i.e.}, $\left|\pi^Y\right| \geq 1$.
Finally, out of all optimal strategies that satisfy the assumptions above, we also assume that $\piStar$ maximizes $\left|\pi^Y\right| + \left|\pi^N\right|$.

Assume towards contradiction that there is no impulsive strategy for $\inst$ that achieves the same utility as $\piStar$.  
We show that in this case we can replace either $\pi^Y$ or $\pi^N$ by impulsive sub-strategies of bigger size, without losing utility.
This would constitute a contradiction to the definition of $\piStar$. Given an impulsive strategy $\pi$, we denote by $p_{\left(\pi\right)}$ the probability that one of the boxes inspected by $\pi$ has a non-zero value, 
\emph{i.e.}, the probability that there is some $i\in S(\pi)$ such that $V_i = v_i$.
We denote by $q_{\left(\pi\right)} := 1 - p_{\left(\pi\right)}$ the probability that 
$V_i = 0$ for every $i\in \pi$.
For the empty strategy we define $p_{(\emptyset)}=0$ (or equivalently $q_{(\emptyset)}=1$).
Note that by our assumption that $p_i > 0$ for every $i$, we have $p_{\left(\pi\right)} > 0$ for every non-empty impulsive strategy, and $p_{\left(\pi\right)} = 1$ if and only if $\pi$ contains a deterministic box.
\begin{observation}
\label{obs:p_q}
    Let $\pi = \left(i_1,\ldots,i_k\right) \subseteq [n]$ be an impulsive strategy. Then: 
    \begin{itemize}
        \item $p_{\left(\pi\right)} = \sum_{j=1}^k q_{\left(i_1,\ldots,i_{j-1}\right)}\cdot p_{i_j}$.
        \item $q_{\left(\pi\right)} = \prod_{j=1}^k q_{i_j} = \prod_{j=1}^k (1- p_{i_j})$.
    \end{itemize}
\end{observation}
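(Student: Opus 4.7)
The plan is to derive both identities from the independence of the Bernoulli values $V_{i_1}, \ldots, V_{i_k}$ across the boxes $i_1, \ldots, i_k$; the observation is essentially a book-keeping identity. I would establish the product formula for $q_{(\pi)}$ first, and then deduce the sum formula for $p_{(\pi)}$ by partitioning the complementary event according to the first index at which a non-zero value is observed.

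For the product formula, I would unwind the definition: $q_{(\pi)} = \mathbb{P}(V_{i_j} = 0 \text{ for every } j \in \{1, \ldots, k\})$. Since the random values across boxes are independent and $\mathbb{P}(V_{i_j} = 0) = q_{i_j} = 1 - p_{i_j}$ by the Bernoulli assumption, independence immediately gives
\[
q_{(\pi)} \;=\; \prod_{j=1}^{k} q_{i_j} \;=\; \prod_{j=1}^{k} (1 - p_{i_j}).
\]

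For the sum formula, I would partition the event $\{V_{i_j} \neq 0 \text{ for some } j\}$, whose probability is $p_{(\pi)}$ by definition, according to the smallest index $j$ at which a non-zero value is observed. For $j \in \{1, \ldots, k\}$, let $E_j$ denote the event that $V_{i_1} = \cdots = V_{i_{j-1}} = 0$ and $V_{i_j} = v_{i_j}$. These events are pairwise disjoint (they specify the first non-zero index) and their union is precisely the event of interest. By independence,
\[
\mathbb{P}(E_j) \;=\; \left(\prod_{\ell < j} q_{i_\ell}\right) \cdot p_{i_j} \;=\; q_{(i_1,\ldots,i_{j-1})} \cdot p_{i_j},
\]
where the second equality invokes the product formula just established, applied to the prefix strategy $(i_1, \ldots, i_{j-1})$. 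Summing over $j$ yields $p_{(\pi)} = \sum_{j=1}^{k} q_{(i_1,\ldots,i_{j-1})} \cdot p_{i_j}$.

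Since the statement is essentially definitional once independence is invoked, there is no real obstacle. The only detail to track is the $j = 1$ term, where the prefix is empty; this is covered by the paper's convention $q_{(\emptyset)} = 1$ stated just before the observation, so the $j=1$ summand correctly reduces to $p_{i_1}$.
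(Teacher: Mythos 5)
Your proof is correct and is the natural argument; the paper treats this as an unproved observation, and what you have written is exactly the book-keeping justification one would expect: the product formula follows directly from independence, and the sum formula from a partition by the first non-zero index, with the $q_{(\emptyset)}=1$ convention handling the base summand.
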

Note that Observation \ref{obs:p_q} also holds when the coordinates $i_j$ are by themselves impulsive sub-strategies which are not singletons.
Also observe that if $\pi^1,\pi^2$ are impulsive strategies such that $\pi^1 \subseteq \pi^2$, then $p_{\left(\pi^1\right)} \leq p_{\left(\pi^2\right)}$.

We now introduce notation for the marginal utility achieved by an impulsive (sub) strategy executed at some point after 
inspecting box $\A$.
Note that this quantity depends on whether the observed value $V_\A$ equals $v_\A$ or 0.
We thus introduce notation for both cases, and it shall be useful to define these utilities conditioned on already 
having inspected some set of boxes $T$.
We also introduce a third ``non-lower-bounded utility'' that we shall need.
\begin{definition}
\label{def:marginal_utilities_bernoulli}
Given an impulsive strategy $\pi \subseteq [n]$ and a set of boxes $T \subseteq [n]$ such that $T\cap \pi = \emptyset$, we define
\begin{itemize}
    \item $\util_Y(\pi \mid T) := \E{\max_{i \in S(\pi)} (V_i - v_\A)^+} - \E{c\left(S(\pi) \mid \{\A\} \cup T\right)}$, the marginal utility of $\pi$, given that $V_\A = v_\A$ and that the boxes in $T$ were already opened.
    \item $\util_N(\pi \mid T) := \E{\max_{i \in S(\pi)} (V_i)} - \E{c\left(S(\pi) \mid \{\A\} \cup T\right)}$, the marginal utility of $\pi$, given that $V_\A = 0$ and that the boxes in $T$ were already opened.
    \item 
    $\util_M(\pi \mid T) := p_{\left(\pi\right)}\cdot \E{\max_{i \in S(\pi)} (V_i - v_\A) \mid \exists i \in S(\pi) \textrm{ s.t. } V_i = v_i}
    - \E{c\left(S(\pi) \mid \{\A\} \cup T\right)}$.
\end{itemize}
\end{definition}

We write $\util_Y(\pi), \util_N(\pi), \util_M(\pi)$ instead of $\util_Y(\pi \mid \emptyset), \util_N(\pi \mid \emptyset), \util_M(\pi \mid \emptyset)$, respectively.
We observe that since $c$ is submodular, then for any sets of boxes $T_1 \subseteq T_2$ that do not intersect $\pi$, we have
$\util_Y(\pi \mid T_1) \leq \util_Y(\pi \mid T_2), \util_N(\pi \mid T_1) \leq \util_N(\pi \mid T_2)$ and $\util_M(\pi \mid T_1) \leq \util_M(\pi \mid T_2)$ .

\begin{observation}
\label{obs:utility_expansion}
Let $\pi = (i_1,\ldots,i_k) \subseteq [n]$ be an impulsive strategy. Then:
\begin{align*}
    \util_N(\pi) &= 
    \sum_{j=1}^k q_{\left(i_1,\ldots,i_{j-1}\right)}\cdot \util_N(i_j \mid \{i_\ell\}_{\ell \in [j-1]}) 
    =\sum_{j=1}^k q_{\left(i_1,\ldots,i_{j-1}\right)}\cdot \left(p_{i_j}\cdot v_{i_j} - c\left(i_j \mid \{\A\}\cup \{i_\ell\}_{\ell \in [j-1]}\right)\right). \\
    \util_Y(\pi) &=
    \sum_{j=1}^k q_{\left(i_1,\ldots,i_{j-1}\right)}\cdot \util_Y(i_j \mid \{i_\ell\}_{\ell \in [j-1]}) 
    =\sum_{j=1}^k q_{\left(i_1,\ldots,i_{j-1}\right)}\cdot \left(p_{i_j}\cdot \left(v_{i_j} - v_\A\right)^+ - c\left(i_j \mid \{\A\} \cup \{i_\ell\}_{\ell \in [j-1]}\right)\right). \\
    \util_M(\pi) &=
    \sum_{j=1}^k q_{\left(i_1,\ldots,i_{j-1}\right)}\cdot \util_M(i_j \mid \{i_\ell\}_{\ell \in [j-1]})
    =\sum_{j=1}^k q_{\left(i_1,\ldots,i_{j-1}\right)}\cdot \left(p_{i_j}\cdot \left(v_{i_j} - v_\A\right) - c\left(i_j \mid \{\A\} \cup \{i_\ell\}_{\ell \in [j-1]}\right)\right).
\end{align*}
\end{observation}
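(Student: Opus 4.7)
The three identities share a common skeleton: each is a sum over $j \in [k]$ whose $j$-th term carries the weight $q_{(i_1,\ldots,i_{j-1})}$, namely the probability that the impulsive strategy actually reaches box $i_j$. My plan is to split each utility into its ``reward'' and ``cost'' parts, show that both parts expand term-by-term with this same weight, and then subtract. The enabling fact throughout is that, by the impulsive rule, the event ``$i_j$ is inspected'' coincides with $\{V_{i_\ell}=0 \text{ for all }\ell<j\}$, which by Observation~\ref{obs:p_q} applied to the prefix $(i_1,\ldots,i_{j-1})$ has probability $q_{(i_1,\ldots,i_{j-1})}$; moreover, conditional on this event the set of previously inspected boxes is deterministically $\{i_\ell\}_{\ell<j}$. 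The final ``single-box'' form of each identity (with $\util_{Y/N/M}(i_j \mid \{i_\ell\}_{\ell\in[j-1]})$ as the coefficient) is then immediate by unfolding the definition for the singleton strategy $(i_j)$.

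For the reward, let $J$ be the first index with $V_{i_J}=v_{i_J}$, or $J=k+1$ if no such index exists; then $S(\pi)=\{i_1,\ldots,i_{\min(J,k)}\}$ and at most one of the realized values in $S(\pi)$ is nonzero, so the realized maximum collapses to a single term. For $\util_N$, $\max_{i\in S(\pi)} V_i = v_{i_J}\cdot \indic{J\le k}$, whose expectation decomposes as $\sum_{j=1}^k q_{(i_1,\ldots,i_{j-1})}\, p_{i_j}\, v_{i_j}$. The same reasoning works for $\util_Y$ with $V_i$ replaced by $(V_i-v_\A)^+$, since the zero coordinates contribute $0$ to the max (because $v_\A>0$). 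For $\util_M$, multiplying the conditional expectation by $p_{(\pi)}$ removes the conditioning and yields $\E{\max_{i\in S(\pi)}(V_i-v_\A)\cdot \indic{J\le k}}$; on the event $\{J=j\}$ this maximum equals $v_{i_j}-v_\A$ (here I use that each $v_{i_\ell}>0$, so $v_{i_j}-v_\A > -v_\A$ dominates the contributions from the zero coordinates), giving $\sum_{j=1}^k q_{(i_1,\ldots,i_{j-1})}\, p_{i_j}\,(v_{i_j}-v_\A)$.

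For the cost, I would apply the elementary chain rule $c(\{i_1,\ldots,i_m\}\mid\{\A\})=\sum_{\ell=1}^m c(i_\ell\mid\{\A\}\cup\{i_\ell'\}_{\ell'<\ell})$, valid for every set function, to the random $m=\min(J,k)$. Swapping the order of summation, the fixed term $c(i_j\mid\{\A\}\cup\{i_\ell\}_{\ell<j})$ picks up the coefficient $\P{J\ge j}=q_{(i_1,\ldots,i_{j-1})}$. Subtracting this cost expansion from the reward expansion term-by-term delivers the three claimed identities.

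The proof is essentially bookkeeping, so I do not expect a real obstacle. The two points to be careful about are (i) identifying ``$i_j$ is opened'' with the zero-prefix event under impulsivity, so that the probability of reaching $i_j$ and the identity of the previously opened set are both pinned down, and (ii) applying the cost chain rule pointwise before taking expectation. In particular, submodularity of $c$ is not needed for this observation; it plays no role until the subsequent arguments.
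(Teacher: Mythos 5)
Your proof is correct, and the paper in fact omits a proof of Observation~\ref{obs:utility_expansion} altogether, evidently treating it as routine bookkeeping -- which your argument shows it is. The key ingredients are exactly the ones you isolate: (i) under the impulsive rule, the event that box $i_j$ is reached is precisely the zero-prefix event $\{V_{i_1}=\cdots=V_{i_{j-1}}=0\}$, so the weight $q_{(i_1,\ldots,i_{j-1})}$ and the set of previously opened boxes are both pinned down simultaneously; (ii) at most one realized value in $S(\pi)$ is nonzero, which collapses each $\max$ to a single term (with the small verification you supply that $v_{i_j}-v_\A > -v_\A$ handles the unclipped $\util_M$ case); and (iii) the telescoping identity $c(\{i_1,\ldots,i_m\}\mid\{\A\})=\sum_{\ell\le m}c(i_\ell\mid\{\A\}\cup\{i_{\ell'}\}_{\ell'<\ell})$ holds for arbitrary set functions, so swapping sum and expectation gives the $q$-weighted cost expansion without any use of submodularity. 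Your note that submodularity plays no role here is a worthwhile observation; the paper does not make it explicit but it is consistent with where submodularity is actually invoked (Lemma~\ref{lem:dummy_split_lemma} onward). Two tiny hygiene points: the $k=0$ (empty strategy) case is covered by the empty-sum convention and should be acknowledged once, and it is worth stating explicitly that the ``first success time'' $J$ decomposition requires independence of the $V_{i_\ell}$, which is part of the model.
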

The corresponding expressions $\util_N(\pi\mid T),\util_Y(\pi \mid T),\util_M(\pi \mid T)$ for a set $T\subseteq [n]$ such that $T\cap \pi = \emptyset$ follow the same equations above with the addition of a ``$T$'' term after every ``$\mid$'' symbol.
As a concrete example, for the strategy $\pi = (1,2,7)$ and set of boxes $T= \{4,5\}$, we have
$$
\util_N(\pi \mid T) = p_1v_1 - c(1 \mid \{\A,4,5\}) + q_{1}\left(p_2v_2 - c(2 \mid \{\A,4,5,1\})\right) + q_{(1,2)}\left(p_7v_7 - c(7\mid \{\A,4,5,1,2\})\right).
$$
Furthermore, observe that
$$
\util(\piStar) =  p_\A \cdot (v_\A + \util_Y(\pi^Y)) - c(\A) + q_\A \cdot \util_N(\pi^N).
$$
Our goal is to replace either $\pi^Y$ or $\pi^N$ with a strategy $\pi$ that achieves at least as much marginal utility, but (potentially) inspects more boxes.
This will constitute a contradiction to the assumption that $\pi^*$ maximizes $\left|\pi^Y\right| + \left|\pi^N\right|$.

The proof of the following straightforward observation is deferred to Appendix~\ref{app:MTT}.
\begin{restatable}{observation}{bottomEqualsTopMinusPpi}
\label{obs:bottom_equals_top_minus_p_pi_v_pi}
    Let $\pi \subseteq [n]$ be an impulsive strategy.  Then for any set of boxes $T\subseteq [n]$ such that $T\cap\pi = \emptyset$, we have:
    \begin{itemize}
        \item $\util_M(\pi \mid T) \leq \util_Y(\pi \mid T) \leq \util_N(\pi \mid T)$.
        \item $\util_M(\pi \mid T) = \util_N(\pi \mid T) - p_{\left(\pi\right)}\cdot v_\A$.
        \item If $\pi \subseteq \pi^Y$, then $\util_M(\pi \mid T) = \util_Y(\pi \mid T)$.
    \end{itemize}
\end{restatable}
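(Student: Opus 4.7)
The plan is to prove all three parts of the observation by expanding each of $\util_Y(\pi\mid T),\util_N(\pi\mid T),\util_M(\pi\mid T)$ via Observation \ref{obs:utility_expansion} (the variant with an extra conditioning set $T$). Since all three utilities share an identical cost term $\E{c(S(\pi)\mid\{\A\}\cup T)}$, every inequality and equality in the statement reduces to a comparison of the \emph{reward} parts alone. Thus the entire argument is really about a single pointwise inequality between three expressions, once one has put them in a common form.

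The key structural fact I would use is that since $\pi$ is impulsive, at most one box in $S(\pi)$ reveals a non-zero value. Writing $\pi=(i_1,\ldots,i_k)$, conditioning on the event that box $i_j$ is the first (and only) box whose realization is non-zero contributes probability $q_{(i_1,\ldots,i_{j-1})}\cdot p_{i_j}$, and on this event $\max_{i\in S(\pi)}V_i = v_{i_j}$. Consequently:
\begin{align*}
\E{\max_{i\in S(\pi)} V_i} &= \sum_{j=1}^k q_{(i_1,\ldots,i_{j-1})}\,p_{i_j}\,v_{i_j},\\
\E{\max_{i\in S(\pi)}(V_i-v_\A)^+} &= \sum_{j=1}^k q_{(i_1,\ldots,i_{j-1})}\,p_{i_j}\,(v_{i_j}-v_\A)^+,\\
p_{(\pi)}\cdot \E{\max_{i\in S(\pi)}(V_i-v_\A)\,\Bigl|\,\exists i\in S(\pi):V_i=v_i} &= \sum_{j=1}^k q_{(i_1,\ldots,i_{j-1})}\,p_{i_j}\,(v_{i_j}-v_\A),
\end{align*}
where the last identity uses that $p_{(\pi)}$ is exactly the probability of the conditioning event (Observation \ref{obs:p_q}) and that, conditional on this event, the unique non-zero box is $i_j$ with probability $q_{(i_1,\ldots,i_{j-1})} p_{i_j}/p_{(\pi)}$.

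From these three identities, Part 1 follows from the pointwise chain $(v_{i_j}-v_\A)\le (v_{i_j}-v_\A)^+\le v_{i_j}$, valid because $v_\A\ge 0$ and $v_{i_j}\ge 0$. Part 2 follows by subtracting the $\util_M$ reward from the $\util_N$ reward term by term, yielding $\sum_j q_{(i_1,\ldots,i_{j-1})}\,p_{i_j}\,v_\A = p_{(\pi)}\cdot v_\A$ by Observation \ref{obs:p_q}. For Part 3, the WLOG assumption preceding the lemma ensures $v_i\ge v_\A$ for every $i\in\pi^Y$; so if $\pi\subseteq\pi^Y$ then $(v_{i_j}-v_\A)^+ = v_{i_j}-v_\A$ for all $j$ and the $\util_M$ and $\util_Y$ reward sums coincide.

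The only step that requires a touch of care — and that I would flag as the ``obstacle'', though it is minor — is the conditional-expectation manipulation in the third displayed identity: one must recognize that the factor $p_{(\pi)}$ in the definition of $\util_M$ precisely cancels the normalization of the conditional expectation, so that the ``non-truncated'' quantity $(v_{i_j}-v_\A)$ shows up cleanly. Everything else is an entirely routine unpacking of definitions, and in particular submodularity of $c$ is not invoked at all here; it was already absorbed into the shared cost term of the three utilities.
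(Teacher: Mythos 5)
Your proof is correct and follows essentially the same route as the paper's: expand the three utilities, note that they share the same cost term, and compare rewards via the pointwise chain $v_i - v_\A \leq (v_i - v_\A)^+ \leq v_i$ (Part 1), the identity $\sum_j q_{(i_1,\ldots,i_{j-1})} p_{i_j} v_\A = p_{(\pi)} v_\A$ (Part 2), and the WLOG assumption $v_i \geq v_\A$ for $i \in \pi^Y$ (Part 3). The only cosmetic difference is in Part 2: the paper manipulates $\E{\max V_i} = p_{(\pi)} \cdot \E{\max V_i \mid \exists i\colon V_i = v_i}$ directly, whereas you pass through the term-by-term sum of Observation~\ref{obs:utility_expansion} — both observations amount to the same cancellation of the conditional normalization by $p_{(\pi)}$.
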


\paragraph{Impulsive Strategies with Dummies.}
 Our proof makes use of a particular family of 
 strategies that are distributions over impulsive strategies:
 an \emph{impulsive strategy with dummies} is given by a (regular) impulsive strategy $\pi$, and a subset of boxes $P \subseteq \pi$.
 The strategy is denoted $\pi_P$, and proceeds exactly as $\pi$ would, with the following single difference:
 when considering index $i\in \pi$, if it is also the case that $i \notin P$ (\emph{i.e.}, $i \in \pi \setminus P$), then instead of inspecting box $i$ the strategy rather only halts with probability $p_i$ and otherwise proceeds to the next coordinate of the tuple.  We refer to the boxes in $\pi \setminus P$ as \emph{dummy} boxes.
 As an example, the strategy $(2,1,4,7)_{\{1,7\}}$ first halts with probability $p_2$, then, if it did not halt it proceeds to inspect box 1 and halts if $V_1 = v_1$, otherwise it halts with probability $p_4$, and then, if it did not halt it proceeds to inspect box 7 and halts.
 Observe that such a strategy is a distribution over deterministic impulsive strategies.
 For example, $(2,1,4,7)_{\{1,7\}}$ equals the empty strategy with probability $p_2$, the strategy $(1)$ with probability $q_2\cdot p_4$, and the strategy $(1,7)$ with probability $q_{(2,4)} = q_2\cdot q_4$.
 The marginal utility quantities in Definition \ref{def:marginal_utilities_bernoulli} carry over to impulsive strategies with dummies.
 For example, given the strategy $\pi = (2,1,4,7)$, subset of boxes $P = \{1,7\}$ and another set of boxes $T= \{4,5\}$ that has presumably already been opened, we have
$$
\util_M(\pi_P \mid T) = q_2\left(p_1(v_1-v_r) - c(1 \mid \{\A,4,5\})\right) + q_{(2,1,4)}\left(p_7(v_7-v_r) - c(7\mid \{1\} \cup \{\A,4,5\})\right).
$$
Note that the value and cost terms corresponding to boxes 1 and 7 are multiplied by the factors $q_2$ and $q_{(2,1,4)}$, respectively, and that these are 
the same factors these terms are multiplied by in the expression for the utility of $\pi$.
Also note that $T\cap \pi \neq \emptyset$ in this example, but we allow this since the strategy we are computing the utility for, $\pi_P$, never inspects boxes from $T$.
Furthermore, the expression $p_{\left(\pi_P\right)}$ ---  the probability that one of the boxes inspected by $\pi_P$ has a non-zero value --- is also well defined.
\emph{E.g}, in the example above this probability equals $q_2p_1 + q_{(2,1,4)}p_7$.

\Cref{obs:bottom_equals_top_minus_p_pi_v_pi} also carries over to impulsive strategies with dummies, where the third bullet there holds for any such strategy $\pi_P$ where $P \subseteq \pi^Y.$
Finally, observe that for $P=\emptyset$ we have that $\pi_P$ is the empty strategy, and that for $P=\pi$ we have that 
$\pi_P$ coincides with $\pi$.

\subsection{Proof of Lemma \ref{lem:one_fork_wlog}}
\label{sec:proof-body}

The first step of the proof of Lemma \ref{lem:one_fork_wlog}  is the following inequality.


\begin{lemma}
\label{lem:easy_cases}
It holds that $p_{\left(\pi^Y\right)} < p_{\left(\pi^N\right)}$.
\end{lemma}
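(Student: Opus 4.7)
The plan is to argue by contradiction: I will assume $p_{(\pi^Y)} \ge p_{(\pi^N)}$ and exhibit an impulsive strategy whose utility is at least $\util(\piStar)$, which clashes either with the optimality of $\piStar$ or with the standing assumption that no impulsive strategy attains $\util(\piStar)$.

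The first step is to observe that optimality of $\piStar$ already forces the \emph{opposite} inequality $p_{(\pi^Y)} \le p_{(\pi^N)}$. Swapping the roles of $\pi^Y$ and $\pi^N$ in the two branches of $\piStar$ can only hurt, so $\util_Y(\pi^Y) \ge \util_Y(\pi^N)$ and $\util_N(\pi^N) \ge \util_N(\pi^Y)$. By \Cref{obs:bottom_equals_top_minus_p_pi_v_pi}, since every box $i \in \pi^Y$ satisfies $v_i \ge v_\A$, we have $\util_N(\pi^Y) = \util_Y(\pi^Y) + p_{(\pi^Y)} v_\A$, while for $\pi^N$ only the inequality $\util_N(\pi^N) \le \util_Y(\pi^N) + p_{(\pi^N)} v_\A$ is available. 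Chaining these yields
\[
\util_Y(\pi^Y) + p_{(\pi^Y)} v_\A \;=\; \util_N(\pi^Y) \;\le\; \util_N(\pi^N) \;\le\; \util_Y(\pi^N) + p_{(\pi^N)} v_\A \;\le\; \util_Y(\pi^Y) + p_{(\pi^N)} v_\A,
\]
and dividing by $v_\A > 0$ gives $p_{(\pi^Y)} \le p_{(\pi^N)}$. Combined with the contradictory assumption, this forces $p_{(\pi^Y)} = p_{(\pi^N)}$ and makes the whole chain tight; in particular $\util_N(\pi^Y) = \util_N(\pi^N)$.

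Next I would consider the impulsive strategy $\tilde\pi = (\pi^Y, \A)$ that executes $\pi^Y$ and, if no non-zero value is observed, opens $\A$ and halts. Expanding via \Cref{obs:utility_expansion} gives $\util(\tilde\pi) = U(\pi^Y) + q_{(\pi^Y)}\bigl(p_\A v_\A - c(\A \mid \pi^Y)\bigr)$, where $U(\pi^Y) = \sum_j q_{(i_1,\dots,i_{j-1})}\bigl(p_{i_j} v_{i_j} - c(i_j \mid \{i_\ell\}_{\ell<j})\bigr)$. Substituting the tightness identity $\util_N(\pi^N) = \util_Y(\pi^Y) + p_{(\pi^Y)} v_\A$ into the expression for $\util(\piStar)$ and subtracting, the reward and $v_\A$-dependent terms cancel, leaving
\[
\util(\tilde\pi) - \util(\piStar) \;=\; c(\A) - q_{(\pi^Y)} c(\A \mid \pi^Y) - \Delta, \quad \text{where } \Delta := \sum_{j} q_{(i_1,\dots,i_{j-1})}\bigl[c(i_j \mid \{i_\ell\}_{\ell<j}) - c(i_j \mid \{\A\} \cup \{i_\ell\}_{\ell<j})\bigr].
\]

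The main obstacle is showing that this expression is non-negative; the cost manipulations constitute the technical core of the argument. I would use the submodular identity $c(i_j \mid A) - c(i_j \mid A \cup \{\A\}) = c(\A \mid A) - c(\A \mid A \cup \{i_j\})$ to rewrite each summand of $\Delta$ as the drop in the marginal cost of $\A$ when box $i_j$ is added to the conditioning set. Writing $f_j := c(\A \mid \{i_1,\dots,i_{j-1}\})$, this converts $\Delta$ into $\sum_j q_{(i_1,\dots,i_{j-1})}(f_j - f_{j+1})$, which Abel-summation-telescopes against $c(\A) - q_{(\pi^Y)} c(\A \mid \pi^Y) = f_1 - q_{(\pi^Y)} f_{k+1}$ to give
\[
\util(\tilde\pi) - \util(\piStar) \;=\; \sum_{j=1}^{k} q_{(i_1,\dots,i_{j-1})}\, p_{i_j}\, c\bigl(\A \mid \{i_1,\dots,i_j\}\bigr) \;\ge\; 0.
\]
If this sum is strictly positive, $\tilde\pi$ beats $\piStar$, contradicting optimality. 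If it is zero, $\tilde\pi$ is an impulsive strategy matching $\util(\piStar)$, contradicting the running assumption that none exists. In either case, the contradictory hypothesis fails and the strict inequality $p_{(\pi^Y)} < p_{(\pi^N)}$ follows.
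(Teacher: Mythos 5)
Your proposal is correct and reaches the same conclusion, but it handles the crucial step differently from the paper. The first half of your argument—chaining the optimality inequalities with \Cref{obs:bottom_equals_top_minus_p_pi_v_pi} to force tightness and extract $\util_N(\pi^Y)=\util_N(\pi^N)$—is essentially the paper's chain, just organized to first establish $p_{(\pi^Y)}\le p_{(\pi^N)}$ unconditionally before combining with the contradictory hypothesis. Both approaches then consider the impulsive strategy $\tilde\pi=(\pi^Y,\A)$. The divergence is in how you show $\util(\tilde\pi)\ge\util(\piStar)$. The paper compares $\tilde\pi$ \emph{realization by realization} against the auxiliary strategy $\pi'$ (open $\A$, then always run $\pi^Y$): because every $i\in\pi^Y$ has $v_i\ge v_\A$, the observed maximum is identical, while $\tilde\pi$ always opens a subset of what $\pi'$ opens, so by monotonicity of $c$ it pays weakly less cost. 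This is a two-line coupling argument. You instead expand both utilities, observe that the reward terms cancel exactly, and then convert the residual cost difference via the ``cancellation'' identity $c(i\mid A)-c(i\mid A\cup\{\A\})=c(\A\mid A)-c(\A\mid A\cup\{i\})$ (the same identity the paper later isolates as \Cref{lem:cancellation_lemma}) followed by Abel summation, landing on
\[
\util(\tilde\pi)-\util(\piStar)=\sum_{j=1}^{k}q_{(i_1,\ldots,i_{j-1})}\,p_{i_j}\,c\bigl(\A\mid\{i_1,\ldots,i_j\}\bigr)\ge 0.
\]
Your algebra is correct and yields an explicit formula for the gap, which the paper's coupling does not provide; on the other hand, the paper's route is noticeably shorter and needs only monotonicity stated once, whereas your route deploys the cancellation lemma and a summation-by-parts that the paper reserves for the harder Lemma~\ref{lem:dummy_split_lemma}. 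Both are valid; yours is heavier machinery for the same conclusion.
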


\begin{proof}
    Assume towards contradiction that 
    $p_{\left(\pi^Y\right)} \geq p_{\left(\pi^N\right)}$.  This implies
    $$
    \util_N\left(\pi^Y\right) = \util_Y\left(\pi^Y\right) + p_{\left(\pi^Y\right)}v_\A \geq \util_Y\left(\pi^N\right) + p_{\left(\pi^N\right)}v_\A
    \geq \util_N\left(\pi^N\right) \geq \util_N\left(\pi^Y\right),
    $$
    where the equality and the second inequality hold by Observation \ref{obs:bottom_equals_top_minus_p_pi_v_pi}, 
    the first inequality holds by the optimality of $\pi^Y$ for the scenario that $V_\A = v_\A$,
    and the last inequality holds by the optimality of $\pi^N$ for the scenario that $V_\A = 0$.
    
    Thus all expressions in the above chain are equal and in particular we have $\util_N\left(\pi^Y\right) = \util_N\left(\pi^N\right)$.
    This implies that the strategy $\pi'$ that first inspects $\A$ and then executes $\pi^Y$ regardless of the realization of $V_\A$ is also optimal.
    Now consider the impulsive strategy $\pi'' = \left(\pi^Y, \A\right)$.
    Since $v_i \geq v_\A$ for any $i \in \pi^Y$,
    then the maximum value observed by $\pi''$ coincides with that of $\pi'$ for any realization of the boxes.
    On the other hand the cost incurred by $\pi''$ is weakly less then that of $\pi'$, again for any realization of the boxes.
    Thus the impulsive strategy $\pi''$ is optimal as well, a contradiction.
\end{proof}

The following lemma is the main technical tool needed for the rest of the proof.

\begin{lemma}
\label{lem:dummy_split_lemma}
    Let $\pi\subseteq[n]$ be any impulsive strategy, and let $\pi = A \cupdot B$ be a partition of the set of boxes corresponding to $\pi$.  Then we have
    $\util_N(\pi) \leq \util_N(\pi_A \mid B) + \util_N(\pi_B)$.
\end{lemma}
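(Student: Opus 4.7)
The plan is to prove the inequality by constructing a single coupling of the randomness driving $\pi$, $\pi_A$, and $\pi_B$ under which their opened sets satisfy $S(\pi) = S(\pi_A) \cupdot S(\pi_B)$ pointwise. Once such a coupling is in place, the value component of $\util_N$ will split exactly, while the cost component will satisfy a pointwise inequality that follows from a single application of submodularity.

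To set up the coupling, I would write $\pi = (i_1, \ldots, i_k)$ and, for each position $j$, introduce a single Bernoulli-$p_{i_j}$ variable $X_j := \indic{V_{i_j} > 0}$. I would then use this same $X_j$ as the dummy halt coin at position $j$ in whichever of $\pi_A$, $\pi_B$ treats $i_j$ as a dummy (i.e., for $\pi_A$ when $i_j \in B$, and for $\pi_B$ when $i_j \in A$). Letting $T$ be the smallest $j$ with $X_j = 1$ (or $T = k+1$ otherwise), at step $j$ each of the three strategies halts iff $X_j = 1$, so all three halt simultaneously, and it follows that $S(\pi) = \{i_j : j \leq \min(T, k)\}$, $S(\pi_A) = S(\pi) \cap A$, and $S(\pi_B) = S(\pi) \cap B$.

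The value equality is then essentially free: since $\pi$ is impulsive, $V_{i_j} = 0$ for every $j < T$, so the only possibly non-zero value $V_{i_T}$ (if $T \le k$) lies in exactly one of $S(\pi_A)$ or $S(\pi_B)$, giving $\max_{i \in S(\pi)} V_i = \max_{i \in S(\pi_A)} V_i + \max_{i \in S(\pi_B)} V_i$ pointwise. For the cost, writing $X := S(\pi_A) \subseteq A$ and $Y := S(\pi_B) \subseteq B$, the identity $c(X \cup Y \mid \{\A\}) = c(Y \mid \{\A\}) + c(X \mid \{\A\} \cup Y)$ combined with submodularity (applied to $\{\A\} \cup Y \subseteq \{\A\} \cup B$) gives
$$
c(S(\pi) \mid \{\A\}) \geq c(S(\pi_B) \mid \{\A\}) + c(S(\pi_A) \mid \{\A\} \cup B)
$$
pointwise. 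Taking expectations and combining with the value equality yields $\util_N(\pi) \leq \util_N(\pi_A \mid B) + \util_N(\pi_B)$.

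The main difficulty will be in choosing the right coupling. A direct term-by-term comparison using the expansion in Observation~\ref{obs:utility_expansion} breaks down, because the marginal costs appearing in $\util_N(\pi_B)$ are conditioned on \emph{fewer} previously opened boxes than those in $\util_N(\pi)$, which would make submodularity push the $B$-terms in the wrong direction. The coupling sidesteps this by reassembling both sides' costs into the single set-cost expression $c(S(\pi) \mid \{\A\})$, at which point only one application of submodularity is needed, in the correct direction.
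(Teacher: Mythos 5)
Your proof is correct, and it takes a genuinely different and cleaner route than the paper's. The paper establishes the cost inequality $\E{c(S(\pi_A)\mid B)} + \E{c(S(\pi_B))} \le \E{c(S(\pi))}$ by induction on $|A|+|B|$, repeatedly applying a ``cancellation lemma'' together with careful monotonicity adjustments of the $q$-factors; your coupling (reusing each realization bit $X_j$ simultaneously as box $i_j$'s value indicator and as its dummy-halt coin in whichever of $\pi_A,\pi_B$ treats it as a dummy) collapses this entire argument into a single pointwise inequality, proven by the chain rule $c(X\cup Y\mid \{\A\}) = c(Y\mid\{\A\}) + c(X\mid\{\A\}\cup Y)$ plus one application of set-level submodularity using $Y=S(\pi_B)\subseteq B$. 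The value equality is handled identically in both proofs (each box $i_j$'s value term is multiplied by the same $q$-prefactor whichever of the three strategies it falls in), so the real gain is on the cost side, where your argument avoids the $O(|A|)$-step telescoping manipulation of the $k$-summand. Your closing diagnosis is also accurate: a naive term-by-term comparison does fail because the $B$-terms in $\util_N(\pi_B)$ are conditioned on subsets of what the corresponding terms in $\util_N(\pi)$ see, which submodularity drives the wrong way; the paper's cancellation lemma fixes this locally while your coupling sidesteps it globally by comparing whole set-costs at once. One nuance worth spelling out in a final write-up: the marginal law of $\pi_A$ (resp.\ $\pi_B$) is unaffected by tying the dummy coins to actual value realizations, because a dummy coin for a box in $B$ influences $\pi_A$ only through its Bernoulli-$p_{i_j}$ marginal and is otherwise decoupled from the values $\pi_A$ actually observes and pays for; this is what makes the expectations on both sides of your pointwise inequalities the right ones.
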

\begin{proof} 
Let $\pi,A,B$ be as in the lemma statement.
Recall that the expressions in the 
inequality 
are each made up of (expected) value terms and (expected) cost terms.
We first show that the value terms cancel out.
Explicitly, we show that $\E{\max_{i \in S(\pi)} (V_i)} = \E{\max_{i \in S(\pi_A)} (V_i)} + \E{\max_{i \in S(\pi_B)} (V_i)}$.

To see this, denote $\pi$ without loss of generality as $\pi = (1,\ldots,k)$.
Then, for any $i \in [k]$, the value term corresponding to $i$ when expanding $\E{\max_{i \in S(\pi)} (V_i)}$ is $q_{(1,.\ldots,i-1)}p_iv_i$.  
Furthermore, regardless of whether $i \in A$ or $i \in B$, this would also be the value term corresponding to $i$ when expanding the right-hand side of the equation --- if $i \in A$ then this would appear in the expansion of $\E{\max_{i \in S(\pi_A)} (V_i)}$ and if $i \in B$ then this would appear in the expansion of $\E{\max_{i \in S(\pi_B)} (V_i)}$.
In fact, this last discussion also shows:
\begin{observation}
\label{obs:sum_of_p's}
    For any impulsive strategy $\pi \subseteq [n]$ and for any partition $\pi = A \cupdot B$ of the set of boxes corresponding to $\pi$, we have
    $p_{(\pi)} = p_{\left(\pi_A\right)} + p_{\left(\pi_B\right)}$.  
\end{observation}

It remains to handle the cost terms.
For ease of exposition we omit the ``$\{\A\}$'' terms inside the conditional cost terms.
This has no effect on the proof.
Thus, in the remainder of the proof we establish the following inequality:
\begin{align}
    \E{c\left(S(\pi_A) \mid B\right)} + \E{c\left(S(\pi_B)\right)} - \E{c\left(S(\pi)\right)} \leq 0. \label{eq:cost_inequalities} 
\end{align}

We prove inequality (\ref{eq:cost_inequalities}) by induction on $\left|A\right| + \left|B\right|$,
and we start with the base case $\left|A\right| + \left|B\right| = 0$.
In this case, $\pi$ is the empty strategy implying that all three summands in inequality (\ref{eq:cost_inequalities}) equal 0, and the inequality follows.

We now assume that $\left|A\right| + \left|B\right| \geq 1$.
Denote $\pi$ again without loss of generality as $\pi = (1,\ldots,k)$, where $k = \left|A\right| + \left|B\right|$.
We expand each of the expressions in inequality (\ref{eq:cost_inequalities}):
\begin{align*}
    \E{c\left(S(\pi)\right)}    &= \sum_{i=1}^k q_{(1,\ldots,i-1)} \cdot c\left(i \mid \{1,\ldots,i-1\}\right) \\
    \E{c\left(S(\pi_B)\right)} &= \sum_{i \in B} q_{(1,\ldots,i-1)} \cdot c\left(i \mid \{1,\ldots,i-1\} \cap B \right) \\
    \E{c\left(S(\pi_A) \mid B\right)} &= \sum_{i \in A} q_{(1,\ldots,i-1)} \cdot c\left(i \mid \{1,\ldots,i-1\} \cup B \right)
\end{align*}

By plugging these into inequality (\ref{eq:cost_inequalities}) and taking out common ``$q$'' factors, we get the equivalent inequality
\begin{align}
    &\sum_{i \in A} q_{(1,\ldots,i-1)} \cdot 
        \left[c\left(i \mid \{1,\ldots,i-1\} \cup B \right) - c\left(i \mid \{1,\ldots,i-1\}\right) \right]  \label{eq:A_summands}\\
    &\;+\sum_{i \in B} q_{(1,\ldots,i-1)} \cdot 
        \left[c\left(i \mid \{1,\ldots,i-1\} \cap B \right) - c\left(i \mid \{1,\ldots,i-1\}\right) \right]  \label{eq:B_summands} \\
    &\leq 0 \nonumber
\end{align}

We now split to two cases.  In the first (easy) case we assume that $k \in A$, \emph{i.e.}, the last box potentially to be inspected by $\pi$ is a box from $A$.
Note that in this case we have $\{1,\ldots,k-1\} \cup B = \{1,\ldots,k-1\}$,
and thus the summand in line (\ref{eq:A_summands}) corresponding to $i=k$ cancels out and equals 0.
Therefore, if we denote $\pi^{(k)} := (1,\ldots,k-1)$, then inequality (\ref{eq:cost_inequalities}) is equivalent to
$$
\E{c\left(S\left(\pi^{(k)}_{A\setminus\{k\}}\right) \mid B\right)} 
    + \E{c\left(S\left(\pi^{(k)}_B\right)\right)} - \E{c\left(S\left(\pi^{(k)}\right)\right)} \leq 0,
$$
which holds by the induction hypothesis.

We now handle the case $k \in B$.
Note that if $A = \emptyset$ and $B = [k]$, then inequality (\ref{eq:cost_inequalities}) holds trivially --- the summands in line (\ref{eq:A_summands}) do not exist, and the summands in line (\ref{eq:B_summands}) cancel out. We thus assume that $\left|A\right| \geq 1$.

The rest of the proof involves a systematic manipulation of the inequality.  
Mostly, we shall make repeated use of the following observation, which we term the ``cancellation lemma''.
We use colors in the lemma statement so that it will be easier to see how we apply it in the rest of the proof.

\begin{restatable}{lemma}{CancellationLemma} \emph{(Cancellation Lemma)}
\label{lem:cancellation_lemma}
    For every cost function $c:2^{\left[n\right]} \rightarrow \mathbb{R}_{\geq 0}$, subset $T \subseteq [n]$ and elements $\B{h},\R{\ell} \in [n] \setminus T$, we have
    $$
    c\left(\B{h} \mid T \cup \{\R{\ell}\}\right) - c\left(\R{\ell} \mid T \cup \{\B{h}\}\right) = c\left(\B{h} \mid T \right) - c\left(\R{\ell} \mid T \right).
    $$
\end{restatable}

\begin{proof}
The lemma holds since
    \begin{align*}
        c&\left(\B{h} \mid T \cup \{\R{\ell}\}\right) - c\left(\R{\ell} \mid T \cup \{\B{h}\}\right) \\
        &=\left[c\left(\{\B{h}\} \cup T \cup \{\R{\ell}\}\right) - c\left(T \cup \{\R{\ell}\}\right) \right]
            -\left[ c\left(\{\R{\ell}\} \cup T \cup \{\B{h}\}\right) - c\left(T \cup \{\B{h}\}\right) \right]\\
        &=c\left(T \cup \{\B{h}\}\right) - c\left(T \cup \{\R{\ell}\}\right) \\
        &=\left[c\left(T \cup \{\B{h}\}\right) - c(T)\right] - \left[c\left(T \cup \{\R{\ell}\}\right) - c(T) \right] \\
        &=c\left(\B{h} \mid T \right) - c\left(\R{\ell} \mid T \right). \qedhere
    \end{align*}
\end{proof}

Denote $A = \{a_1,\ldots,a_w\}$, where $w = \left|A\right| \geq 1$
and where the order $a_1,\ldots,a_w$ is consistent with the relative ordering of $A$ in $\pi$, \emph{i.e.}, $a_1 < \cdots < a_w$.
Thus we can rewrite line (\ref{eq:A_summands}) as follows:
\begin{align}
    \sum_{i=1}^w q_{(1,\ldots,a_i-1)} \cdot 
        \left[c\left(a_i \mid \{a_1,\ldots,a_{i-1}\} \cupdot 
        B
        \right) - c\left(a_i \mid \{1,\ldots,a_i-1\}\right) \right] \label{eq:A_summands_2}
\end{align}
For each $i=1,\ldots,w$ we shall refer to the corresponding summand in the above sum as the ``$a_i$-summand''. 
We can also rewrite the summand in line (\ref{eq:B_summands}) corresponding to $i=k$ (recall that we are in the case that $k \in B$) as
\begin{align*}
q_{(1,\ldots,k-1)} \cdot 
        \left[c\left(k \mid B \setminus \{k\} \right) - c\left(\R{k} \mid \left(B \setminus \{k\} \right) 
        \cupdot \{a_1,\ldots, a_{w-1}, \B{a_w}\}\right) \right]. 
\end{align*}
and we shall refer to it as the ``$k$-summand''.
Note the coloring of $k$ and $a_w$, highlighting their roles in the (first upcoming) application of the cancellation lemma.
Consider the 
$a_w$-summand:
\begin{align*}
q_{(1,\ldots,a_w - 1)} \cdot 
        \left[c\left(\B{a_w} \mid \{a_1,\ldots,a_{w-1}\} \cupdot \left(B\setminus\{k\}\right) \cupdot \{\R{k}\} \right) - c\left(a_w \mid \{1,\ldots,a_w-1\}\right) \right]
\end{align*}

We cannot directly apply the lemma on the $k$-summand and the $a_w$-summand because of the different ``$q$'' factors.  To get around this issue,
denote the difference inside the square parentheses in the $k$-summand by 
$$
D := \left[c\left(k \mid B \setminus \{k\} \right) - c\left(\R{k} \mid \left(B \setminus \{k\} \right) 
        \cupdot \{a_1,\ldots, a_{w-1}, \B{a_w}\}\right)\right],
$$
and note that $D \geq 0$ since $c$ is submodular.
Furthermore, note that $q_{(1,\ldots,a_w - 1)} \geq q_{(1,\ldots,k-1)}$.
Thus we can (weakly) 
increase 
the $k$-summand as follows:
\begin{align*}
q_{(1,\ldots,k-1)} \cdot D 
&\leq q_{(1,\ldots,a_w - 1)} \cdot D \\
&= q_{(1,\ldots,a_w - 1)} \cdot
    \left[c\left(k \mid B \setminus \{k\} \right) - c\left(\R{k} \mid \left(B \setminus \{k\} \right) 
        \cupdot \{a_1,\ldots, a_{w-1}, \B{a_w}\}\right) \right]
\end{align*}
Now the ``$q$'' factors are the same and we can apply the cancellation lemma.  
Thus we remove the $\{\R{k}\}$ term from the $a_w$-summand, which now becomes
 \begin{align*}
q_{(1,\ldots,a_w - 1)} \cdot 
        \left[c\left(a_w \mid \{a_1,\ldots,a_{w-1}\} \cupdot \left(B\setminus\{k\}\right) \right) - c\left(a_w \mid \{1,\ldots,a_w-1\}\right) \right]  
\end{align*}
We also remove the $\B{a_w}$ term from the $k$-summand, which now becomes
\begin{align*}
    q_{(1,\ldots,a_w-1)} \cdot 
        \left[c\left(k \mid B \setminus \{k\} \right) - c\left(\R{k} \mid \left(B \setminus \{k\} \right) 
        \cupdot \{a_1,\ldots, a_{w-2},\B{a_{w-1}}\}\right) \right].
\end{align*}
Note the coloring of $k$ and $a_{w - 1}$ highlighting the next application of the cancellation lemma.
Consider now the $a_{w - 1}$-summand:
\begin{align*}
q_{(1,\ldots,a_{w-1} - 1)} \cdot
        \left[c\left(\B{a_{w-1}} \mid \{a_1,\ldots,a_{w-2}\} \cupdot \left(B\setminus\{k\}\right) \cupdot \{\R{k}\} \right) - c\left(a_{w - 1} \mid \{1,\ldots,a_{w-1} - 1\}\right) \right]
\end{align*}
As before, we cannot directly apply the cancellation lemma due to the different ``q'' factors.
As before, we get around this by using the fact that $q_{(1,\ldots,a_{w-1}-1)} \geq q_{(1,\ldots,a_w-1)}$ and the fact that $c$ is submodular in order to replace the $q_{(1,\ldots,a_w-1)}$ factor in the $k$-summand by the factor $q_{(1,\ldots,a_{w-1}-1)}$, making the $k$-summand (weakly) larger by doing so.

After the application of the cancellation lemma, we remove the $\{\R{k}\}$ term from the $a_{w-1}$-summand.
We also remove the $\B{a_{w-1}}$ term from the $k$-summand, which becomes
\begin{align*}
    q_{(1,\ldots,a_{w-1}-1)} \cdot 
        \left[c\left(k \mid B \setminus \{k\} \right) - c\left(\R{k} \mid \left(B \setminus \{k\} \right) 
        \cupdot \{a_1,\ldots, a_{w-3},\B{a_{w-2}}\}\right) \right],
\end{align*}
and again note the coloring of $k$ and $a_{w - 2}$ highlighting the next application of the cancellation lemma.
We continue this way, applying the cancellation lemma to the summands corresponding to the pairs
$\left(\R{k},\B{a_{w-2}}\right), \left(\R{k},\B{a_{w-3}}\right),\ldots, \left(\R{k},\B{a_1}\right)$.

After the last application, the $k$-summand becomes
\begin{align*}
    q_{(1,\ldots,a_1-1)} \cdot 
        \left[c\left(k \mid B \setminus \{k\} \right) - c\left(k \mid B \setminus \{k\} 
        \right) \right] = 0,
\end{align*}
and the sum of the $a_i$-summands (Line (\ref{eq:A_summands_2})) is modified by replacing ``$B$'' with ``$B\setminus \{k\}$''.
Thus, recalling the notation
$\pi^{(k)} := (1,\ldots,k-1)$,
we have shown in the above process that
\begin{align*}
&\E{c\left(S(\pi_A) \mid B\right)} + \E{c\left(S(\pi_B)\right)} -\E{c\left(S(\pi)\right)} \leq \\
&\E{c\left(S\left(\pi^{(k)}_{A}\right) \mid B\setminus \{k\}\right)} + \E{c\left(S\left(\pi^{(k)}_{B\setminus \{k\}}\right)\right)} - \E{c\left(S\left(\pi^{(k)}\right)\right)}, 
\end{align*}
and the bottom expression is upper-bounded by 0, by the induction hypothesis.  This concludes the proof of Lemma \ref{lem:dummy_split_lemma}.
\end{proof}

    The remainder of the proof of Lemma~\ref{lem:one_fork_wlog} proceeds as follows. 
By Lemma \ref{lem:easy_cases} we have
$\pi^N \setminus \pi^Y \neq \emptyset$,
since otherwise $\pi^N \subseteq \pi^Y$ which implies $p_{\left(\pi^N\right)} \leq p_{\left(\pi^Y\right)}$.
Lemma \ref{lem:easy_cases} also implies that $p_{\left(\pi^Y\right)} < 1$ ,\emph{i.e.}, $\pi^Y$ does not contain a deterministic box.
%
To prove \Cref{lem:one_fork_wlog} we show that there exists a non-empty impulsive sub-strategy made from boxes in $\pi^N \setminus \pi^Y$ that we can concatenate to $\pi^Y$ 
without decreasing utility. 
This would constitute
a contradiction to the definition of $\piStar$.
Let $A$ and $B$ be the sets of boxes defined by $A = \pi^N \setminus \pi^Y$, $B = \pi^Y \cap \pi^N \subseteq \pi^Y$.
Note that $\pi^N = \ A \cupdot B$ and that 
$A \neq \emptyset$.
 We can write $\pi^N$ as a concatenation of contiguous sub-strategies made up of boxes from $A$ or $B$ as follows:  $\pi^N = \left(B^\pre,A^1,B^1,\ldots,A^k,B^k,A^\suff\right)$, where $A =  \left(\cupdot_{i=1}^k A^k\right) \cupdot A^\suff , B = B^\pre \cupdot \left(\cupdot_{i=1}^k B^k\right)$.
The only sub-strategies that we allow to be empty in this presentation are $B^\pre$, for the case that $\pi^N$ starts with a box from $A$, and $A^\suff$, for the case that $\pi^N$ ends with a box from $B$
(in the latter case we must have $k\geq1$ as otherwise $A = \emptyset$ and we get a contradiction).

We define the strategies $\pi^A = \left(A^1,A^2,\ldots,A^k,A^\suff\right)$, $\pi^B = \left(B^\pre,B^1,B^2,\ldots,B^k\right)$, and note the difference between $\pi^A,\pi^B$, and $\pi^N_A,\pi^N_B$.  The former are deterministic strategies, whereas the latter are strategies with dummies.

\begin{claim}
\label{clm:u_M(piA mid piY) < 0}
We have  $\util_M\left(\pi^A \mid \pi^Y\right) < 0$.
\end{claim}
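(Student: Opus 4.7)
The plan is to derive $\util_M(\pi^A \mid \pi^Y) < 0$ in two stages: first obtain the non-strict bound $\leq 0$ from the optimality of $\pi^Y$ as a strategy for $\inst^Y$, and then upgrade to the strict inequality by a short case analysis that exploits the maximality of $|\pi^Y|+|\pi^N|$.

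For the non-strict bound, I would recall that $\pi^Y$ is an optimal strategy for $\inst^Y$ (by the induction hypothesis in the proof of Theorem~\ref{thm:MTT_Submodular}), and that $(\pi^Y,\pi^A)$ is a valid impulsive strategy for $\inst^Y$ since $\pi^A \subseteq \pi^N \setminus \pi^Y$ is disjoint from $\pi^Y$. Hence $\util_Y((\pi^Y,\pi^A)) \leq \util_Y(\pi^Y)$, and expanding the left-hand side via Observation~\ref{obs:utility_expansion} yields $\util_Y(\pi^Y) + q_{(\pi^Y)}\cdot \util_Y(\pi^A \mid \pi^Y) \leq \util_Y(\pi^Y)$. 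Since $\pi^Y$ contains no deterministic box (as noted immediately after Lemma~\ref{lem:easy_cases}), $q_{(\pi^Y)}>0$, whence $\util_Y(\pi^A \mid \pi^Y) \leq 0$. Observation~\ref{obs:bottom_equals_top_minus_p_pi_v_pi} then gives $\util_M(\pi^A\mid \pi^Y) \leq \util_Y(\pi^A\mid \pi^Y) \leq 0$.

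To upgrade to strict inequality, I would assume for contradiction $\util_M(\pi^A \mid \pi^Y)=0$, which forces $\util_Y(\pi^A \mid \pi^Y)=0$ as well, and split on whether $\pi^A$ contains a box with $v_i \geq v_\A$. If every box of $\pi^A$ has $v_i < v_\A$, then conditional on $\pi^A$ halting at some box, the maximum of $V_i - v_\A$ over the opened set equals $v_j - v_\A$ for the (random) halting box $j$; hence $\util_M(\pi^A\mid \pi^Y) \leq p_{(\pi^A)}\cdot \max_{i\in \pi^A}(v_i - v_\A) < 0$, using $p_{(\pi^A)}>0$ (from $\pi^A\neq \emptyset$ and $p_i>0$) and non-negativity of the cost, a direct contradiction. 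Otherwise, let $\pi^A_{\geq}$ denote the non-empty subsequence of $\pi^A$ consisting of boxes with $v_i \geq v_\A$; the same weak-improvement reasoning used to justify the WLOG $v_i \geq v_\A$ on $\pi^Y$ shows that boxes of $\pi^A$ with $v_i < v_\A$ have value $0$ in $\inst^Y$, so dropping them from $(\pi^Y,\pi^A)$ weakly improves its $Y$-utility; combined with $\util_Y((\pi^Y,\pi^A))=\util_Y(\pi^Y)$ and the optimality of $\pi^Y$, this forces $\util_Y((\pi^Y,\pi^A_{\geq}))=\util_Y(\pi^Y)$. The augmented strategy $\pi^{**}$ that first opens $\A$, then executes $(\pi^Y,\pi^A_{\geq})$ if $V_\A=v_\A$ and $\pi^N$ otherwise, is then an optimal strategy for $\inst$ with the form required by Lemma~\ref{lem:one_fork_wlog} and all imposed WLOG conditions, yet $|\pi^Y|+|\pi^A_{\geq}|+|\pi^N| > |\pi^Y|+|\pi^N|$, contradicting maximality.

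The main obstacle will be the bookkeeping in the second sub-case: verifying that $\pi^{**}$ simultaneously inherits every structural and WLOG assumption imposed on $\piStar$ (the Lemma's form, impulsiveness of both branches, and $v_i \geq v_\A$ on the $Y$ branch) while strictly increasing the combined box count, since only then does the maximality assumption deliver the contradiction.
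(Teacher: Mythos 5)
Your argument lands in the same place as the paper — concatenate a nonempty part of $\pi^A$ onto $\pi^Y$, observe it stays optimal, contradict the maximality of $|\pi^Y|+|\pi^N|$ — but your write-up is organized rather differently and is more careful about a genuine subtlety. The paper's proof of this claim is a one-line contradiction: assume $\util_M(\pi^A\mid\pi^Y)\geq 0$, deduce $\util_Y(\pi^A\mid\pi^Y)\geq 0$ by Observation~\ref{obs:bottom_equals_top_minus_p_pi_v_pi}, concatenate, contradict maximality. That short proof implicitly slides over the fact that $\pi^A\subseteq\pi^N$ may contain boxes with $v_i<v_\A$, so the concatenated strategy need not satisfy the standing assumption that every box in the $Y$-branch has $v_i\geq v_\A$, which is part of the class over which the maximizer is chosen. (The paper does acknowledge and resolve exactly this ``small problem'' later in the proof of Lemma~\ref{lem:one_fork_wlog}, but via impulsive strategies with dummies and for a different concatenation.) Your two-step organization — first $\util_M\leq 0$ from the optimality of $\pi^Y$, then strictness by contradiction — is a legitimate reorganization, and your case split in the strict part makes that subtlety explicit: the ``all-bad'' case dies on a direct sign computation, and the mixed case is handled by dropping the $v_i<v_\A$ boxes. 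This is a valid deterministic alternative to the paper's randomized dummy construction. Two small cautions. First, your ``dropping weakly improves the $Y$-utility'' step genuinely needs the optimality of $(\pi^Y,\pi^A)$, which you establish only afterwards via $\util_Y(\pi^A\mid\pi^Y)=0$; logically these should be swapped, since without optimality the claim can fail — removing a box raises the marginal costs of downstream boxes by submodularity, and the claim is rescued only because the suffix continuation after a dropped box has utility exactly $0$ by optimality, together with the identity $c(d\mid Q)-\E{c(d\mid Q\cup S')}=\util_Y(S\mid Q\cup\{d\})-\util_Y(S\mid Q)$. Second, before invoking maximality you should explicitly verify that the modified strategy inherits the ``at most one deterministic box, last'' condition; it does, since $\pi^Y$ is deterministic-box-free by Lemma~\ref{lem:easy_cases} and $\pi^A_{\geq}\subseteq\pi^N$ can have a deterministic box only in last position, but this is worth a sentence.
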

\begin{proof}
Assume towards contradiction that $\util_M\left(\pi^A \mid \pi^Y\right) \geq 0$. Then by Observation \ref{obs:bottom_equals_top_minus_p_pi_v_pi}
we also have $\util_Y\left(\pi^A \mid \pi^Y\right) \geq 0$. Thus we can
concatenate $\pi^A$ to $\pi^Y$ 
to obtain a new optimal strategy that contradicts the definition of $\piStar$ as the maximizer of $\left|\pi^Y\right| + \left|\pi^N\right|$.
Formally, consider the strategy obtained from $\piStar$ by replacing the strategy $\pi^Y$ with $\left(\pi^Y,\pi^A\right)$.  Then the utility obtained does not decrease, since
$$
\util_Y\left(\left(\pi^Y,\pi^A\right)\right) = 
    \util_Y\left(\pi^Y\right) + q_{\left(\pi^Y\right)} \util_Y\left(\pi^A \mid \pi^Y\right)
    \geq 
    \util_Y\left(\pi^Y\right).
$$
Thus, the new strategy is optimal as well, and as discussed above we get a contradiction.
\end{proof}

Observe that
\begin{align*}
\util_N\left[\left(\pi^Y,\pi^A\right)\right] \leq \util_N\left(\pi^N\right) \leq \util_N(\pi^N_A \mid B) + \util_N(\pi^N_B)
\leq \util_N(\pi^N_A \mid \pi^Y) + \util_N(\pi^N_B),
\end{align*}
where the first inequality holds by the optimality of $\pi^N$ for the scenario where $V_\A = 0$, 
the second inequality holds by Lemma \ref{lem:dummy_split_lemma},
and the third holds by submodularity of the cost function $c$ since $B \subseteq \pi^Y$.
Now, since the strategy $\left(\pi^Y,\pi^A\right)$ is a superset of $\pi^N$, then in particular we have 
$$
p_{\left(\pi^Y,\pi^A\right)} \geq p_{\left(\pi^N\right)} = p_{\left(\pi^N_A\right)} + p_{\left(\pi^N_B\right)},
$$
where the second equality holds by Observation \ref{obs:sum_of_p's}.
Therefore, the chain of inequalities above implies:
\begin{align*}
\util_M\left[\left(\pi^Y,\pi^A\right)\right] &= \util_N\left[\left(\pi^Y,\pi^A\right)\right] - p_{\left(\pi^Y,\pi^A\right)} v_\A \\
            &\leq \util_N(\pi^N_A \mid \pi^Y) + \util_N(\pi^N_B) - \left(p_{\left(\pi^N_A\right)} + p_{\left(\pi^N_B\right)}\right) v_\A \\
            &= \left(\util_N(\pi^N_A \mid \pi^Y) - p_{\left(\pi^N_A\right)} v_\A \right) + \left(\util_N(\pi^N_B) - p_{\left(\pi^N_B\right)} v_\A \right) \\
            &= \util_M(\pi^N_A \mid \pi^Y) + \util_M(\pi^N_B),
\end{align*} 
where the first and last equalities hold by Observation \ref{obs:bottom_equals_top_minus_p_pi_v_pi}.
Since $\util_M\left[\left(\pi^Y,\pi^A\right)\right] = \util_M\left(\pi^Y\right) + q_{\left(\pi^Y\right)}\util_M\left(\pi^A \mid \pi^Y\right)$, then the above inequality implies
\begin{align}
    \util_M\left(\pi^Y\right) - \util_M(\pi^N_B) \leq \util_M(\pi^N_A \mid \pi^Y) - q_{\left(\pi^Y\right)}\util_M\left(\pi^A \mid \pi^Y\right). \label{eq:low_branch_inequality}
\end{align}

In the following claim we rule out the case that $k=0$, \emph{i.e}, that in $\pi^N$ all boxes from $B$ are inspected before all boxes from $A$.  The proof is deferred to Appendix~\ref{app:MTT}.
\begin{restatable}{claim}{kGreaterThanOne}
\label{clm:k>=1}
There exist boxes $a \in A, b \in B$ such that $\pi^N$ inspects $b$ only after inspecting $A$,
\emph{i.e.}, $k \geq 1$.
\end{restatable}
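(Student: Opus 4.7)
The plan is to argue by contradiction: suppose $k=0$, so that $\pi^N = (\pi^B, \pi^A)$ with $\pi^B$ containing exactly the boxes of $B = \pi^Y \cap \pi^N$ and $\pi^A$ those of $A = \pi^N \setminus \pi^Y$. The first step is to specialize inequality~\eqref{eq:low_branch_inequality} to this setting. When $k=0$, the strategy $\pi^N_A$ has all of $B$ as leading dummies before $\pi^A$, giving $\util_M(\pi^N_A \mid \pi^Y) = q_{(B)} \util_M(\pi^A \mid \pi^Y)$, while $\pi^N_B$ has the $A$-boxes only as trailing dummies (which contribute nothing), so $\util_M(\pi^N_B) = \util_M(\pi^B)$. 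Substituting, \eqref{eq:low_branch_inequality} reduces to
\[
    \util_M(\pi^Y) - \util_M(\pi^B) \;\leq\; (q_{(B)} - q_{(\pi^Y)}) \cdot \util_M(\pi^A \mid \pi^Y).
\]

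I would then split into cases according to whether the set inclusion $B \subseteq \pi^Y$ is strict. If $\pi^Y \supsetneq B$, then $q_{(B)} > q_{(\pi^Y)}$ holds strictly (since every $p_i > 0$), and the right-hand side above is strictly negative by \Cref{clm:u_M(piA mid piY) < 0}. Combined with $\util_M(\pi^Y) = \util_Y(\pi^Y)$ (third bullet of \Cref{obs:bottom_equals_top_minus_p_pi_v_pi}) and $\util_M(\pi^B) \leq \util_Y(\pi^B)$ (first bullet), this yields $\util_Y(\pi^Y) < \util_Y(\pi^B)$, directly contradicting the optimality of $\pi^Y$ among impulsive strategies in the $V_\A = v_\A$ branch.

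The delicate case, which I expect to be the main obstacle, is when $\pi^Y = B$ as sets: the right-hand side vanishes, so the inequality only gives $\util_Y(\pi^Y) \leq \util_Y(\pi^B)$, and combining with the optimality of $\pi^Y$ forces equality, so that $\pi^B$ is also optimal in the $V_\A = v_\A$ branch. Since this substitution preserves all assumptions on $\piStar$ --- the set of boxes does not change, and for $k=0$ any deterministic box of $\pi^N$ sits in the trailing block $\pi^A$ rather than $\pi^B$ --- I may assume without loss of generality that $\pi^Y = \pi^B$ as ordered sequences, so that $\pi^N = (\pi^Y, \pi^A)$.

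The concluding step is to exhibit an impulsive strategy $\tilde\pi := (\pi^Y, \A, \pi^A)$ that achieves utility at least $\util(\piStar)$, contradicting the standing assumption that no impulsive strategy is optimal for $\inst$. Expanding $\util(\piStar)$ using $\util_Y(\pi^Y) = \util_N(\pi^Y) - p_{(\pi^Y)} v_\A$ (valid because $v_i \geq v_\A$ for $i \in \pi^Y$) and $\util_N(\pi^N) = \util_N(\pi^Y) + q_{(\pi^Y)} \util_N(\pi^A \mid \pi^Y)$, and expanding $\util(\tilde\pi)$ by conditioning on halting within $\pi^Y$, the $v_\A$- and $\util_N(\pi^A \mid \pi^Y)$-terms cancel. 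A routine cost identity $\util(\pi^Y) - \util_N(\pi^Y) = \E{c(\A \mid S(\pi^Y))} - c(\A)$ then gives
\[
    \util(\tilde\pi) - \util(\piStar) = \E{c(\A \mid S(\pi^Y))} - q_{(\pi^Y)} c(\A \mid \pi^Y).
\]
Submodularity of $c$ yields $c(\A \mid S') \geq c(\A \mid \pi^Y)$ for every $S' \subseteq \pi^Y$, whence $\E{c(\A \mid S(\pi^Y))} \geq c(\A \mid \pi^Y)$, and so $\util(\tilde\pi) - \util(\piStar) \geq p_{(\pi^Y)}\cdot c(\A \mid \pi^Y) \geq 0$, completing the contradiction.
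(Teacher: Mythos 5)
Your proof is correct, and its skeleton coincides with the paper's: assume $k=0$ (so $\pi^N=(\pi^B,\pi^A)$), specialize the machinery to this configuration, and split on whether $B\subsetneq \pi^Y$ or $B=\pi^Y$. In the strict-inclusion case you and the paper derive the identical inequality $0\le(q_{(\pi^B)}-q_{(\pi^Y)})\,\util_M(\pi^A\mid\pi^Y)$ and reach a contradiction; you contradict the optimality of $\pi^Y$, the paper contradicts \Cref{clm:u_M(piA mid piY) < 0}, but these are the same inequality read in opposite directions.

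The genuine divergence is in the set-equality case. There the paper proves $\util_N(\pi^Y)=\util_N(\pi^B)$ directly (without inequality~\eqref{eq:low_branch_inequality}), swaps $\pi^B\to\pi^Y$ inside $\pi^N$, and then concludes via a per-realization coupling: the impulsive strategy $(\pi^Y,\A,\pi^A)$ matches the modified $\piStar$ in realized reward for every sample path and opens a subset of its boxes, hence has weakly smaller cost by monotonicity alone. You instead push through the algebra, deriving the identity $\util(\tilde\pi)-\util(\piStar)=\E{c(\A\mid S(\pi^Y))}-q_{(\pi^Y)}c(\A\mid\pi^Y)$ via the cost rewriting $\util(\pi^Y)-\util_N(\pi^Y)=\E{c(\A\mid S(\pi^Y))}-c(\A)$ (which is a disguised telescoping application of the cancellation lemma), and you then invoke submodularity to lower-bound the expectation by $c(\A\mid\pi^Y)$. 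Both are correct, but the paper's coupling is lighter-weight: it needs only monotonicity of $c$ and avoids the explicit cost bookkeeping, whereas your version makes the saving term $p_{(\pi^Y)}\,c(\A\mid\pi^Y)\ge 0$ explicit at the price of using submodularity where it isn't strictly required. One small presentational gap: when you invoke WLOG to set $\pi^N=(\pi^Y,\pi^A)$, what you have established is $\util_Y(\pi^Y)=\util_Y(\pi^B)$; to justify the replacement inside the $V_\A=0$ branch you should note that this (together with $p_{(\pi^Y)}=p_{(\pi^B)}$ and the second and third bullets of \Cref{obs:bottom_equals_top_minus_p_pi_v_pi}) yields $\util_N(\pi^Y)=\util_N(\pi^B)$, which is what the paper states explicitly. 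Also worth saying once that $\pi^Y$ has no deterministic box (a consequence of \Cref{lem:easy_cases}), so the swap does not violate the ``deterministic box is last'' convention for $\pi^N$; you do gesture at this, and it does hold.
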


In the remainder we show that for some $j\in[k]$, we can concatenate the (non-empty) strategy $\left(A^1,\ldots,A^j\right)$ to $\pi^Y$ without losing utility.
This would constitute a contradiction to the assumption that $\piStar$ maximizes $\left|\pi^Y\right|+ \left|\pi^N\right|$. To this end we analyze both sides of inequality (\ref{eq:low_branch_inequality}).
First, the left hand side 
satisfies 
\begin{align}
0 \leq \util_Y\left(\pi^Y\right) - \util_Y(\pi^N_B) =  \util_M\left(\pi^Y\right) - \util_M(\pi^N_B) \label{eq:lhss}
\end{align}
where the equality holds by \Cref{obs:bottom_equals_top_minus_p_pi_v_pi} (recall that $B \subseteq \pi^Y$), and the inequality holds since $\pi^Y$ is the optimal sub-strategy for the scenario where $V_\A = v_\A$. For the right hand side we have the following claim which is derived through a careful algebraic manipulation 
that mostly applies Observation \ref{obs:utility_expansion}. 

\begin{restatable}{claim}{lowBranch}
\label{clm:low_branch_inequality_manipulation}
The right hand side
of inequality (\ref{eq:low_branch_inequality}) 
satisfies
    \begin{align*}
        \util_M(\pi^N_A \mid \pi^Y) - q_{\left(\pi^Y\right)}\util_M\left(\pi^A \mid \pi^Y\right) &\leq 
            \sum_{j=1}^k  q_{\left(B^\pre, B^1,\ldots,B^{j-1}\right)} p_{\left(B^j\right)}  
            \util_M\left(\left(A^1,\ldots,A^j\right) \mid \pi^Y \right).
    \end{align*}
\end{restatable}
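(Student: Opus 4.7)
The plan is to expand both sides as sums indexed by the boxes $a\in A$, show that the per-box coefficients in $\text{LHS}-\text{RHS}$ align into a single common factor via a telescoping identity, and then close the proof using Claim~\ref{clm:u_M(piA mid piY) < 0}. For notation, write $Q_i := q_{(B^\pre, B^1, \ldots, B^{i-1})}$ for $i\in\{1,\ldots,k+1\}$, so that $Q_{k+1} = q_{(\pi^B)} = q_{(B)}$; for each $a\in A$ let $P_a$ denote the set of $A$-boxes strictly preceding $a$ in $\pi^A$; and set $\Delta_a := p_a(v_a - v_\A) - c(a \mid \{\A\} \cup \pi^Y \cup P_a)$.

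The key structural observation is that in $\pi^N_A$ the $B$-boxes are \emph{dummies} and hence never enter the opened set, so for every $a\in A$ the conditional cost term at $a$ in $\pi^N_A$ is identical to the one in $\pi^A$---both condition on $\{\A\}\cup\pi^Y\cup P_a$. The $B$-boxes enter only through the $q$-prefix factors that gate the arrival at $a$. Using Observation~\ref{obs:utility_expansion} (applied to $\pi^A$ and, as noted after its statement, extended to strategies with dummies for $\pi^N_A$), we get
\begin{align*}
\util_M(\pi^N_A\mid\pi^Y) &= \sum_{i=1}^{k} Q_i\sum_{a\in A^i} q_{(P_a)}\Delta_a \;+\; Q_{k+1}\sum_{a\in A^\suff} q_{(P_a)}\Delta_a, \\
\util_M(\pi^A\mid\pi^Y) &= \sum_{i=1}^{k}\sum_{a\in A^i} q_{(P_a)}\Delta_a \;+\; \sum_{a\in A^\suff} q_{(P_a)}\Delta_a.
\end{align*}
For the right-hand side of the claim, expanding $\util_M((A^1,\ldots,A^j)\mid\pi^Y) = \sum_{i=1}^j\sum_{a\in A^i} q_{(P_a)}\Delta_a$ (the multipliers match those of $\pi^A$ since $(A^1,\ldots,A^j)$ is a prefix of $\pi^A$), swapping the order of summation, and invoking the telescoping identity
\begin{align*}
\sum_{j=i}^{k} q_{(B^\pre, B^1, \ldots, B^{j-1})}\,p_{(B^j)} \;=\; \sum_{j=i}^{k}(Q_j-Q_{j+1}) \;=\; Q_i - Q_{k+1},
\end{align*}
the right-hand side rewrites as $\sum_{i=1}^k (Q_i - Q_{k+1})\sum_{a\in A^i} q_{(P_a)}\Delta_a$.

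Comparing coefficients of each $q_{(P_a)}\Delta_a$ in $\text{LHS}-\text{RHS}$ now yields $(Q_i - q_{(\pi^Y)}) - (Q_i - Q_{k+1}) = Q_{k+1} - q_{(\pi^Y)}$ for $a\in A^i$ with $i\le k$, and the same factor $Q_{k+1} - q_{(\pi^Y)}$ for $a\in A^\suff$; so the whole difference collapses to
\begin{align*}
\util_M(\pi^N_A\mid\pi^Y) \;-\; q_{(\pi^Y)}\util_M(\pi^A\mid\pi^Y) \;-\; \text{RHS} \;=\; \bigl(Q_{k+1} - q_{(\pi^Y)}\bigr)\cdot\util_M(\pi^A\mid\pi^Y).
\end{align*}
The first factor is nonnegative because $Q_{k+1} = q_{(B)} \ge q_{(\pi^Y)}$ (since $B\subseteq\pi^Y$, enlarging the set can only lower the probability that all inspected values are zero), and the second is strictly negative by Claim~\ref{clm:u_M(piA mid piY) < 0}; hence the product is nonpositive, as required. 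The main obstacle I anticipate is the careful bookkeeping of three different $q$-prefixes in the three expansions, in particular keeping the multiplicative contribution of the $B$-boxes---which enter $\pi^N_A$ only as dummies---strictly separate from that of the $A$-boxes (which also drive the cost conditional), so that the telescoping identity actually applies term by term.
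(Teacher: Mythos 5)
Your proof is correct and takes essentially the same route as the paper's. Both rest on exactly the same two facts---$q_{(\pi^B)}\geq q_{(\pi^Y)}$ (since $\pi^B\subseteq\pi^Y$) and $\util_M(\pi^A\mid\pi^Y)<0$ from Claim~\ref{clm:u_M(piA mid piY) < 0}---and the same telescoping simplification of the $B$-prefix $q$-factors. The only differences are organizational: the paper applies the inequality at the very start (replacing $q_{(\pi^Y)}$ by $q_{(\pi^B)}$) and then proves that the remaining expression equals the right-hand side via an equality chain grouped by the blocks $A^i$, whereas you compute $\mathrm{LHS}-\mathrm{RHS}$ exactly per box and defer the sign argument to the end. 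The telescoping identity $\sum_{j=i}^k Q_j p_{(B^j)} = Q_i - Q_{k+1}$ that you invoke directly is precisely what the paper derives by writing $Q_i - Q_{k+1} = Q_i\,p_{(B^i,\ldots,B^k)}$ and then expanding $p_{(B^i,\ldots,B^k)}$ via Observation~\ref{obs:p_q}.
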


\begin{proof}
We have
\begin{align*}
    \util_M(\pi^N_A \mid \pi^Y) - q_{\left(\pi^Y \right)}\util_M\left(\pi^A \mid \pi^Y\right) \leq& 
    \util_M(\pi^N_A \mid \pi^Y) - q_{\left(\pi^B \right)}\util_M\left(\pi^A \mid \pi^Y\right)  \\
    =&\util_M\left[\left(B^\pre,A^1,B^1,\ldots,A^k,B^k,A^\suff\right)_A \mid \pi^Y \right] \\
    &- q_{\left(B^\pre,B^1,\ldots,B^k\right)} \util_M\left[\left(A^1,\ldots,A^k,A^\suff \mid \pi^Y\right)\right]    \\
    =&\left[\sum_{i=1}^k q_{\left(B^\pre, A^1,B^1,\ldots,A^{i-1},B^{i-1}\right)} 
        \util_M\left(A^i \mid \pi^Y\cup \{A^j\}_{j \in [i-1]} \right)\right] \\
    &+ 
        q_{\left(B^\pre, A^1,B^1,\ldots,A^{k},B^{k}\right)} \util_M\left(A^\suff \mid \pi^Y\cup A \right) \\
    &- q_{\left(B^\pre,B^1,\ldots,B^k\right)}
        \bigg{[}\sum_{i=1}^k q_{\left(A^1,\ldots,A^{i-1}\right)} 
        \util_M\left(A^i \mid \pi^Y\cup \{A^j\}_{j \in [i-1]} \right)\\
    &+
        q_{\left(A^1,\ldots,A^k\right)}\util_M\left(A^\suff \mid \pi^Y\cup A\right)\bigg{]}  
\end{align*}
where the first inequality holds since $q_{\left(\pi^B\right)} \geq q_{\left(\pi^Y\right)}$ (since $\pi^B \subseteq \pi^Y$), 
and since by \Cref{clm:u_M(piA mid piY) < 0} $\util_M\left(\pi^A \mid \pi^Y\right) < 0$. 

Now, for every $i$ we have: 
$$
q_{\left(B^\pre, A^1,B^1,\ldots,A^{i-1}, B^{i-1}\right)} = q_{\left(A^1,\ldots,A^{i-1},B^\pre,B^1,\ldots,B^{i-1}\right)} =
q_{\left(A^1,\ldots,A^{i-1}\right)}\cdot q_{\left(B^\pre, B^1,\ldots,B^{i-1}\right)}.
$$
Thus we can factor out $q_{\left(A^1,\ldots,A^{i-1}\right)}$ in the chain above, cancel out the ``$A^\suff$'' term and continue as follows:
\begin{align*}
\util_M(\pi^N_A \mid \pi^Y)& - q_{\left(\pi^Y \right)}\util_M\left(\pi^A \mid \pi^Y\right)   \\
\leq &\sum_{i=1}^k q_{\left(A^1,\ldots,A^{i-1}\right)}\cdot\left(q_{\left(B^\pre, B^1,\ldots,B^{i-1}\right)} - q_{\left(B^\pre,B^1,\ldots,B^k\right)}\right) 
        \util_M\left(A^i \mid \pi^Y\cup \{A^j\}_{j \in [i-1]} \right)    \\
=&\sum_{i=1}^k q_{\left(A^1,\ldots,A^{i-1}\right)}\cdot q_{\left(B^\pre, B^1,\ldots,B^{i-1}\right)}\left(1 - q_{\left(B^i,\ldots,B^k\right)}\right) 
        \util_M\left(A^i \mid \pi^Y\cup \{A^j\}_{j \in [i-1]} \right)    \\ 
=&\sum_{i=1}^k q_{\left(A^1,\ldots,A^{i-1}\right)}\cdot q_{\left(B^\pre, B^1,\ldots,B^{i-1}\right)}p_{\left(B^i,\ldots,B^k\right)} 
        \util_M\left(A^i \mid \pi^Y \cup \{A^j\}_{j \in [i-1]} \right)    \\ 
=&\sum_{i=1}^k q_{\left(A^1,\ldots,A^{i-1}\right)}\cdot q_{\left(B^\pre, B^1,\ldots,B^{i-1}\right)}
\left(\sum_{j=i}^k q_{\left(B^i,\ldots,B^{j-1}\right)}p_{\left(B^j\right)}\right) 
        \util_M\left(A^i \mid \pi^Y \cup \{A^j\}_{j \in [i-1]} \right)    \\
=&\sum_{j=1}^k \sum_{i=1}^j q_{\left(A^1,\ldots,A^{i-1}\right)}\cdot q_{\left(B^\pre, B^1,\ldots,B^{i-1}\right)}
 q_{\left(B^i,\ldots,B^{j-1}\right)}p_{\left(B^j\right)}
        \util_M\left(A^i \mid \pi^Y \cup \{A^j\}_{j \in [i-1]} \right)    \\  
=&\sum_{j=1}^k  q_{\left(B^\pre, B^1,\ldots,B^{j-1}\right)} p_{\left(B^j\right)} 
        \sum_{i=1}^j q_{\left(A^1,\ldots,A^{i-1}\right)} 
        \util_M\left(A^i \mid \pi^Y \cup \{A^j\}_{j \in [i-1]} \right)    \\
=&\sum_{j=1}^k  q_{\left(B^\pre, B^1,\ldots,B^{j-1}\right)} p_{\left(B^j\right)}  
        \util_M\left(\left(A^1,\ldots,A^j\right) \mid \pi^Y  \right)   \qedhere     
\end{align*}
\end{proof}

We plug the inequality in \Cref{clm:low_branch_inequality_manipulation} and inequality (\ref{eq:lhss}) into inequality (\ref{eq:low_branch_inequality}), to get
\begin{align}
    0 \leq\sum_{j=1}^k  q_{\left(B^\pre, B^1,\ldots,B^{j-1}\right)} p_{\left(B^j\right)}  
        \util_M
        \left(\left(A^1,\ldots,A^j\right) \mid \pi^Y \right) \label{eq:final_inequality}
\end{align}

Note that all the factors $q_{\left(B^\pre, B^1,\ldots,B^{j-1}\right)} p_{\left(B^j\right)}$
are strictly positive
since $\pi^Y$ does not have a deterministic box and $B$ is a subset of $\pi^Y$.
This implies that at least one of the expressions
$
\util_M
\left(\left(A^1,\ldots,A^j\right) \mid \pi^Y \right)$, for $j \in [k]$, is non-negative.
Choose some $j$ that satisfies this.
To conclude the proof we would like to say that we can concatenate $\left(A^1,\ldots,A^j\right)$ to $\pi^Y$ without decreasing the utility, thus obtaining the desired contradiction,
analogously to what we did in 
the proof of \Cref{clm:u_M(piA mid piY) < 0}. 
The (small) problem is that there might be boxes $i \in \left(A^1,\ldots,A^j\right)$ for which $v_i < v_\A$.

To get around this, we note that
it cannot be the case that all boxes $i\in \left(A^1,\ldots,A^j\right)$ satisfy $v_i < v_\A$,
since the contribution of these boxes to
$\util_M \left(\left(A^1,\ldots,A^j\right) \mid \pi^Y \right)$ is strictly negative.
Consider then the impulsive strategy with dummies 
$\pi^\mathsf{rand} = \left(A^1,\ldots,A^j\right)_{\{i \mid v_i \geq v_\A\} \cap \left(A^1,\ldots,A^j\right)}$ 
which is obtained from $\left(A^1,\ldots,A^j\right)$ by replacing the inspection of every box $i$ for which $v_i < v_\A$ with a decision to halt with probability $p_i$ and otherwise continue to the next box.
Then we have $\util_M\left(\pi^\mathsf{rand} \mid \pi^Y\right) \geq 0$.
Furthermore, by the observation above
this strategy has non-empty deterministic strategies in its support 
(recall that an impulsive strategy with dummies is a distribution over deterministic impulsive strategies).
Thus, there exists one such strategy, denoted $\pi^{\mathsf{diff}}$, for which $\util_M\left(\pi^{\mathsf{diff}} \mid \pi^Y \right) \geq 0$, and which satisfies $v_i \geq v_\A$ for every $i \in \pi^{\mathsf{diff}}$.
This in turn implies $\util_Y\left(\pi^{\mathsf{diff}} \mid \pi^Y \right) \geq 0$, by \Cref{obs:bottom_equals_top_minus_p_pi_v_pi}.
We now concatenate $\pi^{\mathsf{diff}}$ to $\pi^Y$ without decreasing the utility,
analogously to what we did in 
the proof of \Cref{clm:u_M(piA mid piY) < 0},
and get a contradiction to the definition of $\piStar$.
This concludes the proof of Lemma~\ref{lem:one_fork_wlog}.

\section{Reduction to Bernoulli Instances}
\label{sec:reduction}

In this section, we show how Theorem~\ref{thm:MTT_Submodular} implies that 
for any instance with arbitrary distributions and a submodular cost function
there is an optimal strategy with a fixed-order. 
We do so by transforming an instance with arbitrary distributions, to an instance of the problem with Bernoulli distributions.
 
We first discretize the support of the distributions using a discretization parameter $\epsilon$ and 
by capping the values by a sufficiently large number (that depends on the distributions and on $\epsilon$).  This leads to a modified instance with finite support. Then, we replace each box with a finite number of boxes 
with weighted Bernoulli distributions. 

Both transformations maintain several key properties of the instance. The goal of these transformations is to modify the instance to have only a finite number of weighted Bernoulli boxes, for which we can apply Theorem~\ref{thm:MTT_Submodular}.

\paragraph{Transformation 1:}
Transformation  $\transformeps{\epsilon}$, defined by a parameter $\epsilon>0$,  proceeds as follows: given an instance  $\inst=(D_1,\ldots,D_n,c)$, 
let $\kappa_\epsilon:= \min \{\kappa\geq 0 \mid  \sum_{i=1}^n \E{(V_i-\kappa)^+} \leq \epsilon \} $. 
Such a constant $\kappa_\epsilon$ is well defined for every $\epsilon>0$ 
since $\sum_{i=1}^n \E{(V_i-\kappa)^+  }$ is a monotone continuous decreasing function in $\kappa$, and the limit as $\kappa$ approaches infinity is $0$ (we refer the interested reader to Lemma~\ref{lem:kappa} in the appendix for a formal claim.)

Using $\kappa_\epsilon$, for every $i$,
$D_i^\epsilon$ 
is defined
to be the distribution of the random variable  $\bar{V_i}= \epsilon\cdot \lfloor\frac{\min(V_i,\kappa_\epsilon)}{\epsilon}\rfloor$.
We remark that since $\kappa_\epsilon$ is finite, then the support of the new set of distributions is finite.
Finally, the output of $\transformeps{\epsilon}$ is $\transformeps{\epsilon}(D_1,\ldots,D_n,c)= (D_1^\epsilon,\ldots,D_n^\epsilon,c)$.

\begin{proposition}
\label{prop:trans1}
For every instance $\inst$ and every $\epsilon>0$, let $\inst^\epsilon = \transformeps{\epsilon}(\inst)$. Then the following properties hold:
\begin{enumerate}
    \item    For every strategy $\pi$ on instance $\inst$ there exists a strategy $\pi'$ on instance $\inst^\epsilon$ such that  $\utili{\inst}{\pi} \leq \utili{\inst^\epsilon}{\pi'} +2\epsilon$.
    \item     For every strategy $\pi'$ on instance $\inst^\epsilon$ there exists a strategy $\pi$ on instance $\inst$ such that  $\utili{\inst}{\pi} \geq \utili{\inst^\epsilon}{\pi'}$. 
      Furthermore, if $\pi'$ is a fixed-order strategy, then there exists such a fixed-order strategy $\pi$. 
\end{enumerate}
\end{proposition}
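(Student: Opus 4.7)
The plan is to construct natural couplings between $\inst$ and $\inst^\epsilon$ that translate strategies in both directions. Two observations drive everything: $\bar{V_i}$ is a deterministic function of $V_i$, so it can be computed from any realization of $V_i$; and the pointwise gap obeys
\[
V_i - \bar{V_i} \;\leq\; \epsilon + (V_i-\kappa_\epsilon)^+,
\]
by splitting into the cases $V_i \leq \kappa_\epsilon$ (rounding loses less than $\epsilon$) and $V_i > \kappa_\epsilon$ (the cap adds $V_i - \kappa_\epsilon$). Taking a maximum over $i$ and then expectation, and using the defining inequality $\sum_i \E{(V_i-\kappa_\epsilon)^+} \leq \epsilon$ of $\kappa_\epsilon$, this yields $\E{\max_i(V_i - \bar{V_i})} \leq 2\epsilon$, which is the sole source of the $2\epsilon$ loss in item~1.

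For item~2, given $\pi'$ on $\inst^\epsilon$, I define $\pi$ on $\inst$ that runs $\pi'$ in simulation: each time $\pi$ opens a box $i$ and sees $V_i$, it computes $\bar{V_i}$ deterministically and feeds it to the simulated $\pi'$ to determine the next decision. Then $\pi$ opens exactly the (random) set that $\pi'$ would open when driven by the corresponding $\bar{V}$ values, so $\E{c(S(\pi))}=\E{c(S(\pi'))}$, while the reward of $\pi$ is at least that of $\pi'$ because $V_i \geq \bar{V_i}$ pointwise. This gives $\utili{\inst}{\pi}\geq\utili{\inst^\epsilon}{\pi'}$. Since $\pi$ inspects boxes in the same order as $\pi'$, the fixed-order property is preserved.

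For item~1, given $\pi$ on $\inst$, I define a (possibly randomized) $\pi'$ on $\inst^\epsilon$ by reverse simulation: each time $\pi'$ opens a box $i$ and observes a value $\bar{V_i}$, it samples a fictitious $\tilde{V}_i$ from the conditional distribution of $V_i$ given the observed $\bar{V_i}$ and hands $\tilde{V}_i$ to the simulated $\pi$ to pick the next action. By construction, the joint distribution of the pairs $(\bar{V_i},\tilde{V}_i)$ encountered by $\pi'$ coincides with the joint of $(\bar{V_i},V_i)$ under the natural coupling on $\inst$, so $S(\pi')$ and $S(\pi)$ have the same distribution and the expected costs match exactly. The reward of $\pi'$ differs from that of $\pi$ by at most $\max_{i \in S(\pi)}(V_i - \bar{V_i})$, whose expectation is bounded by $2\epsilon$ as above, which rearranges to $\utili{\inst}{\pi}\leq\utili{\inst^\epsilon}{\pi'}+2\epsilon$.

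The step that warrants extra care is the reverse simulation in item~1: one must check that the sequence of pairs $(\bar{V_i},\tilde{V}_i)$ seen by the simulated copy of $\pi$ has the same joint distribution it would see when actually run on $\inst$. This holds because each $\tilde{V}_i$ depends only on the current $\bar{V_i}$ and has the correct conditional law, while the choice of which box is opened next is determined by the simulated $\pi$ in exactly the same way as on $\inst$. Everything else is straightforward bookkeeping combining the two couplings with the pointwise inequality above.
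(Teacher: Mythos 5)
Your proof is correct and takes essentially the same approach as the paper: construct $\pi'$ (resp. $\pi$) by simulating the other strategy under the natural coupling $\bar V_i = \epsilon\lfloor\min(V_i,\kappa_\epsilon)/\epsilon\rfloor$, observe that costs agree exactly, and bound the loss in reward via the pointwise gap $V_i-\bar V_i$ split on whether $V_i\le\kappa_\epsilon$. Your single pointwise inequality $V_i - \bar V_i \le \epsilon + (V_i-\kappa_\epsilon)^+$ packages the paper's case analysis a bit more compactly but yields the same $2\epsilon$ bound.
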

\begin{proof}

    The two instances $\cI$ and $\cI^{\epsilon}$ are on the same boxes, the only difference is that the random variables $V_i$ of $\cI$ are discretized as $\overline{V}_i$ in $\cI^{\epsilon}$. To prove the first part of the proposition we show how to construct a strategy $\pi'$ using $\pi$. When $\pi$ prescribes to open box $i$, strategy $\pi'$ does it and observes a realization $\bar{V_i}=\bar{v_i}$. Then one can draw $V_i$, according to the distribution $D_i$ conditioned on the event that $\epsilon \cdot \lfloor \frac{\min(V_i,\kappa_\epsilon)}{\epsilon}\rfloor= \bar{v_i}$, and keep playing as if $\pi$ saw the realization of $V_i$. 
    
    It is clear that if $\pi'$ would have received the value of $V_i$ (instead of $\bar{V_i}$), then $\pi'$ would have had the same performance as $\pi$. As this is not always the case we bound the difference of the two utilities partitioning the analysis in two cases. If the chosen $V_i$ is at most $ \kappa_\epsilon$ then $V_i-\bar{V_i} \leq \epsilon$.
Otherwise, by the choice of $\kappa_\epsilon$, if we don't count $V_i$ in this event, we lose at most an additional $\epsilon$.
Formally, if we denote by $i^*$ the box with the maximal simulated 
value observed by $\pi'$, 
then:
\begin{align*}
\utili{\inst^\epsilon}{\pi'}  =&  \utili{\inst}{\pi} - \E{ V_\istar - \epsilon \cdot \left\lfloor \frac{\min(V_\istar,\kappa_\epsilon)}{\epsilon}\right\rfloor} \\ 
=& \utili{\inst}{\pi} - \E{ \left(V_\istar - \epsilon \cdot \left\lfloor \frac{V_\istar}{\epsilon}\right\rfloor \right) \cdot \indicator{  V_\istar \leq \kappa_\epsilon} } \\ 
&- \E{ \left(V_\istar - \epsilon \cdot \left\lfloor \frac{\kappa_\epsilon}{\epsilon}\right\rfloor\right) \cdot \indicator{ V_\istar > \kappa_\epsilon}} \\
\geq&
\utili{\inst}{\pi} - \E{ \epsilon \cdot \indicator{  V_\istar \leq \kappa_\epsilon} }\\
&- \E{ \left(V_\istar - \kappa_\epsilon\right) \cdot \indicator{ V_\istar > \kappa_\epsilon}} -\E{ \left(\kappa_\epsilon - \epsilon \cdot \left\lfloor \frac{\kappa_\epsilon}{\epsilon}\right\rfloor\right) \cdot \indicator{ V_\istar > \kappa_\epsilon}} \\
\geq & 
\utili{\inst}{\pi} - \E{ \epsilon \cdot \indicator{  V_\istar \leq \kappa_\epsilon} }\\ & - \left[\sum_i \E{ \left(V_i - \kappa_\epsilon\right) \cdot \indicator{ V_i > \kappa_\epsilon}}\right] -\E{ \epsilon \cdot \indicator{ V_\istar > \kappa_\epsilon}}\\ 
\geq& \utili{\inst}{\pi} - 2\epsilon,
\end{align*}
where the first two inequalities follows from the fact that, for every $x$, it holds that $x-\epsilon\left\lfloor\frac{x}{\epsilon}\right\rfloor \leq \epsilon$, and the last inequality is by definition of $\kappa_\epsilon$.

The second part of the proposition follows by the simple observation that one can run the following strategy $\pi$. First, calculate $\kappa_\epsilon$. Upon the arrival of $V_i=v_i$, calculate $ \bar{v_i}=\epsilon\cdot \left\lfloor\frac{\min(v_i,\kappa_\epsilon)}{\epsilon}\right\rfloor$.
Play according to $\pi'$ as if observed the value $\bar{v_i}$.
As the actual value of $v_i$ is always at least the value $\bar{v_i}$, it holds that $\utili{\inst}{\pi} \geq \utili{\inst^\epsilon}{\pi'}$. The ``furthermore'' part follows immediately by the structure of strategy $\pi$. 
\end{proof}

\paragraph{Transformation 2:}
Transformation $\transformber$ receives an instance $\inst=(D_1,\ldots,D_n,c)$ with distributions with finite supports, and returns a
Bernoulli
instance by the following process: 
We can assume without loss of generality that $0$ is in the union of the supports $\supp$, then, we can rename the elements of the union of the supports in an increasing order $\supp=\{v_1,\ldots,v_m\}$, where $0=v_1<v_2<\ldots<v_m$.
For every $i\in[n]$ and $j\in[m]$, let $D_{i,j}$ be the weighted Bernoulli distribution that returns the value $v_j$ with probability $\frac{\P{V_i =v_j}}{\P{V_i\leq v_j}}$, and $0$ otherwise (where $\frac{0}{0}$ is interpreted as $0$). Let $c':2^{[n]\times [m]}  \rightarrow \mathbb{R}_{\geq 0}$ be the cost function where for every $S\subseteq [n] \times  [m]$, $$c'(S):= c(\{i \mid \exists j\in [m]\mbox{ such that } (i,j)\in S \}).$$ 
Then $\transformber(\inst)=(D_{1,1},\ldots,D_{n,m},c')$.
One can easily verify that $\transformber$ maintains monotonicity and normalization of the cost function. The following claim shows that it also maintains submodularity of the cost function.
\begin{proposition}
\label{prop:maintain-sm}
If $c$ is submodular, then $c'$ obtained by transformation $\transformber$ is also submodular. 
\end{proposition}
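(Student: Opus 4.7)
The plan is to prove submodularity of $c'$ directly from the definition, by showing that for every $A \subseteq B \subseteq [n]\times[m]$ and every $(i,j) \in ([n]\times[m]) \setminus B$, the inequality $c'((i,j) \mid B) \le c'((i,j) \mid A)$ holds. The key observation is that $c'$ is essentially a pullback of $c$ through the projection map $\pi : 2^{[n]\times[m]} \to 2^{[n]}$ defined by $\pi(S) = \{i \mid \exists j,\,(i,j) \in S\}$; that is, $c'(S) = c(\pi(S))$. Since $\pi$ is monotone and $A \subseteq B$ implies $\pi(A) \subseteq \pi(B)$, the argument reduces to tracking what happens to $\pi$ under the addition of a single pair $(i,j)$.

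I would split the argument into three cases according to the ``index layer'' $i$ of the added element. First, if $i \in \pi(A)$, then $\pi(A \cup \{(i,j)\}) = \pi(A)$ and similarly for $B$, so both marginals $c'((i,j) \mid A)$ and $c'((i,j) \mid B)$ are zero, and the inequality is trivial. Second, if $i \in \pi(B) \setminus \pi(A)$, then $c'((i,j) \mid B) = 0$ whereas $c'((i,j) \mid A) = c(\pi(A) \cup \{i\}) - c(\pi(A)) \ge 0$ by monotonicity of $c$, so the inequality holds. Third, if $i \notin \pi(B)$ (and hence not in $\pi(A)$ either), then both marginals can be rewritten in terms of $c$ as $c(\pi(A) \cup \{i\}) - c(\pi(A))$ and $c(\pi(B) \cup \{i\}) - c(\pi(B))$ respectively, and the desired inequality is exactly the defining submodularity inequality for $c$ applied to the chain $\pi(A) \subseteq \pi(B)$ with the new element $i$.

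I do not anticipate any serious obstacles here: the transformation $\transformber$ does nothing more than split each box into Bernoulli ``layers'' that collapse back to the same original index under the cost function, so the combinatorial structure of $c$ is preserved up to this projection. The only mildly delicate point is making sure the second case, where $i$ lies in $\pi(B)$ but not in $\pi(A)$, is handled correctly, since there $c'$ behaves asymmetrically with respect to $A$ and $B$; but monotonicity of $c$ disposes of it immediately. Overall the proof is a routine case analysis of the projection, and can be written concisely in a few lines.
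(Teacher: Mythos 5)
Your proposal is correct and follows essentially the same route as the paper: observe that $c'$ is the pullback of $c$ along the projection $S \mapsto \{i : \exists j,\,(i,j)\in S\}$ and then do a case analysis on whether the new index $i$ already appears in the projected sets. Your three-way split is in fact slightly more careful than the paper's two-way split, since you explicitly isolate the case $i \in \pi(B)\setminus\pi(A)$ and invoke monotonicity there, whereas the paper folds that subcase into its ``else'' branch and attributes the inequality to submodularity alone (where monotonicity of $c$ is what is really used when $i\in A_T$).
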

\begin{proof}
We prove that for any pair of sets $S,T$ such that $S \subseteq T \subseteq [n] \times [m]$, and any  pair $(i,j) \notin T$, it holds that $c'(\{(i,j)\}\cup S) - c'(S) \geq c'(\{(i,j)\}\cup T) - c'(T) $.
Let $A_S := \{i' \mid \exists j'\in [m]\mbox{ such that } (i',j')\in S \}$, $A_T := \{i' \mid \exists j'\in [m]\mbox{ such that } (i',j')\in T \}$, and note that $A_S \subseteq A_T$.
If $i\in A_S$, then it holds that $c'(\{(i,j)\}\cup S) = c'(S) $ and $c'(\{(i,j)\}\cup T) = c'(T) $, thus $c'(\{(i,j)\}\cup S) - c'(S) =0=  c'(\{(i,j)\}\cup T) - c'(T) $.
 Else, it holds that $c'(\{(i,j)\}\cup S) - c'(S) =c(A_S\cup \{i\}) -c(A_s) \geq c(A_T\cup \{i\})-c(A_T)=  c'(\{(i,j)\}\cup T) - c'(T) $, where the inequality follows by submodularity of $c$.
\end{proof}

In Appendix~\ref{app:reduction}, we show that $\transformber$ maintains also MRF (Claim~\ref{cl:mrf}), GS (Claim~\ref{cl:gs}), coverage (Claim~\ref{cl:coverage}), XOS (Claim~\ref{cl:xos}) and subadditivity (Claim~\ref{cl:subadditive}) of the cost function, but not budget additive (Claim~\ref{cl:budget}). 

We next show that the new instance  $\inst'=\transformber(\inst)$ is equivalent to $\inst$ in the following sense:

\begin{restatable}{proposition}{transtwo}
\label{prop:trans2}
For every instance $\inst$, let $\inst'= \transformber(\inst)$. Then:
\begin{enumerate}
    \item    For every strategy $\pi$ on instance $\inst$, there exists a strategy $\pi'$ on instance $\inst'$ such that  $\utili{\inst}{\pi} \leq \utili{\inst'}{\pi'}$.
    \item     For every strategy $\pi'$ on instance $\inst'$, there exists a strategy $\pi$ on instance $\inst$ such that  $\utili{\inst}{\pi} \geq \utili{\inst'}{\pi'}$.
    Furthermore, if $\pi'$ is impulsive, then there exists such a fixed-order strategy $\pi$. 
\end{enumerate}
\end{restatable}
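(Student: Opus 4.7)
The plan is to prove both parts via an explicit coupling between the random variables of $\inst$ and those of $\inst'$. For each box $i$, sample independent Bernoullis $U_{i,j}$ for $j\in[m]$ with $\P{U_{i,j}=1}=\P{V_i=v_j}/\P{V_i\le v_j}$, and set $V_{i,j}:=v_j\cdot U_{i,j}$ and $V_i:=\max\{v_j:U_{i,j}=1\}$ (with the convention $\max\emptyset=0$). A short telescoping computation verifies that $V_i$ recovers its correct marginal distribution, so this is a valid joint coupling of the two instances; the crucial property is that $V_{i,j}\le V_i$ holds deterministically under it.

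For part~1, given a strategy $\pi$ for $\inst$, I would construct $\pi'$ for $\inst'$ that simulates $\pi$ as follows: whenever $\pi$ opens a box $i$, $\pi'$ opens the copies $(i,m),(i,m-1),\ldots,(i,1)$ in decreasing order of $j$ and halts this inner loop at the first non-zero observation (or after opening all copies). Under the coupling, the first non-zero value encountered is $v_{j^*}$ with $j^*=\max\{k:U_{i,k}=1\}$, which is precisely $V_i$; if no such $j^*$ exists then $V_i=0$. Hence $\pi'$ can feed $V_i$ into $\pi$'s internal logic, so the two strategies yield identical rewards under the coupling; moreover $A_{S(\pi')}=S(\pi)$ gives $c'(S(\pi'))=c(S(\pi))$, so $\utili{\inst'}{\pi'}=\utili{\inst}{\pi}$.

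For part~2, given $\pi'$ for $\inst'$, I would construct $\pi$ for $\inst$ that simulates $\pi'$ in reverse: when $\pi'$ is about to query $(i,j)$, if $\pi$ has not yet opened $i$ it opens $i$, observes $V_i=v_{j^*}$, and then samples the $U_{i,k}$ conditionally on this observation (namely $U_{i,k}=0$ for $k>j^*$, $U_{i,j^*}=1$, and fresh independent Bernoullis for $k<j^*$); the resulting $V_{i,j}$ is fed back into the simulation. Consequently $S(\pi)=A_{S(\pi')}$ so costs match exactly, and because $V_{i,j}\le V_i$ the reward of $\pi$ dominates that of $\pi'$ pointwise under the coupling, yielding $\utili{\inst}{\pi}\ge\utili{\inst'}{\pi'}$; taking the best deterministic strategy in the support of this randomized $\pi$ preserves the inequality.

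For the ``furthermore'' part, let $\pi'$ be impulsive with predetermined query sequence $((i_1,j_1),\ldots,(i_\ell,j_\ell))$. Define $\pi$'s fixed order to be the distinct boxes $i'_1,\ldots,i'_k$ in order of first appearance in $i_1,\ldots,i_\ell$. After $\pi$ opens $i'_t$, let $T_t$ denote the largest $s$ with $\{i_1,\ldots,i_s\}\subseteq\{i'_1,\ldots,i'_t\}$; $\pi$ halts iff the simulation of $\pi'$ restricted to its first $T_t$ queries (using the conditional sampling rule above) already halts. Since those first $T_t$ queries involve only boxes $\pi$ has already opened, the halting decision depends only on observed values together with fresh randomness, so $\pi$ is a randomized fixed-order strategy. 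If $\pi'$ halts at step $s$ with reward $v_{j_s}$, the coupling forces $U_{i_s,j_s}=1$ and hence $V_{i_s}\ge v_{j_s}$ with $i_s\in S(\pi)$, so $\pi$'s reward dominates; if $\pi'$ never halts, $\pi$ opens all $k$ boxes and dominates trivially. Derandomizing then yields the desired deterministic fixed-order $\pi$. The main obstacle is precisely this last argument: one must verify that the simulation-driven stopping rule for $\pi$ is genuinely fixed-order, which hinges on $T_t$ being defined so that the simulated queries refer only to boxes $\pi$ has already opened.
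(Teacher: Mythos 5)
Your proposal is correct and follows essentially the same simulation argument as the paper: for part~1 you open the copies of each box and infer $V_i$ from the first nonzero (the paper opens all copies, yours stops early, but costs and inferred values are identical), and for part~2 you open box $i$ upon the first query to any $(i,j)$ and conditionally resample the remaining Bernoullis, which is exactly the paper's ``draw a sample $s_{i,j}$'' step. The one cosmetic difference is that you formalize the construction as an explicit coupling $(U_{i,j})_{i,j}$ with $V_i=\max\{v_j:U_{i,j}=1\}$, whereas the paper only checks that the simulated marginals come out right; your phrasing makes the pointwise reward domination $V_{i,j}\le V_i$ and the cost identity $S(\pi)=A_{S(\pi')}$ a bit more transparent, and your explicit $T_t$ bookkeeping for the ``furthermore'' part spells out the detail the paper leaves to the reader (that an impulsive $\pi'$ has a fixed query tuple, so the induced order on original boxes is fixed and each halting decision only consults already-opened boxes plus fresh coins).
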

\begin{proof}
For the first part of the proposition, consider the following $\pi'$ that simulates $\pi$: every time $\pi$ opens a box $i$, the strategy $\pi'$ opens the corresponding set of boxes $\{(i,j)\}_j$ that are created by the transformation (in arbitrary order). Let $\bar{u}_{i,j}$ be the realized value from box $(i,j)$, then $\pi'$ behaves as if $\pi$ observed the value $u_i=\max_{j} \bar{u}_{i,j}$. The distribution of $u_i$ is exactly  $D_i$ since 
\begin{align*}
\P{u_i=v_j } &= \frac{\P{V_i =v_j}}{\P{V_i\leq v_j}} \cdot \prod_{k>j} \left( 1-\frac{\P{V_i =v_k}}{\P{V_i\leq v_k}} \right) =\P{V_i=v_j} \cdot \frac{\prod_{k>j} \P{V_i<v_k}}{\prod_{k\geq j} \P{V_i \leq v_k}} \\ &= \P{V_i=v_j} \cdot \frac{\prod_{k>j} \P{V_i<v_k}}{\P{V_i\leq v_m} \cdot \prod_{k> j} \P{V_i < v_k}} = \P{V_i=v_j}, 
\end{align*}
where the third equality is since $\P{V_i\leq v_{k}} = \P{V_i <v_{k+1}}$.
Thus, the strategy guarantees the same expected utility (as both the distributions of the costs and the values are the same in $\pi$ and $\pi'$).
For the second part of the proposition, given a strategy $\pi'$, consider the strategy $\pi$ that simulates $\pi'$ by the following process: Whenever $\pi'$ tries to open $D_{i,j}$, if $D_i$ was not already open, then open $D_i$ (otherwise don't open anything). 
If the value $v_k$ was observed from $D_i$ and $k=j$ then  $\pi$ behaves as if the value $v_j$ was observed from $D_{i,j}$. If $k<j$  then  $\pi$ behaves as if the value $0$ was observed from $D_{i,j}$. 
Otherwise ($k>j$) then $\pi$ draws a sample  $s_{i,j}$ from $ D_{i,j}$ and behaves as if this value was observed.
The probability overall that 
 $\pi$  simulates that $ D_{i,j}$ was non-zero is 
 \[
    \P{V_i=v_j}+\P{V_i > v_j} \cdot \frac{\P{V_i=v_j}}{\P{V_i \leq v_j}}  = \frac{\P{V_i=v_j}}{\P{V_i \leq v_j}}.
 \]
 The cost of $\pi$ is always the same as that of $\pi'$, but its value can only be larger 
 (since $\pi$ never pretends to see a larger value than what it actually observed).
 The ``furthermore" part follows by observing that box $i$ is opened when the first box of the form $(i,j)$ is supposed to be opened by $\pi'$. Thus, box $i$ is opened before box $i'$ if and only if the first copy of $i$  is opened before the first copy of $i'$ in  the instance $\inst'$.
\end{proof}
In Section \ref{sec:MTT} we showed that for weighted Bernoulli instances with submodular costs, there exists an optimal strategy that is impulsive. We next show that this implies our main theorem: 
\begin{theorem}
\label{thm:main}
For every instance $\inst=(D_1,\ldots,D_n,c)$ where $c$ is  submodular, there exists an optimal strategy that is a fixed order strategy with thresholds. 
\end{theorem}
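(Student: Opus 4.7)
The plan is to chain together the two transformations $\transformeps{\epsilon}$ and $\transformber$ with Theorem~\ref{thm:MTT_Submodular} to obtain, for each $\epsilon>0$, a fixed-order strategy on $\inst$ whose utility is within $2\epsilon$ of optimal. A pigeonhole argument on the finite set of orderings then yields a single ordering whose associated optimal fixed-order strategy matches the overall optimum, and Observation~\ref{obs:fix-thresholds} upgrades it to a fixed-order strategy with thresholds.

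Concretely, fix $\epsilon>0$ and let $\inst^\epsilon := \transformeps{\epsilon}(\inst)$ and $\inst'' := \transformber(\inst^\epsilon)$. Since $\transformeps{\epsilon}$ leaves the cost function untouched and $\transformber$ preserves submodularity (Proposition~\ref{prop:maintain-sm}), the cost function of $\inst''$ is submodular. Because $\inst''$ is a Bernoulli instance, Theorem~\ref{thm:MTT_Submodular} supplies an optimal strategy $\pi''$ for $\inst''$ that is impulsive. Let $\mathrm{OPT}(\cdot)$ denote the optimal utility of an instance. Applying part~1 of Proposition~\ref{prop:trans1} and then part~1 of Proposition~\ref{prop:trans2} to an optimal strategy for $\inst$ yields
\[
\mathrm{OPT}(\inst'') \;\geq\; \mathrm{OPT}(\inst^\epsilon) \;\geq\; \mathrm{OPT}(\inst) - 2\epsilon,
\]
so $\util_{\inst''}(\pi'') \geq \mathrm{OPT}(\inst)-2\epsilon$. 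Now apply part~2 of Proposition~\ref{prop:trans2}: since $\pi''$ is impulsive, we obtain a fixed-order strategy $\pi^\epsilon$ on $\inst^\epsilon$ with $\util_{\inst^\epsilon}(\pi^\epsilon) \geq \util_{\inst''}(\pi'')$. Finally, by part~2 of Proposition~\ref{prop:trans1}, since $\pi^\epsilon$ is fixed-order, we obtain a fixed-order strategy $\pi_\epsilon$ on $\inst$ with $\util_{\inst}(\pi_\epsilon) \geq \util_{\inst^\epsilon}(\pi^\epsilon) \geq \mathrm{OPT}(\inst) - 2\epsilon$.

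To conclude, let $\sigma_\epsilon$ denote the permutation underlying $\pi_\epsilon$. Since there are only $n!$ possible orderings, there exist a permutation $\sigma^*$ and a sequence $\epsilon_k \downarrow 0$ with $\sigma_{\epsilon_k} = \sigma^*$ for every $k$. For each $k$, the optimal strategy among those that inspect the boxes in the fixed order $\sigma^*$ attains utility at least $\util_{\inst}(\pi_{\epsilon_k}) \geq \mathrm{OPT}(\inst) - 2\epsilon_k$, hence its utility equals $\mathrm{OPT}(\inst)$. By Observation~\ref{obs:fix-thresholds}, this optimal strategy with fixed order $\sigma^*$ is a fixed-order strategy with thresholds, completing the proof.

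The only subtle point is ensuring that ``fixed-order'' really carries through each transformation intact: this is exactly the ``furthermore'' clauses of Propositions~\ref{prop:trans1} and~\ref{prop:trans2}, which we invoke above. The pigeonhole step is needed because the approximating strategies depend on $\epsilon$ and we require a single strategy attaining the optimum, not a sequence approaching it.
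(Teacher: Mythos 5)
Your proof is correct, and it rests on exactly the same engine as the paper's: chain $\transformeps{\epsilon}$ and $\transformber$ with Theorem~\ref{thm:MTT_Submodular}, using the ``furthermore'' clauses of Propositions~\ref{prop:trans1} and~\ref{prop:trans2} to carry a fixed-order strategy back to $\inst$ with utility at least $\mathrm{OPT}(\inst)-2\epsilon$. Where you diverge is only in the concluding step. The paper argues by contradiction: it posits an optimal strategy $\piStar$ and an optimal fixed-order strategy $\pi$, assumes a strict gap, sets $\epsilon$ equal to a quarter of that gap, and runs the chain once to contradict the optimality of $\pi$. You instead run the chain for a sequence $\epsilon_k \downarrow 0$, observe there are only $n!$ orderings, pigeonhole to get one ordering $\sigma^*$ that works for infinitely many $\epsilon_k$, and pass to the limit. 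Both arguments are sound and roughly the same length. Your version has the modest advantage that it does not presuppose that the supremum utility over unrestricted strategies is attained by some strategy $\piStar$ (you only need $\mathrm{OPT}(\inst)$ as a supremum, plus attainment of the best fixed-order-with-$\sigma^*$ strategy, which Observation~\ref{obs:fix-thresholds} constructs by backward induction); the paper's contradiction framing quietly assumes that attainment. You also explicitly invoke Observation~\ref{obs:fix-thresholds} to get the ``with thresholds'' part of the statement, which the paper leaves implicit.
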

\begin{proof}
  Let $
\piStar$ be an optimal strategy for $\inst$ and let $\pi$ be the optimal strategy for $\inst$ among the strategies with a fixed order.
Assume towards contradiction that $\utili{\inst}{\piStar} > \utili{\inst}{\pi}$.
Let $\epsilon= \frac{\utili{\inst}{\piStar} -\utili{\inst}{\pi}}{4}$, and let $\inst^\epsilon=\transformeps{\epsilon}(\inst)$. 
By Proposition~\ref{prop:trans1}, there exists $\pi_1$ such that $\utili{\inst}{\piStar} \leq \utili{\inst^\epsilon}{\pi_1} +2\epsilon$.
Let $\inst'=\transformber(\inst^\epsilon)$.
Then, by proposition~\ref{prop:trans2} there exists $\pi_2$ such that  $\utili{\inst^{\epsilon}}{\pi_1} \leq \utili{\inst'}{\pi_2}$.
By Theorem~\ref{thm:MTT_Submodular} there exists  an impulsive strategy $\pi_3$ such that $\utili{\inst'}{\pi_2} \leq \utili{\inst'}{\pi_3}$.
By Proposition~\ref{prop:trans2} there exists a fixed-order  $\pi_4$ such that  $\utili{\inst^{\epsilon}}{\pi_4} \geq \utili{\inst'}{\pi_3}$, and by Proposition~\ref{prop:trans1} bthere exists a fixed-order  $\pi_5$ such that  $\utili{\inst}{\pi_5} \geq \utili{\inst^{\epsilon}}{\pi_4}$.
All together we have:
$$
\utili{\inst}{\pi_5} \geq \utili{\inst^{\epsilon}}{\pi_4} \geq \utili{\inst'}{\pi_3}\geq \utili{\inst'}{\pi_2} \geq \utili{\inst^{\epsilon}}{\pi_1} \geq  \utili{\inst}{\piStar} -2\epsilon  > \utili{\inst}{\pi},
$$ which contradicts the assumption that $\pi$ is  the optimal fixed-order strategy.
\end{proof}

\section{Computational Results}
\label{sec:computational}

    In this section we show that the task of finding an optimal strategy for Pandora's problem with submodular costs does not admit a polynomial time algorithm.  In fact, we show a stronger result, namely that there exists no algorithm for the Pandora's {\em decision problem} that uses a polynomial number of cost queries. 
    An algorithm for the Pandora's decision problem takes as input an instance of the Pandora's problem with a combinatorial cost function and outputs whether there exists a strategy yielding strictly positive utility on that instance.\footnote{
    We remark that even a demand oracle to the cost function, in the sense of \citet{blumrosen07}, would not allow us to solve the decision problem with polynomially many queries. The reason is that our impossibility result already holds for matroid rank functions, a strict subclass of gross substitutes, for which a demand query can be simulated by polynomially many cost queries; see \Cref{app:combinatorial}
    for definitions of gross substitutes and matroid rank functions.}

    \subsection{Distinguishing Submodular Functions}

        To formalize our argument we use the notion of distinguishability of submodular functions, as it is introduced in \citet{SvitkinaF11}. 
        We say that an algorithm distinguishes between two cost functions $c_1$ and $c_2$ if it produces different outputs when given  oracle access to $c_1$ versus oracle access to $c_2$. Here, we construct a family of cost functions and a baseline cost function that are hard to distinguish using polynomially many cost queries, similarly to the construction of \citet{SvitkinaF11}. Let $X$ be a set of $n$ boxes, and let $\alpha = \left\lceil \ln n \cdot \frac{ \sqrt{n}}{5}\right\rceil$ and $\beta = \left\lceil\frac {\ln^2 n}5\right\rceil$. On this set of boxes we define a ``baseline'' cost function $c_0(S) = \min\{|S|, \alpha\}$. Then, for any subset $R\subseteq X$ of boxes with $|R| = \alpha$, we define the cost function $c_R$:
        \begin{equation}\label{eq:c_R}
            c_R(S) = \min\{|S|, \alpha, \beta + |S \cap R^C|\}.
        \end{equation}
        It is immediate to see that $c_0$ and $c_R$ are submodular and differ on sets $S$ such that $\beta + |S \cap R^C|$ is strictly smaller than $\min\{\alpha,|S|\}$. Consider now a random set $\mathcal R$ that is drawn uniformly at random from all the subsets of $X$ of cardinality $\alpha$. It is possible to show that no deterministic algorithm can distinguish $c_R$ (for a random set $R\sim \mathcal R$) from $c_0$, with high probability. 
        We formalize this result in the following theorem.
        \begin{theorem}
        \label{thm:distinguish}
            Let $\mathcal A$ be any deterministic algorithm
            that has a cost oracle access to a submodular function $c_{R} \sim c_{\mathcal R}$ over a set $X$ of $n$ elements, which outputs a set $S\subseteq X$ using polynomially many cost queries. Then, for any sufficiently large $n$, 
            \[
                \P{\mathcal A \text{ distinguishes $c_{\mathcal R}$ from $c_0$}} \le n^{-1}.
            \]
        \end{theorem}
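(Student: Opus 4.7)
The plan is to follow the classical Svitkina--Fleischer distinguishability template. Fix any deterministic algorithm $\mathcal A$ issuing at most $m = \text{poly}(n)$ cost queries. Since $c_0$ does not depend on $R$, running $\mathcal A$ against the $c_0$-oracle produces a fixed sequence of queries $S_1,\ldots,S_m$. A short induction on the query index shows that if $c_R(S_i) = c_0(S_i)$ for every $i$, then $\mathcal A$ follows the same trajectory on the $c_R$-oracle and produces the same output. Consequently, the event that $\mathcal A$ distinguishes is contained in $\bigcup_{i=1}^m \{c_R(S_i) \neq c_0(S_i)\}$, and the proof reduces to bounding $\P{c_R(S) \neq c_0(S)}$ for a single fixed query $S$.

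Inspecting \eqref{eq:c_R} shows that $c_R(S) \neq c_0(S)$ if and only if $|S \cap R| > t$, where $t := \beta + \max\{0, |S| - \alpha\}$. Since $R$ is drawn uniformly from subsets of $X$ of size $\alpha$, the intersection $|S \cap R|$ is hypergeometric with mean $\mu := |S|\alpha/n$. I would next verify that the ratio $\mu/t$ is maximized at $|S| = \alpha$: it is linear and increasing on $[1,\alpha]$, and a one-line derivative computation exploiting $\beta < \alpha$ shows it is decreasing on $[\alpha,n]$. At the maximum, $\mu/t = \alpha^2/(n\beta)$, which tends to $1/5$ by the choices $\alpha = \lceil \ln n \cdot \sqrt{n}/5 \rceil$ and $\beta = \lceil \ln^2 n / 5 \rceil$, so for sufficiently large $n$ one has $\mu/t \leq 1/4$ uniformly in $|S|$.

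I would conclude by applying the multiplicative Chernoff tail $\P{|S \cap R| \geq t} \leq (e\mu/t)^t$, which is valid for hypergeometric variables via Hoeffding's reduction to the binomial case. The previous paragraph gives $(e\mu/t)^t \leq (e/4)^t \leq e^{-\Omega(\beta)} = n^{-\Omega(\ln n)}$ uniformly across queries, and a union bound over the $m = n^{O(1)}$ queries yields $\P{\mathcal A \text{ distinguishes } c_{\mathcal R} \text{ from } c_0} \leq m \cdot n^{-\Omega(\ln n)} \leq n^{-1}$ for $n$ large enough. I expect the only real obstacle to be the uniform control of $\mu/t$ across the whole range of query sizes; in particular one might naively worry about the regime $|S| \approx n$ where $|S \cap R|$ is forced to be close to its maximum, but the monotonicity argument above encapsulates precisely this concern, and the subsequent concentration step and union bound become routine once $\alpha$ and $\beta$ are verified to push $\mu/t$ strictly below $1/e$.
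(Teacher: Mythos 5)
Your proof is correct and follows the same Svitkina--Fleischer distinguishability blueprint as the paper: fix the single computation path of the deterministic algorithm under $c_0$, reduce to bounding $\P{c_R(S)\neq c_0(S)}$ for one fixed query $S$, apply a Chernoff-type concentration bound, and union-bound over the polynomially many queries. Where you diverge is in how the single-query probability is controlled, and your route is arguably cleaner on two counts. First, instead of the paper's monotonicity argument reducing attention to queries of size exactly $\alpha$, you parametrize the disagreement threshold as $t=\beta+\max\{0,|S|-\alpha\}$ and show the ratio $\mu/t$ is maximized at $|S|=\alpha$, which handles all query sizes uniformly. Second, the paper sidesteps the dependence structure of the hypergeometric $|S\cap R|$ by comparing to an independent Bernoulli sample $R'$, paying an $n^2$ factor for conditioning on $|R'|=\alpha$; you instead invoke Hoeffding's classical reduction showing that Chernoff tails for sampling without replacement are no worse than for sampling with replacement, which applies the bound directly and avoids that factor. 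The quantitative outcome is the same in both cases ($n^{-\Omega(\ln n)}$ per query, beating any polynomial union bound for large $n$), and both calculations correctly verify that the design choices $\alpha=\lceil\ln n\cdot\sqrt{n}/5\rceil$, $\beta=\lceil\ln^2 n/5\rceil$ push $\mu/t$ to roughly $1/5$, safely below $1/e$. No gaps.
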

        \begin{proof}
        First, we show that for any deterministic set $S$, the event that $c_0(S) \neq c_R(S)$ is negligible, with respect to the random draw of $R \sim \mathcal R$. 
        We observe that the probability of this event is maximized for sets $S$ of cardinality $\alpha$, so it suffices to restrict attention to such sets. 
        To see this, 
        we consider two cases. If $|S| \ge \alpha$, then $c_0(S) \neq c_R(S)$ if and only if $|S \cap R^C| < \alpha - \beta$, which is more likely to hold for small $|S|$. Conversely, if $|S| \le \alpha$, then $c_0(S) \neq c_R(S)$ when $|S|> |S \cap R^C| + \beta$, which is equivalent to $|S\cap R| > \beta$; it is easy to see that the latter condition is more likely when $|S|$ is large.
        Moreover, for $S$ of cardinality $\alpha$, the function $c_0$ and the realized function $c_R$ disagree if and only if $|S \cap R| > \beta$.

        To simplify the calculations, consider a set $R'$ that is obtained independently from $R$, sampling each element with probability $\alpha/n$. We have:
        \begin{align*}
            \P{|S \cap R'| > \beta} &= \sum_{k=0}^n \P{|R'|=k} \P{|S \cap R'| > \beta \mid |R'|=k} \\
            &\ge \P{|R'|=\alpha} \P{|S \cap R'| > \beta \mid |R'|=\alpha}\\
            &\ge \frac{1}{n^2}\P{|S \cap R'| > \beta \mid |R'|=\alpha},
        \end{align*}
        where the last inequality follows from the fact that $|R'|$ can attain $n+1 < n^2$ different values and $|R'| = \alpha$ is the most likely of them.
        We can use this argument on $R'$ to upper bound the probability of the event that ${|S \cap R| > \beta}$ in a simpler way:
        \begin{align*}
            \P{|S \cap R| > \beta} &= \P{|S \cap R'| > \beta| |R'|=\alpha} \le n^2 \cdot \P{|S \cap R'| > \beta}.
        \end{align*}
        We can now 
        focus on the term in the right-hand side, which is more amenable to Chernoff bound.
        The expected cardinality of $S \cap R'$ (with respect to the random choice of $R'$) is $\mu = \frac{\alpha \cdot |S|}{n} = \frac{\alpha^2}{n}$, while $\beta = 5 \mu,$ 
        we have:
        \[
            \P{|S \cap R'| > \beta} < \left(\frac{e^{\delta}}{(1+\delta)^{\delta}}\right)^\mu = \left(\frac{e^{4}}{25}\right)^{\alpha^2/n} \le 0.851^{\alpha^2/n}
        \]
        It follows that for all $S$ such that $|S| = \alpha$, it holds that
        \[
            \P{|S \cap R| > \beta} \le n^2 \cdot 0.851^{\alpha^2/n}.
        \]
        Consider now any deterministic algorithm $\mathcal A$ that performs at most polynomially many cost queries, and the computation path it follows when receiving as input the baseline cost function $c_0$. Both $\mathcal A$ and $c_0$ are deterministic, thus this is a single computation path. 
        Along this path, the algorithm performs at most a polynomial number of cost queries (say at most $n^a$, for some constant $a$), and each cost query distinguishes $c_0$ from $c_R$ with probability at most $n^2 \cdot 0.851^{\alpha^2/n}$ (as we have shown in the first part of the proof), thus by the union bound we get: 
        \[
            \P{\text{$\mathcal A$ distinguished $c_0$ from $c_{\mathcal R}$}} \le n^{a+2}\cdot 0.851^{\alpha^2}\le n^{a + 2 + \ln(0.851)\ln n}<\frac 1n,
        \]
        where the last inequality holds for all $n>\frac{a+3}{-\ln(0.851)}$. 
        The latter condition specifies what we mean by ``sufficiently large $n$'' in the statement of the theorem: for any fixed algorithm $\mathcal A$ that performs $O(n^a)$ queries, taking $n>\frac{a+3}{-\ln(0.851)}$ gives the desired claim for $\mathcal A.$
    \end{proof}


    Note that, by the proof of Theorem~\ref{thm:distinguish},
        $n^{-1}$ as a bound on the probability 
        that $\mathcal A$ distinguishes $c_{\mathcal R}$ from $c_0$ can be replaced by $n^{-b}$ for any constant $b$. The corresponding ``sufficiently large $n$'' condition would then be $n>\frac{a+2+b}{-\ln(0.851)}$.
    

    \subsection{A Family of Difficult Instances}

        As we show next, the family of submodular cost functions introduced above induces a family of instances of Pandora's problem such that $(i)$ the baseline instance admits no strategy that gives positive utility, and
        $(ii)$ every other instance in the family admits a strategy obtaining positive utility. 
        
        Formally, fix any large enough $n$ and consider the following class of instances of Pandora's Problem with submodular cost functions: there is a set $X$ of $n$ boxes
        with i.i.d. values distributed according to the following weighted Bernoulli distribution: 
        the value of every box in $X$ is $M = 5 \beta>0$ with probability $p= \frac 1\alpha$, and $0$ otherwise.
        For each $R\subseteq X$, with $|R|=\alpha$, we define the instance $\mathcal{I}_R$ with the above random values and the cost function $c_R$ that is given in Equation~\eqref{eq:c_R}. Moreover,  we construct the baseline instance $\mathcal I_0$ using the same random variables, but with cost function $c_0$. There is a crucial difference between $\cI_0$ and $\cI_R$: With $c_R$ it is possible to find a subset of $\alpha$ boxes such that only the first $\beta$ of them have non-zero marginal cost, while this is impossible under $c_0$. With our choice of $M$ and $p$ it is possible to leverage this property and show the following Lemma.
        
        \begin{lemma}
        \label{lem:family}
            For any sufficiently large $n$, no strategy extracts positive utility from $\mathcal I_0$, while for any $R$ there exists a strategy that extracts positive utility from $\mathcal I_R$.
        \end{lemma}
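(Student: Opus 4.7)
The plan is to prove the two assertions separately. For $\cI_R$, I exhibit a concrete impulsive strategy with strictly positive utility; for $\cI_0$, I argue that every strategy has non-positive utility.

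For $\cI_R$: consider the impulsive strategy $\pi_R$ that inspects the $\alpha$ boxes of $R$ in arbitrary fixed order. Any opened set $S\subseteq R$ satisfies $|S\cap R^C|=0$, so $c_R(S)=\min\{|S|,\alpha,\beta\}\le\beta$ (using $\beta<\alpha$ for large $n$), and the expected cost is at most $\beta$. The probability that at least one of the $\alpha$ inspected boxes realizes the value $M$ is $1-(1-1/\alpha)^\alpha\ge 1-1/e$, so the expected reward is at least $M(1-1/e)=5\beta(1-1/e)$. Subtracting, the expected utility is at least $\beta\bigl(5(1-1/e)-1\bigr)=\beta(4-5/e)>0$.

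For $\cI_0$: I first reduce the search space of strategies. Since each box takes values in $\{0,M\}$ and the reward is the maximum, once any value $M$ is observed the final reward is already determined and any further inspection only adds cost; hence an optimal strategy halts upon the first non-zero value. Since the boxes are i.i.d., the identity of the box opened at each step is irrelevant, and the decision whether to continue after having opened $k$ boxes of value $0$ can be chosen as a function of $k$ alone. Together with the standard fact that randomization does not help against a deterministic cost, every strategy is therefore dominated by a \emph{threshold} strategy $\pi_K$ that opens fresh boxes sequentially, halting at the first non-zero value or after $K$ inspections, for some $K\in\{0,1,\dots,n\}$. A direct computation with $p=1/\alpha$, $q=1-p$, and $T_K$ the number of boxes opened, gives expected reward $M(1-q^K)$ and expected cost $\sum_{j=1}^{\min(K,\alpha)}q^{j-1}=\alpha(1-q^{\min(K,\alpha)})$, hence
\[
U(K)=\begin{cases}(M-\alpha)(1-q^K), & K\le\alpha,\\[3pt] M(1-q^K)-\alpha(1-q^\alpha), & K>\alpha.\end{cases}
\]
For $K\le\alpha$ we have $U(K)\le 0$ because $M=5\beta<\alpha$ for large $n$. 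For $K>\alpha$, $U(K)$ is non-decreasing in $K$, so $U(K)\le U(n)\le M-\alpha(1-1/e)\le 5\beta-(1-1/e)\alpha$ using the standard inequality $(1-1/\alpha)^\alpha\le 1/e$. Plugging in $5\beta=\Theta(\ln^2 n)$ and $\alpha=\Theta(\sqrt n\ln n)$, this upper bound is negative for all sufficiently large $n$, so $U(K)\le 0$ for every $K$.

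I expect the main obstacle to be the reduction to threshold strategies in $\cI_0$: making the ``halt at first hit, ignore box identities'' argument fully rigorous for arbitrary adaptive (and possibly randomized) strategies, rather than leaving it as a symmetry hand-wave. Once that reduction is in place, the rest of the proof is a one-parameter maximization of $U(K)$ together with an elementary asymptotic estimate.
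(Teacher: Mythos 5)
Your proof is correct and follows the same high-level structure as the paper's (reduce to an impulsive/threshold strategy, analyze utility as a function of the number of boxes $K$), but your case analysis of $\cI_0$ is genuinely cleaner.

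For the $\cI_R$ direction the two proofs are identical. For $\cI_0$ the paper separately bounds four regimes, $|S|\ge\alpha$, $21\beta\le|S|<\alpha$, $0<|S|<21\beta$, and $|S|=0$, applying a Bernoulli-inequality estimate in the third regime. You instead exploit that $p=1/\alpha$ makes the geometric cost sum telescope to $\alpha(1-q^{\min(K,\alpha)})$, so that for $K\le\alpha$ the utility factors \emph{exactly} as $(M-\alpha)(1-q^K)$, which is non-positive as soon as $M<\alpha$; this collapses the paper's Cases 2–4 into a single line. For $K>\alpha$ you observe monotonicity in $K$ and bound $U(n)\le M-\alpha(1-1/e)$, which is what the paper's Case 1 does. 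The only point worth noting is the reduction to threshold strategies, which you rightly flag as the place that needs care: the paper grounds this step by invoking determinism from the preliminaries together with the fact that $c_0$ is submodular so Theorem~\ref{thm:MTT_Submodular} yields an impulsive optimal strategy, and then symmetry reduces the strategy space to a choice of $K$; your halt-at-first-hit plus exchangeability argument is more elementary and avoids appealing to the main structural theorem, and it can be made rigorous (after the first non-zero observation the reward is already fixed so any further opening only adds cost; by i.i.d.-ness and cardinality-dependence of $c_0$ a coupling argument shows the utility distribution depends only on how many boxes are opened). Both routes are valid; yours buys a shorter case analysis, the paper's buys a self-contained appeal to a result it has already proven.
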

        \begin{proof}
        We first establish the second part of the lemma.
            Consider any $\mathcal I_R$, and the (optimal) strategy $\pi^R$ for $\mathcal I_R$, which knows the specific set $R$. The strategy  $\pi^R$ opens the boxes in $R$ one after the other (in any order) and halts when the value $M$ is realized for the first time, and otherwise when all boxes in $R$ are exhausted.
            
            The expected reward of $\pi^R$ is computed as follows.
            The value $M$ is achieved if at least one of the $\alpha$ Bernoulli boxes in $R$ is realized, 
            yielding an expected value of $M(1-(1-p)^{\alpha})$. On the other hand, their total cost is
            at most $\beta$. 
            We get:
            \begin{align*}
                \utili{\mathcal I_R}{\pi^R}\ge
                M\left(1-\left(1-\frac{1}{\alpha}\right)^{\alpha}\right) - \beta \ge 5\beta\left(1-\frac 1e\right) - \beta > 0.
            \end{align*}

            We next establish the 
            first part of the lemma. 
            First, there exists a deterministic strategy that is optimal for $\mathcal I_0$ (see Section \ref{sec:preliminaries}), thus we restrict attention to deterministic strategies. 
            Second, since all boxes are symmetric, there exists an optimal strategy that is impulsive (see Section~\ref{sec:setup}); i.e., it commits to a subset $S$ of the boxes, and opens them sequentially in an arbitrary fixed order, until $M$ is realized (or until all boxes in $S$ have been opened).
            Note that $c_0(S)$ depends only on the cardinality of $S$, so all the orderings are equivalent. 
Let $\pi^S$ denote this strategy.

To conclude the proof, we show that for every set $S$ we have $\utili{\mathcal I_0}{\pi^S} \le 0$.       
            We distinguish between four cases, depending on the cardinality of $S$. 
            
            {\bf Case 1:} $|S| \ge \alpha$. Such a policy opens all the boxes in $S$ (with expected reward $M(1-(1-p)^{|S|})$) but pays only for the first $\alpha$ of them. Using similar reasoning as above 
            we get: 
            \begin{align*}
                \utili{\mathcal I_0}{\pi^S} &= M(1-(1-p)^{|S|}) - p \cdot\sum_{i=1}^{\alpha} i(1-p)^{i-1}  - \alpha(1-p)^{\alpha}\\
                &\le  M- \alpha\left(1-\frac 1\alpha \right)^{\alpha} \le M - \frac \alpha 4 = 5 \beta - \frac \alpha 4 < 0,
            \end{align*}
            where the last inequality follows from the fact that $\alpha > 20 \beta$
            (recall that $\alpha \in \Theta(\ln n\sqrt{n})$, while $\beta \in \Theta(\ln^2 n)$).
            
            {\bf Case 2:} $|S| \in \{21 \beta, \dots, \alpha -1 \}$. Note that for $|S| < \alpha$, the cost function $c_0$ is simply additive, thus 
            \begin{align*}
                \utili{\mathcal I_0}{\pi^S} &= M(1-(1-p)^{|S|}) - p \cdot\sum_{i=1}^{|S|} i\cdot (1-p)^{i-1}  - |S|(1-p)^{|S|}\\
                &\le M - |S|\left(1-\frac{1}{|S|}\right)^{|S|} \le M - \frac{|S|}4 \le - \frac \beta 4< 0.
            \end{align*}
            
            {\bf Case 3:} $0 < |S| < 21 \beta$. We have:
            \begin{align*}
                \utili{\mathcal I_0}{\pi^S} &= M(1-(1-p)^{|S|}) - p \cdot \sum_{i=1}^{|S|} i\cdot (1-p)^{i-1}  - |S|(1-p)^{|S|}    \\
                &\le M - (M + |S|)(1-p)^{|S|}\\
                &\le M - (M + |S|)(1-p|S|)
                \\
                &= |S| \left(pM + p|S| - 1\right)\\
                &= p|S| \left(M + |S| - \frac 1p\right)\\
                &\le \frac{|S|}{\alpha}\left(26 \beta - \alpha\right) < 0,
            \end{align*}
            where the third line uses the Bernoulli inequality, and the last two inequalities use the definitions of $\alpha$ and $\beta$, the fact that $n$ is sufficiently large, and the condition of case 3 (i.e., $0<|S|<21 \beta$). 
            
            {\bf Case 4:} $|S| = 0$. This case corresponds to the strategy that does not do anything, whose utility is clearly $0$.  
        \end{proof}
        
    \subsection{The Computational Impossibility Result}

        We are ready for the main theorem of the section: since it is not possible to distinguish in polynomial time between $c_R$ and the baseline $c_0$, then it is not possible to assess, in polynomial time, whether an instance of Pandora's problem can yield positive utility (as $\mathcal I_R$) or not (as the baseline instance $\mathcal I_0$). This immediately implies that no computationally efficient approximation for Pandora's problem with Submodular cost exists. 

        To formalize our result we introduce the concept of positivity oracle: a (possibly randomized) algorithm $\mathcal O$ is a positivity oracle for Pandora's problem with Submodular cost if it takes in input an instance $\mathcal I$ of the problem (i.e. knowledge of the distributions of the random rewards and cost oracle access to the cost function) and outputs an answer to the question whether it exists or not a strategy $\pi$ such that $\utili{\mathcal I}{\pi}>0$. We say that $\mathcal O$ is correct on instance $\mathcal I$ with a certain probability $p$ if it outputs the correct answer to Pandora's decision problem on that instance with probability at least $p$, where the probability is with respect to the internal randomization of $\mathcal O$. In other words, a positivity oracle is an algorithm for Pandora's decision problem. We say that a positivity oracle $\mathcal O$ is efficient if there exists a constant $a$ (that depends on $\mathcal O$ but not the specific instance) such that $\mathcal O$ issues at most $n^a$ cost queries with probability $1$ on all instances. 

        \begin{theorem}
        \label{thm:oracle}
            Fix any efficient positivity oracle $\mathcal O$ and positive constant $\varepsilon>0$. Then there exists an instance $\mathcal I$ on $n$ boxes (for $n$ sufficiently large) such that $\mathcal O$ outputs the correct answer on $\mathcal I$ with probability at most $0.5 + \varepsilon$. 
        \end{theorem}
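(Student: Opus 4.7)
The plan is to use a standard averaging (Yao-type) argument to convert the distinguishability bound of \Cref{thm:distinguish} together with the gap established in \Cref{lem:family} into a correctness upper bound on any efficient positivity oracle. First I fix an efficient positivity oracle $\mathcal O$ that issues at most $n^a$ cost queries on every input, and view $\mathcal O$ as a distribution over deterministic algorithms $\mathcal A$, each using at most $n^a$ cost queries. I will apply \Cref{thm:distinguish} with the parameter $b$ chosen so that $n^{-b}/2<\varepsilon$ (this $b$ depends only on $a$ and $\varepsilon$, so for all sufficiently large $n$ the distinguishability probability of any such $\mathcal A$ against $c_{\mathcal R}$ versus $c_0$ is at most $n^{-b}$).

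Next I consider the two candidate instances built in \Cref{lem:family}: the baseline instance $\mathcal I_0$ and, for each admissible $R$, the instance $\mathcal I_R$. The key observation is that for any deterministic algorithm $\mathcal A$ with cost-oracle access, failure to distinguish the two cost functions forces $\mathcal A$ to produce the same output on $\mathcal I_0$ and on $\mathcal I_R$, because these two instances share the same random values and differ only in their cost oracle. By \Cref{lem:family}, however, the correct answer on $\mathcal I_0$ is ``no'' while the correct answer on $\mathcal I_R$ is ``yes''; hence whenever $\mathcal A$ does not distinguish $c_0$ from $c_R$, it is wrong on exactly one of the two instances. This gives the pointwise inequality
\begin{equation*}
    \indic{\mathcal A\text{ correct on }\mathcal I_0}+\indic{\mathcal A\text{ correct on }\mathcal I_R}\le 1+\indic{\mathcal A\text{ distinguishes }c_0\text{ from }c_R}.
\end{equation*}

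I then take expectations over both the internal randomness of $\mathcal O$ and the random choice of $R\sim\mathcal R$. Applying \Cref{thm:distinguish} to every deterministic $\mathcal A$ in the support of $\mathcal O$ bounds the right-hand side by $1+n^{-b}$, yielding
\begin{equation*}
    \P{\mathcal O\text{ correct on }\mathcal I_0}+\E_R\bigl[\P{\mathcal O\text{ correct on }\mathcal I_R}\bigr]\le 1+n^{-b}.
\end{equation*}
A simple averaging argument then shows that either $\P{\mathcal O\text{ correct on }\mathcal I_0}\le \tfrac12+\tfrac{n^{-b}}{2}$, or there exists some $R$ for which $\P{\mathcal O\text{ correct on }\mathcal I_R}\le \tfrac12+\tfrac{n^{-b}}{2}$. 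Since $n^{-b}/2<\varepsilon$ by our choice of $b$ and $n$, in either case we exhibit an instance on which $\mathcal O$ is correct with probability at most $\tfrac12+\varepsilon$, which is exactly the statement of the theorem.

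I do not expect significant technical obstacles here: the heavy lifting has already been done by \Cref{thm:distinguish} and \Cref{lem:family}. The only subtlety is bookkeeping the order of quantifiers between $n$, the oracle's query exponent $a$, the distinguishability exponent $b$, and $\varepsilon$, ensuring that the ``sufficiently large $n$'' required by \Cref{thm:distinguish} and \Cref{lem:family} are simultaneously compatible with $n^{-b}/2<\varepsilon$.
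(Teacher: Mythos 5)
Your proposal is correct and rests on the same two ingredients as the paper: Theorem~\ref{thm:distinguish} to bound the distinguishing probability, and Lemma~\ref{lem:family} to ensure $\mathcal I_0$ and $\mathcal I_R$ have opposite correct answers, so that non-distinguishing forces an error on one of them. The only difference is presentational: the paper first applies Yao's principle to the distinguishing event to extract a single hard set $R$ and then does a case split on whether $\mathcal O$ is already bad on $\mathcal I_0$, whereas you encode the same logic in a pointwise indicator inequality, take expectations over both the oracle's randomness and $R\sim\mathcal R$, and finish by averaging; this cleanly avoids the case split but is logically equivalent. One small simplification you could make: there is no need to introduce a tunable exponent $b$, since Theorem~\ref{thm:distinguish} already gives $n^{-1}$ and $n^{-1}/2<\varepsilon$ for all sufficiently large $n$.
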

        \begin{proof}
            The possibly randomized positivity oracle $\mathcal O$ is just a distribution over deterministic ones, so for any $R \subseteq X$ of cardinality $\alpha$ and any deterministic positivity oracle $O$, we 
            denote with $\mathcal E(O,R)$ the event that $O$ gives a different output when 
            receiving $\mathcal I_R$ and $\mathcal I_0$ as input.

            Recall that $\mathcal R$ is a set of cardinality $\alpha$ drawn uniformly at random. 
            Denote with $\textbf{O}$ (respectively, $\textbf{O}_d$) the set of all the randomized (resp., deterministic) efficient positivity oracles. Yao's principle gives the following:
            \begin{align}
            \label{eq:Yao}
                \min_R \P{\mathcal E(\mathcal O,R)} \le \min_R \max_{\mathcal O^* \in \textbf{O}} \P{\mathcal E(\mathcal O^*,R)}\le \max_{ O \in \textbf{O}_d} \P{\mathcal E(O,\mathcal R)}.
            \end{align}
            Consider the rightmost term; each deterministic positivity oracle $O$ is an algorithm with cost oracle access to the underlying submodular cost function, which gives different outputs on $\mathcal I_R$ and $\mathcal I_0$ if it distinguishes $c_\mathcal R$ from $c_0$ (see definition of distinguishability), given that the rest of the input is identical.
            From \Cref{eq:Yao} we have then:
            \begin{equation}
            \label{eq:epsilon}
                \min_R \P{\mathcal E(\mathcal O,R)} \le \max_{ O \in \textbf{O}_d} \P{\mathcal E(O,\mathcal R)}\le  \frac 1n \le \varepsilon,
            \end{equation}
            where the second inequality follows from \Cref{thm:distinguish}, for any $n$ sufficiently large.

            What we have shown so far is that there exists a set $R$ such that $\mathcal O$ gives the same output on both $\mathcal I_0$ and $\mathcal I_R$ with probability at least $1-\varepsilon$ even though the correct answer to Pandora's decision problem on the two instances is different. Let now $\mathcal G_0$, respectively $\mathcal G_R$, be the event that $\mathcal O$ is correct on input $\mathcal I_0$, respectively $\mathcal I_R.$ If the probability of $\mathcal G_0$ is smaller than $0.5+\varepsilon$ then there is nothing else to prove, as we can choose $\mathcal I = \mathcal I_0$; otherwise, we have the following:
            \begin{align*}
                \P{\mathcal G_R} &= \P{\mathcal G_R \cap \mathcal E(\mathcal O,R)} + \P{\mathcal G_R \setminus \mathcal E(\mathcal O,R)} \le \P{\mathcal E(\mathcal O,R)} + \P{\mathcal G_0^C}\le  \varepsilon + (0.5 - \varepsilon) = 0.5.
            \end{align*}
            To see why the previous formula holds we study separately the two summands. The event $\mathcal G_R \cap \mathcal E(\mathcal O,R)$ is clearly contained in $\mathcal E(\mathcal O,R)$, and we know that its probability is smaller than $\varepsilon$ by \Cref{eq:epsilon}. The event $\mathcal G_R \setminus \mathcal E(\mathcal O,R)$, on the other hand, is disjoint from $\mathcal G_0$; in fact, we know that if $\mathcal O$ gives the same output for $\mathcal I_0$ and $\mathcal I_R$, at most one of the two instances receives the correct answer to its decision problem. Finally, we are under the assumption that $\P{\mathcal G_0} \ge 0.5 + \varepsilon$, thus its complementary has at most a probability $0.5-\varepsilon$ to realize. 
        \end{proof}

        The previous result directly implies that no approximation result can be achieved for Pandora's problem 
        with submodular cost functions using polynomially many cost queries: assume by contradiction that such an algorithm exists, then it would be easy to construct a positivity oracle that violates the previous theorem, e.g. by repeatedly simulating the algorithm and using concentration.

\section{Conclusion and Future Directions}
    In this paper we initiate the study of Pandora's problem with a combinatorial cost function, and study to what extent the simplicity of Weitzman's solution extends beyond additive cost functions. 
    We show that the structural simplicity carries over to submodular cost functions, but not to XOS cost functions. 
    Namely, Pandora's problem with submodular cost functions admits an optimal strategy that is fixed-order. 
    From a computational perspective, we prove that no polynomial-time approximation algorithm for the Pandora's problem with submodular cost function exists, even for the subclass of matroid rank functions.

    Our work suggests intriguing directions for future research. In particular, many of the variants of Pandora that have been studied under the classic model of additive cost functions can be studied under combinatorial cost functions. Obvious examples are settings beyond single choice and non-obligatory inspection. 
    In addition, some computational problems remain open. For example, it is not clear whether there exists a poly-time algorithm that finds an optimal (or an approximately optimal) strategy under budget-additive costs.



\newpage
\bibliographystyle{plainnat}
\bibliography{bibliography}

\begin{thebibliography}{29}
\providecommand{\natexlab}[1]{#1}
\providecommand{\url}[1]{\texttt{#1}}
\expandafter\ifx\csname urlstyle\endcsname\relax
  \providecommand{\doi}[1]{doi: #1}\else
  \providecommand{\doi}{doi: \begingroup \urlstyle{rm}\Url}\fi

\bibitem[Alaei et~al.(2021)Alaei, Makhdoumi, and Malekian]{AlaeiMM21}
Saeed Alaei, Ali Makhdoumi, and Azarakhsh Malekian.
\newblock Revenue maximization under unknown private values with non-obligatory
  inspection.
\newblock In \emph{{EC}}, pages 27--28. {ACM}, 2021.

\bibitem[Balkanski and Leme(2020)]{BalkanskiL20}
Eric Balkanski and Renato~Paes Leme.
\newblock On the construction of substitutes.
\newblock \emph{Math. Oper. Res.}, 45\penalty0 (1):\penalty0 272--291, 2020.

\bibitem[Bechtel et~al.(2022)Bechtel, Dughmi, and Patel]{BechtelDP22}
Curtis Bechtel, Shaddin Dughmi, and Neel Patel.
\newblock Delegated pandora's box.
\newblock In \emph{{EC}}, pages 666--693. {ACM}, 2022.

\bibitem[Beyhaghi and Cai(2022)]{BeyhaghiC22}
Hedyeh Beyhaghi and Linda Cai.
\newblock Pandora's problem with nonobligatory inspection: Optimal structure
  and a {PTAS}.
\newblock \emph{To appear in STOC 2023, preprint on the arXiv}, abs/2212.01524,
  2022.

\bibitem[Beyhaghi and Kleinberg(2019)]{BeyhaghiK19}
Hedyeh Beyhaghi and Robert Kleinberg.
\newblock Pandora's problem with nonobligatory inspection.
\newblock In \emph{{EC}}, pages 131--132. {ACM}, 2019.

\bibitem[Blumrosen and Nisan(2007)]{blumrosen07}
Liad Blumrosen and Noam Nisan.
\newblock \emph{Algorithmic game theory}.
\newblock Cambridge University Press, New York, USA, 2007.

\bibitem[Boodaghians et~al.(2020)Boodaghians, Fusco, Lazos, and
  Leonardi]{BoodaghiansFLL20}
Shant Boodaghians, Federico Fusco, Philip Lazos, and Stefano Leonardi.
\newblock Pandora's box problem with order constraints.
\newblock In \emph{{EC}}, pages 439--458. {ACM}, 2020.

\bibitem[Chawla et~al.(2020)Chawla, Gergatsouli, Teng, Tzamos, and
  Zhang]{ChawlaGTTZ20}
Shuchi Chawla, Evangelia Gergatsouli, Yifeng Teng, Christos Tzamos, and Ruimin
  Zhang.
\newblock Pandora's box with correlations: Learning and approximation.
\newblock In \emph{{FOCS}}, pages 1214--1225. {IEEE}, 2020.

\bibitem[Chawla et~al.(2021)Chawla, Gergatsouli, McMahan, and
  Tzamos]{ChawlaGMT21}
Shuchi Chawla, Evangelia Gergatsouli, Jeremy McMahan, and Christos Tzamos.
\newblock Approximating pandora's box with correlations.
\newblock \emph{CoRR}, abs/2108.12976, 2021.

\bibitem[Doval(2018)]{Doval18}
Laura Doval.
\newblock Whether or not to open pandora's box.
\newblock \emph{Journal of Economic Theory}, 175:\penalty0 127--158, 2018.

\bibitem[Dumitriu et~al.(2003)Dumitriu, Tetali, and Winkler]{Dumitriu03}
Ioana Dumitriu, Prasad Tetali, and Peter Winkler.
\newblock On playing golf with two balls.
\newblock \emph{{SIAM} J. Discret. Math.}, 16\penalty0 (4):\penalty0 604--615,
  2003.

\bibitem[Esfandiari et~al.(2019)Esfandiari, Hajiaghayi, Lucier, and
  Mitzenmacher]{EsfandiariHLM19}
Hossein Esfandiari, Mohammad~Taghi Hajiaghayi, Brendan Lucier, and Michael
  Mitzenmacher.
\newblock Online pandora's boxes and bandits.
\newblock In \emph{{AAAI}}, pages 1885--1892. {AAAI} Press, 2019.

\bibitem[Fu et~al.(2018)Fu, Li, and Xu]{FuLX18}
Hao Fu, Jian Li, and Pan Xu.
\newblock A {PTAS} for a class of stochastic dynamic programs.
\newblock In \emph{{ICALP}}, volume 107 of \emph{LIPIcs}, pages 56:1--56:14.
  Schloss Dagstuhl - Leibniz-Zentrum f{\"{u}}r Informatik, 2018.

\bibitem[Fu et~al.(2022)Fu, Li, and Liu]{FuJD22}
Hu~Fu, Jiawei Li, and Daogao Liu.
\newblock Pandora box problem with nonobligatory inspection: Hardness and
  approximation scheme.
\newblock \emph{To appear in STOC 2023, preprint on the arXiv},
  abs/2207.09545v2, 2022.

\bibitem[Gatmiry et~al.(2022)Gatmiry, Kesselheim, Singla, and Wang]{Gatmiry22}
Khashayar Gatmiry, Thomas Kesselheim, Sahil Singla, and Yifan Wang.
\newblock Bandit algorithms for prophet inequality and pandora's box.
\newblock \emph{CoRR}, abs/2211.08586, 2022.

\bibitem[Gergatsouli and Tzamos(2022)]{GergatsouliT22}
Evangelia Gergatsouli and Christos Tzamos.
\newblock Online learning for min sum set cover and pandora's box.
\newblock In \emph{{ICML}}, volume 162 of \emph{Proceedings of Machine Learning
  Research}, pages 7382--7403. {PMLR}, 2022.

\bibitem[Guo et~al.(2021)Guo, Huang, Tang, and Zhang]{Guo0T021}
Chenghao Guo, Zhiyi Huang, Zhihao~Gavin Tang, and Xinzhi Zhang.
\newblock Generalizing complex hypotheses on product distributions: Auctions,
  prophet inequalities, and pandora's problem.
\newblock In \emph{{COLT}}, volume 134 of \emph{Proceedings of Machine Learning
  Research}, pages 2248--2288. {PMLR}, 2021.

\bibitem[Kleinberg and Kleinberg(2018)]{KleinbergK18}
Jon~M. Kleinberg and Robert Kleinberg.
\newblock Delegated search approximates efficient search.
\newblock In \emph{{EC}}, pages 287--302. {ACM}, 2018.

\bibitem[Kleinberg et~al.(2016)Kleinberg, Waggoner, and Weyl]{KleinbergWW16}
Robert~D. Kleinberg, Bo~Waggoner, and E.~Glen Weyl.
\newblock Descending price optimally coordinates search.
\newblock In \emph{{EC}}, pages 23--24. {ACM}, 2016.

\bibitem[Lehmann et~al.(2006)Lehmann, Lehmann, and Nisan]{LehmannLN06}
Benny Lehmann, Daniel Lehmann, and Noam Nisan.
\newblock Combinatorial auctions with decreasing marginal utilities.
\newblock \emph{Games Econ. Behav.}, 55\penalty0 (2):\penalty0 270--296, 2006.

\bibitem[Olszewski and Weber(2015)]{Olszewski15}
Wojciech Olszewski and Richard Weber.
\newblock A more general pandora rule?
\newblock \emph{Journal of Economic Theory}, 160:\penalty0 429--437, 2015.

\bibitem[Reijnierse et~al.(2002)Reijnierse, van Gellekom, and
  Potters]{ReijnierseGM02}
Hans Reijnierse, Anita van Gellekom, and Jos A.~M. Potters.
\newblock Verifying gross substitutability.
\newblock \emph{Economic Theory}, 20\penalty0 (4):\penalty0 767--776, 2002.

\bibitem[Rudin(1976)]{Rudin76}
Walter Rudin.
\newblock \emph{Principles of Mathematical Analysis}.
\newblock McGraw-Hill, New York, USA, 1976.

\bibitem[Schrijver et~al.(2003)]{Schrijver03}
Alexander Schrijver et~al.
\newblock \emph{Combinatorial optimization: polyhedra and efficiency},
  volume~24.
\newblock Springer, 2003.

\bibitem[Segev and Singla(2021)]{Segev021}
Danny Segev and Sahil Singla.
\newblock Efficient approximation schemes for stochastic probing and prophet
  problems.
\newblock In \emph{{EC}}, pages 793--794. {ACM}, 2021.

\bibitem[Singla(2018)]{Singla18}
Sahil Singla.
\newblock The price of information in combinatorial optimization.
\newblock In \emph{{SODA}}, pages 2523--2532. {SIAM}, 2018.

\bibitem[Svitkina and Fleischer(2011)]{SvitkinaF11}
Zoya Svitkina and Lisa Fleischer.
\newblock Submodular approximation: Sampling-based algorithms and lower bounds.
\newblock \emph{{SIAM} J. Comput.}, 40\penalty0 (6):\penalty0 1715--1737, 2011.

\bibitem[Weber et~al.(1992)]{Weber92}
Richard Weber et~al.
\newblock On the {G}ittins index for multiarmed bandits.
\newblock \emph{The Ann. Appl. Probab.}, 2\penalty0 (4):\penalty0 1024--1033,
  1992.

\bibitem[Weitzman(1979)]{weitzman79}
Martin~L Weitzman.
\newblock Optimal search for the best alternative.
\newblock \emph{Econometrica: Journal of the Econometric Society}, pages
  641--654, 1979.

\end{thebibliography}

\clearpage
\appendix

\section{Classes of Combinatorial Functions}
\label{app:combinatorial}

    In this appendix we recall the definitions and some properties of matroid rank functions, gross substitute functions and coverage functions. 

    \paragraph{Matroids.} A family of subsets $\mathcal M \subseteq 2^X$ of a base set $X$ is called a matroid if the following two properties hold: 
    \begin{itemize}
        \item{\em Downward closure}: if $A \in \mathcal M$ and $B \subseteq A$, then $B \in \mathcal M$
        \item{\em Augmentation property:} if $A,B \in \mathcal M$ and $|A|>|B|$, then there exists $x \in A\setminus B$ such that $B \cup \{x\} \in \mathcal M$
    \end{itemize}
    Subsets in $\mathcal M$ are called independent sets. 
    
    \paragraph{Matroid rank functions.} Given any matroid $\cM$, it is possible to define its associated rank function $r_{\cM}:2^X \to \mathbb{N}$ as the cardinality of the largest independent set: 
    \[
        r_{\cM}(A) = \max_{B \subseteq A}\{|B| \mid B \in \mathcal M\}.
    \]
    Recall the definitions of the two functions $c_0$ and $c_R$ in \Cref{sec:computational}. A base set $V$ of $n$ elements is given, as well as two integers $\alpha > \beta$ smaller than $n$ and a subset $R \subseteq V$, with $|R| = \alpha$, we have:
    \[
        c_0(S) = \min\{|S|,\alpha\}, \quad c_R(S) = \min\{|S|,\alpha, |S \cap R^C| + \beta\}.
    \]
    Cost function $c_0$ is the rank function of the matroid $\cM_0 = \{S \subseteq X \mid |S| \le \alpha\}$ (thus it is also submodular). For what concerns $c_R$, instead of explicitly exhibiting the relative matroid, we show that it respects two properties that ensure that there exists a matroid $\cM_R$ on $X$ of which $c_R$ is indeed the rank function. 

    \begin{theorem}[Theorem 39.8 of \citet{Schrijver03}]
    \label{thm:ranks}
        Let $V$ be a set and let $r : 2^X \to \mathbb{N}$. Then $r$ is the rank function of a matroid if and only if for all $T,U \subseteq V$:
        \begin{itemize}
            \item[$(i)$] $r(T) \le r(U) \le |U|$ if $T\subseteq U$
            \item[$(ii)$] $r(U \cap T) + r(U \cup T) \le r(U) + r(T)$.
        \end{itemize}
    \end{theorem}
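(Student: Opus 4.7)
The plan is to establish the two directions separately. For the forward direction, suppose $r$ is the rank function of a matroid $\mathcal M$. Monotonicity and the bound $r(U) \le |U|$ in (i) follow immediately from the definition: any independent subset of $T$ remains independent in $U \supseteq T$ by downward closure, and any independent subset of $U$ has cardinality at most $|U|$. For submodularity (ii), I would take a maximum independent set $I$ of $U \cap T$, extend it via the augmentation axiom inside $U$ to a maximum independent set $I_U$ of $U$, then further extend $I_U$ to a maximum independent set $J$ of $U \cup T$; counting the elements of $J$ lying in $U$ versus in $T \setminus U$ (and using that $J \cap T$ is independent in $T$) yields $r(U \cap T) + r(U \cup T) \le r(U) + r(T)$.

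For the backward direction, I would define $\mathcal M := \{S \subseteq X : r(S) = |S|\}$ and show that $\mathcal M$ is a matroid whose rank function coincides with $r$. Before touching the matroid axioms I would establish one key \emph{submodular propagation lemma}: whenever $r(S \cup \{y\}) = r(S)$ for every $y$ in a finite set $Y$ disjoint from $S$, then $r(S \cup Y) = r(S)$. The proof enumerates $Y = \{y_1,\ldots,y_m\}$, sets $S_i = S \cup \{y_1,\ldots,y_i\}$, and applies submodularity to the pair $(S_{i-1}, S \cup \{y_i\})$ whose intersection is exactly $S$; combined with monotonicity this inductively gives $r(S_i) = r(S)$.

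With this lemma in hand, downward closure is quick. If $A \in \mathcal M$ and $B \subseteq A$, consider a chain $B = A_0 \subset A_1 \subset \cdots \subset A_k = A$ obtained by adjoining single elements. Submodularity together with $r(\emptyset) = 0$ (forced by (i) since $r \ge 0$) and $r(\{x\}) \le 1$ yields $r(A_{i+1}) \le r(A_i) + 1$; since $r(A_k) = |A_k|$ and $r(A_i) \le |A_i|$ throughout, every inequality must be tight, so $r(B) = |B|$. For augmentation, given $A, B \in \mathcal M$ with $|A| > |B|$, suppose for contradiction that $r(B \cup \{x\}) = |B|$ for all $x \in A \setminus B$. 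Then the propagation lemma applied with $S = B$ and $Y = A \setminus B$ forces $r(A \cup B) = |B|$, contradicting $r(A \cup B) \ge r(A) = |A| > |B|$.

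Finally, to check that the rank function $r_{\mathcal M}$ of $\mathcal M$ equals $r$, fix any $S \subseteq X$ and let $I \subseteq S$ be a maximum independent subset. Monotonicity gives $r_{\mathcal M}(S) = |I| = r(I) \le r(S)$. For the matching upper bound, maximality of $I$ forces $r(I \cup \{y\}) = |I|$ for every $y \in S \setminus I$ (otherwise $I \cup \{y\}$ would be a strictly larger independent set in $S$, by the downward closure already proved), and one more application of the propagation lemma yields $r(S) = |I| = r_{\mathcal M}(S)$. The main obstacle in the whole argument is isolating the propagation lemma cleanly; once it is in place, downward closure, augmentation, and the rank-function coincidence all follow from the same two-line induction.
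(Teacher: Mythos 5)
Your proof is correct and complete, and in fact the paper does not prove this result at all: it is cited verbatim as Theorem 39.8 from Schrijver's \emph{Combinatorial Optimization} and used as a black box to certify that the functions $c_0, c_R$ in \Cref{sec:computational} are matroid rank functions. So there is no paper proof to compare against; what you have written is an actual proof of the cited theorem.

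As a self-contained proof, it is sound. The forward direction's extension-and-count argument for submodularity is exactly the standard one (extend a max independent set of $U\cap T$ to $U$, then to $U\cup T$, and count using downward closure). In the backward direction, what you isolate as the ``propagation lemma''---that $r(S\cup\{y\})=r(S)$ for all $y\in Y$ forces $r(S\cup Y)=r(S)$---is the right technical workhorse; the induction step, applying (ii) to the pair $(S_{i-1}, S\cup\{y_i\})$ whose intersection is $S$ and whose union is $S_i$, is correct, and combined with monotonicity gives $r(S_i)=r(S_{i-1})$. Downward closure via the chain and the unit-increment bound $r(A_{i+1})\le r(A_i)+1$ (from (ii) with $r(\emptyset)=0$, $r(\{x\})\le 1$), augmentation via contradiction with the propagation lemma, and the final coincidence $r_{\mathcal M}=r$ are all argued correctly. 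One cosmetic note: the paper's statement has a typo, conflating $V$ and $X$; you correctly work with a single ground set throughout.
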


    It is immediate to verify that $c_R$ respects property $(i)$ while condition $(ii)$ is equivalent to submodularity, which holds for $c_R$. 

    \paragraph{Gross-substitutes.} A function $f: 2^X \rightarrow \mathbb{R}_{\geq 0}$ is \emph{gross-substitutes} if for any two vectors 
    $p, p' \in \mathbb{R}_{\geq 0}^n$ (with $|X| = n$) such that
    $p' \geq p$ (component-wise) and any $S\subseteq X$ such that
    $S \in \arg\max_{T \subseteq X} f(T) - \sum_{i \in T} p_i$ there is a $S' \subseteq X$ such that $S' \in \arg\max_{T \subseteq X} f(T) - \sum_{i \in T} p'_i$ and 
    $ \{i \in S \mid p'_i = p_i\}\subseteq S'.$  It is known that the class of gross substitute functions strictly contains that of matroid rank functions \citep{BalkanskiL20} 
    
    An alternative definition of gross substitute functions is given by \citet{ReijnierseGM02} and is as follows:
    a valuation function $f$ has the gross substitutes property on set $X$ if and only if it is submodular, and for all sets $S \subseteq X$ and all distinct $i, j, k \in X \setminus S$, the following multi-set does not have a unique max: 
        \begin{equation}
            \label{eq:gs}
        \{f(\{i, j\}|S) + f(k|S),f(i|S) + f(j, k|S), f(j|S) + f(\{i, k\}|S)\}.
        \end{equation}

    \paragraph{Coverage.} A function $f: 2^X\rightarrow \mathbb{R}_{\geq 0}$ is a \emph{coverage} if there exist a set of elements $E$, a mapping of the elements to weights $w:E \rightarrow \mathbb{R}_{\geq 0} $, and a mapping $g:E\rightarrow 2^{X}$ such that $f(S) = \sum_{j\in E} w(j) \cdot \indicator{\exists i\in S \cap g(j) }$.    

\section{Adaptive Order is Necessary beyond Submodular Cost Functions}
\label{app:XOS}

\begin{claim} \label{cl:proof-example}
The only optimal strategy for the instance of Example~\ref{ex:adaptive-order} is to first open box $1$, and if the value of box $1$ is $10$, to open box $2$, and if the value of box $1$ is $0$, then open box $3$.
\end{claim}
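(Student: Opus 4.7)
The plan is to enumerate, for every deterministic strategy, the optimal action at each decision point and then show that the proposed strategy is the unique maximizer. Since Section~\ref{sec:preliminaries} argues that we may restrict attention to deterministic strategies without loss of generality, it suffices to consider the decision tree rooted at the first box opened. Thus I would split into four cases based on the first action: halt immediately, open box~$1$, open box~$2$, or open box~$3$.

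The key structural observation, which I would establish first, is that no strategy in the support of an optimal policy opens \emph{both} box~$2$ and box~$3$: the marginal cost of adding the second of these boxes is always exactly $20$, whereas the maximum achievable reward is $12$, so executing that marginal step strictly decreases utility relative to halting. This pruning reduces each subtree to a small finite set of leaves that can be evaluated directly. In parallel, I would note that adding box~$1$ to any already-opened set costs $0$ and is therefore weakly beneficial only through its value contribution.

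With those reductions, I would compute the expected utility in each of the four root cases. Starting with box~$1$ leaves two symmetric subproblems on $\{2,3\}$: if $V_1=10$ was observed, opening box~$2$ (and halting) gives expected reward $\tfrac12(12)+\tfrac12(10)=11$, strictly beating opening box~$3$ (reward $10$) and halting (reward $10$); if $V_1=0$, opening box~$3$ (and halting) gives reward $10$, strictly beating opening box~$2$ (expected reward $6$) and halting (reward $0$). So the root case ``open~$1$'' yields $\tfrac12(11)+\tfrac12(10)=10.5$ and the continuation is unique. Starting with box~$2$ yields $\tfrac12(12)+\tfrac12(5)=8.5$ (where $5$ comes from the best continuation after $V_2=0$, which is to open box~$1$ and halt). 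Starting with box~$3$ yields $10$, since after observing the deterministic value $10$ any further opening of box~$2$ is forbidden by the pruning observation and opening box~$1$ cannot increase the reward. Halting immediately gives $0$.

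Comparing $0, 10.5, 8.5, 10$ we see that opening box~$1$ first is uniquely best at the root, and the analysis of the $V_1=10$ and $V_1=0$ subtrees already showed that the continuation described in the claim is uniquely optimal in each case. This yields the desired uniqueness. The only mildly delicate step is confirming that adaptive deviations in the ``$V_2=0$'' and ``$V_3=10$'' subtrees do not recover the loss; the pruning observation about $\{2,3\}$ makes this essentially mechanical, which is why I do not expect any genuine obstacle beyond case bookkeeping.
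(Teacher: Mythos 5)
Your proof is correct and takes essentially the same approach as the paper's: restrict to deterministic strategies, observe that opening both boxes $2$ and $3$ is always a strict loss, and then do a case analysis on the first box opened. The only difference is stylistic — you compute explicit expected utilities ($10.5$ vs. $8.5$ vs. $10$ vs. $0$), whereas the paper argues qualitatively via signs of marginal utilities; both establish the required strict separations and the resulting uniqueness.
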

\begin{proof}
As mentioned in Section~\ref{sec:preliminaries}, every randomized strategy is a distribution over deterministic strategies, and therefore it is sufficient to prove that  this  strategy is optimal among deterministic strategies.
We first claim that an optimal strategy will never open both boxes $2,3$. This holds since after opening one of them, then opening the other leads to a cost of $20$, but can only increase the value by at most $12$.
If the first box the strategy opens is $1$, then if its value is $10$, then the only box that has positive marginal utility is $2$, therefore the strategy should open it, and if its value is $0$, then since box $3$ has higher marginal utility (and we never open both $2,3$), then the strategy must open $3$.
If box $2$ is opened first, then if its value is $12$, all other boxes have non-positive marginal utility, and if its value is $0$, then  box $1$ is the only one with positive utility, so the strategy should open it.  This leads to a lower utility than the one that starts with opening box $1$. 
If box $3$ is opened first, then all other boxes have non-positive marginal utility, and this leads to a lower utility than the one that starts with opening box $1$. 
\end{proof}

\begin{claim}\label{cl:marginal-xos} 
    For every monotone and normalized function $f:2^X\rightarrow \mathbb{R}_{\geq 0}$, the function $g:2^{X\cup\{0\}} \rightarrow \mathbb{R}_{\geq 0}$ defined as follows is XOS:
    \[
        g(S) = n \cdot f(X)\cdot \indicator{S\neq \emptyset} + f(S\cap X) 
    \]
    Moreover, the marginal of $g$ with respect to the set $\{0\}$ is $f$ (i.e., for every $S\subseteq X$, it holds that $f(S) = g(S \cup \{0\}) - g(\{0\})$).
\end{claim}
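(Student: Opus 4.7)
The proof has two separate claims; the marginal identity is immediate, while XOS-ness requires a construction. For the marginal, since $0 \notin X$, we have $(S\cup\{0\})\cap X = S$, and since both $S\cup\{0\}$ and $\{0\}$ are nonempty, the indicator is $1$ for both sets. Thus $g(S\cup\{0\}) - g(\{0\}) = (n f(X) + f(S)) - (n f(X) + f(\emptyset)) = f(S)$ using normalization of $f$. This handles the second part.

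For the XOS part, I would use the standard equivalence: $g$ is XOS iff for every nonempty $T\subseteq X\cup\{0\}$ there is an additive ``clause'' $a_T$ with $a_T(T)=g(T)$ and $a_T(S)\le g(S)$ for every $S$. I would define the clauses as follows. Write $M := f(X)$. For $T=\{0\}$, set $a_T(S) = nM\cdot \mathbb{I}_{0\in S}$. For nonempty $T\subseteq X$, assign each $i\in T$ the weight $w_i^T := (nM + f(T))/|T|$ and weight $0$ to every element outside $T$, so that $a_T(S) = \frac{nM + f(T)}{|T|}\cdot |S\cap T|$. For $T = T'\cup\{0\}$ with $\emptyset\neq T'\subseteq X$, take $a_T := a_{T'}$; note that $a_{T'}(T) = nM + f(T')= g(T)$, so the defining identity still holds. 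In all cases, $a_T(T)=g(T)$ by direct computation.

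The content of the proof is the inequality $a_T(S)\le g(S)$ for all $S$, which I would verify by a case split (focussing on the nontrivial case of nonempty $T\subseteq X$, the case $T=\{0\}$ being immediate). If $S\cap T=\emptyset$, then $a_T(S)=0\le g(S)$. If $T\subseteq S$, then $a_T(S) = nM + f(T) \le nM + f(S\cap X) = g(S)$ by monotonicity (and since $S\neq \emptyset$). The remaining case is $\emptyset\neq S\cap T\subsetneq T$, where $|S\cap T|\le |T|-1$, and so
\[
a_T(S) \le \frac{nM + f(T)}{|T|}\,(|T|-1) = nM + f(T) - \frac{nM + f(T)}{|T|} \le nM,
\]
where the last step uses $|T|\le n$, which gives $nM/|T|\ge M \ge f(T)$. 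Since $S\supseteq S\cap T\neq \emptyset$, we have $g(S)\ge nM$ and the bound follows. The main technical point, and the reason the $n f(X)$ term is needed in the definition of $g$, is precisely this last step: the ``fixed cost'' $nM$ has to dominate $f(T)$ after being divided among the at most $n$ elements of $T$, so that the clause $a_T$ cannot over-count on proper subsets of $T$. Once all three cases are checked, $g=\max_T a_T$ gives the XOS representation.
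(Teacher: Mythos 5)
Your proof is correct and uses essentially the same construction as the paper: the clauses you define (one fixed cost $nM$ spread uniformly over each nonempty $T\subseteq X$, a standalone clause for $\{0\}$, and the redundant $a_{T'\cup\{0\}}=a_{T'}$) coincide with the paper's family $\{a^S\}_{S\subseteq X}$, and your case analysis for $a_T(S)\le g(S)$ matches the paper's apart from a cosmetic reshuffling of the subcases (the paper folds your cases $S\cap T=\emptyset$ and $\emptyset\neq S\cap T\subsetneq T$ into a single case $T\setminus S\neq\emptyset$).
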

\begin{proof}
To show that $g$ is XOS, we need to present additive functions such that the maximum over them is $g$.
For every $S\subseteq X$ we define an additive function $a^S$ over the elements in $X\cup\{0\}$ as follows:
If $S=\emptyset $, then $a^S(0)=n \cdot f(X)$, and $a^S(i)= 0$ for every $i\in [n]$.
If $S\neq\emptyset $, then 
{$a^S(i) = 0$}
for {$i\in (X\setminus S)\cup\{0\}$}, and $a^S(i)= \frac{n \cdot f(X) +f(S)}{|S|}$ for  $i\in S$.
Observe that under this definition, for every $T \subseteq X \cup \{0\}$ we have
$$
a^{T\cap X}(T) = 
\begin{cases}
    \emptyset &T = \emptyset \\
    n \cdot f(X) + f(T \cap X) &\text{Otherwise}
\end{cases}
$$
That is, $a^{T\cap X}(T) = g(T)$.  Thus, in order to establish that $g=\max_{S \subseteq X} a^S$, it remains to show that
    for every $T \subseteq X \cup \{0\}$, and for every $S \subseteq X$, we have $a^S(T) \leq g(T)$.
Let $T$ and $S$ be as described above.  If $T = \emptyset$, then the claim clearly holds (the inequality becomes $0 \leq 0$).  We thus assume that $T \neq \emptyset$.
Now, if $S = \emptyset$, then it is straightforward to see that
$$
a^{\emptyset}(T) =
\begin{cases}
    n \cdot f(X)  & 0 \in T \\
    0  &\text{Otherwise}
\end{cases}
\leq
n \cdot f(X)\cdot \indicator{T\neq \emptyset} + f(T\cap X) = g(T)
$$

If $S \neq \emptyset$, then $a^S(T) = a^S(T \cap S)$.  If we also have $S \subseteq T$, then
$a^S(T) =a^S(S)\leq g(T)$, by monotonicity of $f$.  If, on the other hand we have $ S \setminus T \neq \emptyset$, then
$$
a^S(T)  = \frac{\left|T \cap S\right|}{\left|S\right|}(n \cdot f(X) +f(S)) \leq n\cdot f(X) \leq g(T).
$$
This concludes the proof.
\end{proof}

\begin{claim} \label{cl:transform-to-xos}
Given an instance $\inst=(D_1,\ldots,D_n,c)$ where the optimal strategy must use adaptive order, there exists a distribution $D_0$, and an XOS cost function $c':2^
{[n]\cup\{0\}} \rightarrow \mathbb{R}_{\geq 0}$ where the optimal strategy for instance $\inst'=(D_0,D_1,\ldots,D_n,c')$ must use an adaptive order.
\end{claim}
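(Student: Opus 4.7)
The plan is to apply \Cref{cl:marginal-xos} with $f=c$ to obtain the XOS cost function $c'(S):=K\cdot\indicator{S\neq\emptyset}+c(S\cap[n])$ on $[n]\cup\{0\}$, where $K:=n\cdot c([n])$; by that claim, $c'$ is XOS and its marginal with respect to $\{0\}$ equals $c$. Let $\piStar$ be an adaptive optimal strategy for $\inst$ and $\pi^{FO}$ an optimal fixed-order strategy; by hypothesis $\utili{\inst}{\piStar}>\utili{\inst}{\pi^{FO}}\geq 0$. I pick $D_0$ to be the two-point distribution $\P{V_0=M}=q$ and $\P{V_0=0}=1-q$, where $M$ is chosen large enough to exceed every element of $\supp$ and also $\utili{\inst}{\pi^{FO}}$, and $q:=K/M$ (so $qM=K$ and $q\in(0,1)$). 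Set $\inst':=(D_0,D_1,\ldots,D_n,c')$.

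For the lower bound I consider the adaptive strategy $\pi^*_{adapt}$ that inspects box $0$ first, halts if $V_0=M$, and otherwise executes $\piStar$. Because the marginal of $c'$ given $\{0\}$ equals $c$, the problem remaining after inspecting $0$ is exactly $\inst$, and a direct computation gives
\[
    \utili{\inst'}{\pi^*_{adapt}}=q(M-K)+(1-q)(\utili{\inst}{\piStar}-K)=(1-q)\,\utili{\inst}{\piStar}>0.
\]

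For the upper bound on any fixed-order strategy $\pi'$ for $\inst'$, I associate a fixed-order strategy $\tilde\pi$ on $\inst$ obtained by deleting box $0$ from $\pi'$'s inspection order and replacing the threshold at the position immediately after $0$ with the minimum of the two thresholds of $\pi'$ adjacent to $0$ (this rewriting encodes the conjunction ``inspect $0$ at step $k$ AND inspect the next $[n]$-box at step $k+1$''). An inductive check shows: conditional on $V_0=0$, $\pi'$ and $\tilde\pi$ inspect the same $[n]$-boxes with the same realized values; conditional on $V_0=M$, since $M$ exceeds every meaningful threshold, $\pi'$ halts immediately after inspecting $0$ and therefore inspects a subset of $\tilde\pi$'s $[n]$-boxes. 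Combining with monotonicity of $c$ and $qM=K$ yields
\[
    \utili{\inst'}{\pi'}\leq qM+(1-q)\,\utili{\inst}{\tilde\pi}-K\leq qM+(1-q)\,\utili{\inst}{\pi^{FO}}-K=(1-q)\,\utili{\inst}{\pi^{FO}}.
\]
The edge case where $\pi'$ never inspects $0$ yields utility at most $\utili{\inst}{\pi^{FO}}-K$, which is also bounded by $(1-q)\utili{\inst}{\pi^{FO}}$ thanks to $M\geq\utili{\inst}{\pi^{FO}}$ (equivalently $q\cdot\utili{\inst}{\pi^{FO}}\leq K$).

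Combining the two bounds, for every fixed-order $\pi'$ we have $\utili{\inst'}{\pi^*_{adapt}}-\utili{\inst'}{\pi'}\geq (1-q)(\utili{\inst}{\piStar}-\utili{\inst}{\pi^{FO}})>0$, so no fixed-order strategy is optimal for $\inst'$, i.e., the optimal strategy for $\inst'$ must use an adaptive order. The main technical obstacle I anticipate is verifying the construction of $\tilde\pi$: the threshold-rewriting recipe must be shown to exactly simulate $\pi'|_{V_0=0}$ on $\inst$, which requires a careful step-by-step analysis of how removing box $0$ from the inspection order reshapes the remaining stopping rule, especially at the position right after $0$.
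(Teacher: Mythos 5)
Your construction is correct but takes a genuinely different route from the paper's. Both proofs use the same XOS cost function from \Cref{cl:marginal-xos} with fixed cost $K = n\cdot c([n])$, and both add a box $0$ whose value is zero or large. The divergence is in proving that no fixed-order strategy is optimal for $\inst'$. The paper argues indirectly: assuming an optimal strategy $\piStar_{\inst'}$ is fixed-order, it shows $\piStar_{\inst'}$ opens at least one box, slides box $0$ to the front at no change in cost (since $c'$ charges $K$ for any nonempty set, whether or not $0$ is in it) and no loss in reward, and then observes that the sub-strategy conditional on $V_0=0$ is a fixed-order optimal strategy for $\inst$ itself --- a contradiction. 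You argue directly: you lower-bound the best adaptive utility by $(1-q)\utili{\inst}{\piStar}$, and upper-bound every fixed-order utility by $(1-q)\utili{\inst}{\pi^{FO}}$ via a threshold-merging map $\pi' \mapsto \tilde\pi$ onto fixed-order strategies of $\inst$. Your route buys a quantitative adaptivity-gap inequality; the paper's avoids the threshold bookkeeping entirely and uses only the cost-invariance of $c'$.

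There is, however, a gap in your upper bound: you require $M$ to ``exceed every element of $\supp$,'' which is impossible when the $D_i$ have unbounded support (the paper only assumes finite expectations, not bounded supports). Your argument that $\pi'$ halts right after observing $V_0=M$ because $M$ exceeds every meaningful threshold, and that the reward conditional on $V_0=M$ equals exactly $M$, both break when some $V_i$ can exceed $M$. The paper's $V_0$ --- zero with probability $1/2$ or $2(1 + n c([n]) + \max_i s_i)$ with fresh samples $s_i \sim D_i$ --- avoids the issue because the rearrangement argument never needs $V_0$ to dominate; it only needs $\E{V_0}>c'(\{0\})$ together with the fact that adding box $0$ to a nonempty opened set is free under $c'$. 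You would need to either discretize and cap the distributions first (as in Transformation~$\transformeps{\epsilon}$) and control the resulting $O(\epsilon)$ error against the strictly positive gap $\utili{\inst}{\piStar} - \utili{\inst}{\pi^{FO}}$, or restrict the claim to bounded-support instances (which does suffice for deriving \Cref{thm:XOSadaptive} from \Cref{ex:adaptive-order}).
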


\begin{proof}
    Consider an instance with an additional box (which we will refer to as box $0$).
The cost function $c'$ is the cost function described in Claim~\ref{cl:marginal-xos}.
The random variable $V_0$ for box~$0$ is defined as follows: 
We draw an independent sample $s_i$ from each $D_i$, 
and we also draw a Bernoulli random variable $b$ which equals 1 with probability $1/2$ and equals 0 otherwise.
Then $V_0$ is defined to be $2\cdot b\cdot (1+n\cdot c([n])+\max_i s_i)$.

Assume towards contradiction that there exists a deterministic optimal
strategy $\piStar_{\inst'}$ for instance $\inst'$ that uses a fixed-order, and let $\sigma:[n+1]\rightarrow[n]\cup\{0\}$ be that order.
Let {$\tau_1,\ldots,\tau_{n+1}$} be the optimal thresholds that $\piStar_{\inst'}$ uses (i.e., $\piStar_{\inst'}$ opens $\sigma(i)$ if up to round $i-1$ the maximum realized value so far is {strictly less than $\tau_{i}$}).

We first claim that {$\tau_1 > 0$} (\emph{i.e}, the strategy $\piStar_{\inst'}$ opens at least one box). If $\piStar_{\inst'}$ does not open any box,
then  $\piStar_{\inst'}$ can be strictly improved by opening box $0$ since the following holds true:
\[
    \E{V_0} = 2 \cdot \E{b} \cdot \E{1+n\cdot c([n])+\max_i s_i)} >  c(\{0\}).
\]

We next show that there is a fixed-order strategy with the same utility as $\piStar_{\inst'}$ that opens box $0$ first.
Let $i_0= \sigma^{-1}(0)$. If $i_0=1$, then $\piStar_{\inst'}$  opens box $0$ first, and we are done. 
Else, consider now the following fixed-order strategy $\pi$ with order $\sigma' = \left(0,\sigma(1),\ldots,\sigma(i_0-1),\sigma(i_0+1),\ldots,\sigma(n+1)\right)$:
The strategy opens box $0$, then it always opens box $\sigma(1)$. For $i=2,\ldots,i_0-1,i_0+1,\ldots,n+1$, it opens box $\sigma(i)$, if the maximum among the values of boxes $\sigma(1),\ldots,\sigma(i-1)$ is {strictly less than} $\tau_i$. Note that for $i<i_0$, the strategy ignores the value of box $0$ for the decision of whether to halt, {while for $i > i_0$ the strategy acts exactly as in $\piStar_{\inst'}$}.
For every realization of the values of the boxes, the costs of $\pi$ and $\piStar_{\inst'}$ are the same, and the value obtained by $\pi$ is at least as large as the value of $\piStar_{\inst'}$. Thus $\pi$ is an optimal strategy.

Under the case that the value that box $V_0 = 0$ (which happens with a probability of at least 0.5), the marginal instance that $\pi$ is facing is exactly $\inst$, and $\pi$ must be optimal also for this case (since it happens with a non-zero probability).
But since $\pi$ is a fixed-order strategy, its marginal strategy for this case is also a fixed-order strategy, and this contradicts that no fixed-order strategy is optimal for $\inst$.
This concludes the proof.
\end{proof}

    The previous claim, together with the \Cref{ex:adaptive-order} yields the following Theorem.
    
\begin{theorem}
\label{thm:XOSadaptive}
    There exists an instance of the Pandora’s problem with an XOS cost function that admits no optimal strategy with non-adaptive exploration order.
\end{theorem}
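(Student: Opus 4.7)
The plan is to combine the two preparatory results in the appendix into a short direct proof. Specifically, \Cref{ex:adaptive-order} together with \Cref{cl:proof-example} already exhibits a concrete three-box instance $\inst=(D_1,D_2,D_3,c)$ whose unique optimal strategy is adaptive in its exploration order (open box $1$ first, then branch to box $2$ or box $3$ depending on the realized value of box $1$). This cost function is not XOS, because of the sharp complementarity between boxes $2$ and $3$; nevertheless, it is monotone and normalized, so it falls within the scope of \Cref{cl:marginal-xos}.

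First, I would invoke \Cref{cl:transform-to-xos} on this instance $\inst$. The claim augments $\inst$ with a single new box (box $0$), endowed with the carefully designed distribution $V_0 = 2b(1+nc([n])+\max_i s_i)$, and replaces the cost function by the lift $c'$ described in \Cref{cl:marginal-xos}, which is XOS and whose marginal with respect to $\{0\}$ recovers $c$ exactly. Thus $\inst'=(D_0,D_1,D_2,D_3,c')$ is a bona fide XOS instance of Pandora's problem on four boxes.

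Next, I would apply the conclusion of \Cref{cl:transform-to-xos}: since $\inst$ has no optimal non-adaptive strategy, $\inst'$ also has no optimal non-adaptive strategy. The heart of that claim's argument is a reduction that assumes toward contradiction an optimal fixed-order strategy for $\inst'$, shows that it must eventually open box $0$, exchanges its position to the front without sacrificing utility (using that $V_0 \ge 0$ and the cost lift's structure), and then conditions on $V_0=0$ (an event of probability $\ge 1/2$) to extract a fixed-order optimal strategy for the original instance $\inst$, contradicting the non-existence of such a strategy for $\inst$.

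The main obstacle I would expect is purely a bookkeeping one: verifying that the four-box instance produced by the transformation really does fit the statement of the theorem (i.e., its cost function is XOS on the full ground set $\{0,1,2,3\}$), and that the ``swap box $0$ to the front'' step does not decrease utility. Both are already taken care of by \Cref{cl:marginal-xos} and the argument in \Cref{cl:transform-to-xos} respectively, so the proof of \Cref{thm:XOSadaptive} itself reduces to a one-line composition: apply \Cref{cl:transform-to-xos} to the adaptive instance of \Cref{ex:adaptive-order}, and conclude.
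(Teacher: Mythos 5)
Your proposal matches the paper's own proof exactly: the paper derives Theorem~\ref{thm:XOSadaptive} precisely by applying Claim~\ref{cl:transform-to-xos} (which uses Claim~\ref{cl:marginal-xos} for the XOS lift) to the instance of Example~\ref{ex:adaptive-order}, whose unique optimal strategy is adaptive by Claim~\ref{cl:proof-example}. Nothing to add.
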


\section{Pandora's Problem with Order Constraints}
\label{app:pandora}
    In this appendix, we show how some relevant features of Pandora's problem with order constraints \citep{BoodaghiansFLL20} are captured by suitable instances of Pandora's problem with combinatorial cost functions. 
    When the underlying precedence graph is a tree,  we show in \Cref{prop:gs,cl:submodular-equivalence} that it is possible to construct a gross-substitutes cost function such that the two problems (order constraint and combinatorial cost) are equivalent. This construction exhibits a non-trivial class of gross-substitutes cost functions for which the optimal strategy is fixed order and can be computed efficiently. Then, in \Cref{cl:subadditive-example},
    we borrow an instance from \cite{BoodaghiansFLL20} for which the optimal strategy is not fixed-order,
    and we translate it into an instance of Pandora's problem with a subadditive cost function that has the same property.
    This constitutes an alternative proof that adaptivity may be necessary under subadditive costs (other than the proof of Theorem \ref{thm:XOSadaptive}).

    \paragraph{Pandora's Problem with order constraints.} An instance of Pandora's problem with order constraint is defined by $n$ boxes and a precedence graph $G$. Each box $b_i$ contains a random reward with distribution $D_i$ and has (additive) cost $c_i$. The precedence graph is a directed acyclic graph whose vertices are the boxes; to simplify the presentation we assume the existence of an auxiliary box $b_0$ (that has no cost and contains $0$ reward) that is the unique root of $G$. The decision maker selects (possibly in an adaptive way) which boxes to open sequentially, with the constraint that 
    in order to open a box $b_i$, at least one of the boxes corresponding to its parent nodes needs to have been already opened.
    The goal, like in the standard Pandora's Problem, is to maximize the utility (i.e., maximum value minus sum of costs of opened boxes).
    
    \paragraph{Tree constraints.} 
    %
    Given any rooted directed tree and subset of nodes $S$, we define $cl(S)$, the closure of $S$, as the minimal connected set of nodes containing $S$ and the root. Clearly the closure is monotone under inclusion, i.e., if $S\subseteq T$, then $cl(S) \subseteq cl(T)$. Alternatively, the closure of $S$ can be characterized as the union of all the paths from the root to the nodes in $S.$ We can use the closure operator to define a cost function $c_G$ as follows: $c_G(S) = \sum_{b_i \in cl(S)} c_i$.
    \begin{proposition}\label{prop:gs}
        If $G$ is a tree, then $c_G$ is gross substitutes.
    \end{proposition}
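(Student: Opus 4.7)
The plan is to verify the Reijnierse--Maschler--Granot characterization of gross substitutes recalled in \Cref{eq:gs}: namely, that $c_G$ is submodular and that for every $S\subseteq X$ and every triple of distinct $i,j,k\notin S$ the multiset $\{c_G(\{i,j\}\mid S)+c_G(k\mid S),\ c_G(i\mid S)+c_G(\{j,k\}\mid S),\ c_G(j\mid S)+c_G(\{i,k\}\mid S)\}$ has no unique maximum. Submodularity is the easy half: the marginal $c_G(x\mid S)$ equals the sum of $c_y$ along the unique path from $x$ up to (but not including) the closest ancestor of $x$ in $cl(S)$, and enlarging $S$ to $T\supseteq S$ enlarges $cl(S)$, so that closest ancestor can only move nearer to $x$ and the added path can only shrink.

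For the three-element condition, write $A(x)$ for the root-to-$x$ path, so that $cl(S\cup T)\setminus cl(S)=\bigl(\bigcup_{t\in T}A(t)\bigr)\setminus cl(S)$. By linearity, each of the three candidate sums equals $\sum_{y\notin cl(S)} c_y\,g_\ell(I_y,J_y,K_y)$, where $I_y,J_y,K_y$ are the indicators of $y\in A(i),A(j),A(k)$ and
\[
g_1=\max(I,J)+K,\quad g_2=I+\max(J,K),\quad g_3=J+\max(I,K).
\]
A direct check on $\{0,1\}^3$ shows that $g_1,g_2,g_3$ coincide on every input except the three ``asymmetric'' patterns $(1,1,0),(1,0,1),(0,1,1)$, on which exactly one $g_\ell$ takes value $1$ while the other two take value $2$. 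So the three candidate sums can differ only through nodes realizing one of these three patterns.

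The decisive input is the tree property $A(i)\cap A(j)=A(\mathrm{lca}(i,j))$. Setting $u=\mathrm{lca}(i,j,k)$, either $i,j,k$ lie in three pairwise distinct subtrees rooted at children of $u$, in which case $\mathrm{lca}(i,j)=\mathrm{lca}(i,k)=\mathrm{lca}(j,k)=u$ and no node realizes any asymmetric pattern, so all three sums are equal; or exactly two of $\{i,j,k\}$ lie in the same subtree of some child of $u$ and exactly one asymmetric pattern is realized, on the strict chain between $u$ and the deeper pairwise LCA. In the second subcase two of the three $g_\ell$'s always contribute $2$ and the remaining one always contributes $1$ on those nodes, so two of the candidate sums agree and exceed the third. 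In both cases the maximum is attained at least twice, as required. The main obstacle I anticipate is purely the bookkeeping around this LCA case split; all the substance is packed into the tree-intersection identity together with the finite truth-table on $\{0,1\}^3$.
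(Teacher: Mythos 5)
Your proof is correct, and it takes a genuinely different computational route to the same endpoint. Both you and the paper invoke the Reijnierse--Maschler--Granot characterization and both rely on the structural tree fact that at most one of the three pairwise intersections $P_i \cap P_j$, $P_i \cap P_k$, $P_j \cap P_k$ can exceed the triple intersection $P_i \cap P_j \cap P_k$ (equivalently, at most one pairwise LCA is strictly deeper than $\mathrm{lca}(i,j,k)$). The difference is in how the three candidate sums are compared. The paper shifts the base set to $\hat S = S \cup P_{i,j,k}$, shows that passing to $\hat S$ adds the same constant $2\,c_G(P_{i,j,k}\mid S)$ to all three candidates, and then compares the shifted quantities set-theoretically, splitting on whether a deeper pairwise LCA exists. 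You instead expand each candidate as a node-weighted sum $\sum_{y\notin cl(S)} c_y\,g_\ell(I_y,J_y,K_y)$ over the cost tree, reduce the whole comparison to a truth table for $g_1,g_2,g_3$ on $\{0,1\}^3$, and observe that the three functions agree except on the three ``asymmetric'' Boolean patterns, exactly one of which can be realized by nodes of a tree. Your approach buys a cleaner separation of the combinatorics (one finite truth-table check) from the tree geometry (the LCA dichotomy), and avoids the base-set-shift bookkeeping; the paper's approach is terser but leaves more of the multiset comparison implicit. One small point worth keeping an eye on: your case description ``exactly two of $\{i,j,k\}$ lie in the same subtree of some child of $u$'' should be read to include the degenerate situation where one of $i,j,k$ equals $u$ or lies on the path above the deeper pairwise LCA; this does not change the conclusion, since the relevant indicator pattern is still unique, but the phrasing slightly undersells the generality.
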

    \begin{proof}
    Let $S \subseteq T$ any two subsets of the nodes. Moreover, let $b$ be any element not in $ T$, to prove submodularity, we need to show that
    \[
        c_G(S \cup \{b\}) - c_G(S) \ge c_G(T \cup \{b\}) - c_G(T).
    \]
    If $b \in cl(T)$, then the inequality trivially holds, so we need to argue only about elements outside $cl(T)$ (and therefore also outside $cl(S)$).
    
    Since $b$ is not in $cl(S)$, there exists a unique path $s_0,s_1,\dots,s_k$ such that $s_0 \in S$, $s_i\notin S$ for all $i\in [k]$ and $s_k = b$, that connects $b$ to $S$. This means that $c_G(S \cup \{b\}) - c_G(S) = \sum_{i=1}^k c_i$. Now, since $b \not \in cl(T)$, there is also a path connecting $T$ to $b$. Since there is only one path from the root to $b$, we have that this second path is a suffix of $s_0,s_1,\dots,s_k$, starting at a certain $s_{\ell}\in T$. This means that $c_G(T \cup \{b\}) - c_G(T) = \sum_{i=\ell}^k c_i$ for some $0 < \ell \leq k-1$. This concludes the proof of the submodularity. 
    
    To argue about gross-substitutes, we use the equivalent definition of gross substitute we introduced at the end of \Cref{app:combinatorial}.
    We define $P_i, P_j$ and $P_k$ as the paths that connect $S$ (and hence its closure) to $i$, $j$ and $k$, respectively. 
    Let $P_{i,j,k} = P_i \cap P_j \cap P_k $ and let $\hat S = S \cup P_{i,j,k}$. It holds that 
    \[
        \begin{cases}
            c_G(i, j|S) + c_G(k|S) = c_G(i, j|\hat S) + c_G(k|\hat S) + 2 \cdot c_G(P_{i,j,k}|S)\\
            c_G(i|S) + c_G(j,k|S) = c_G(i|\hat S) + c_G(j,k|\hat S) + 2 \cdot c_G(P_{i,j,k}|S)\\
            c_G(j|S) + c_G(i,k|S) = c_G(j|\hat S) + c_G(i,k|\hat S) + 2 \cdot c_G(P_{i,j,k}|S).\\
        \end{cases}
    \]
     
    Therefore, the multiset in Equation~\eqref{eq:gs} has a unique max if and only if the following has a non-unique maximum
    \[
        \{c_G(i, j|\hat S) + c_G(k|\hat S),c_G(i|\hat S) + c_G(j, k|\hat S), c_G(j|\hat S) + c_G(i, k|\hat S)\}.     
    \]
    If $P_i \cap P_j=P_i \cap P_k=P_j \cap P_k= P_{i,j,k}$ then the three terms share the same value:
    \[
        c_G(i, j|S) + c_G(k|S) = c_G(i|S) + c_G(j, k|S) = c_G(j|S) + c_G(i, k|S) = c_G(P_i|S) + c_G(P_j|S) + c_G(P_k|S).    
    \]
    Else, without loss of generality, $(P_i \cap P_j)\neq P_{i,j,k} $, and $P_k \cap (P_i \cup P_j) =  P_{i,j,k}$, so we get that 
    \[
        c_G(P_i \cup P_j|S) + c_G(P_k|S) < c_G(i|S) + c_G(j, k|S) = c_G(j|S) + c_G(i, k|S) = c_G(P_i|S) + c_G(P_j|S) + c_G(P_k|S).
    \]
\end{proof}
    
    Fix any instance $\cI^G = (D_1,\dots,D_n, c_1, \dots, c_n, G)$ of Pandora's Problem with precedence constraint on $G$ and any instance $\cI = (D_1,\dots, D_n,c_G)$ of Pandora's problem with combinatorial cost function $c_G.$ 
    The following proposition establishes an equivalence relation between the two problems.
\begin{proposition} \label{cl:submodular-equivalence}
     For any optimal strategy $\pi_G^*$ for $\cI^G$, the strategy $\pi^*$ for $\cI$ that opens the same boxes in $\cI$ is optimal. 
     
     Moreover, 
     for any optimal strategy $\pi^*$ for $\cI$, the strategy $\pi_G^*$ for $\cI^G$ that  whenever $\pi^*$ opens box $i$, opens the boxes of $cl(i)$ that were not already opened, is optimal. 
\end{proposition}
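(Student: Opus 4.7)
The plan is to establish a two-way correspondence between strategies for $\cI^G$ and strategies for $\cI$ that preserves utility in one direction and only improves it in the other, and then conclude $\max_\pi \util_\cI(\pi) = \max_{\pi_G}\util_{\cI^G}(\pi_G)$, from which both parts of the proposition fall out. The key elementary observation is that the closure operator is idempotent, so $c_G(S) = \sum_{j \in cl(S)} c_j = c_G(cl(S))$ for every $S$; in particular, if a strategy for $\cI^G$ opens a (necessarily closed) set $T$, then the additive cost $\sum_{j \in T} c_j$ it incurs equals the combinatorial cost $c_G(T)$ that the same set would incur in $\cI$.

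For the easy direction, I take any strategy $\pi_G$ for $\cI^G$ and define $\pi$ for $\cI$ that, on any realization, makes the same sequence of decisions as $\pi_G$; since $\cI$ has no order restrictions, $\pi$ is a well-defined strategy, and the random opened set is identical in both instances. By the observation above the realized costs and rewards coincide, so $\util_\cI(\pi) = \util_{\cI^G}(\pi_G)$, and hence $\max_\pi \util_\cI(\pi) \geq \max_{\pi_G} \util_{\cI^G}(\pi_G)$.

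For the other direction, given any strategy $\pi$ for $\cI$, I build $\pi_G$ for $\cI^G$ that, whenever $\pi$ decides to open box $i$, first opens the missing ancestors of $i$ in $G$ in root-to-$i$ order (so precedence is respected) and then feeds $\pi$ only the realized value of $i$, discarding the values of the extra ancestor boxes so that the simulation of $\pi$ is well-defined. The resulting opened set is $cl(S(\pi))$, with cost $\sum_{j \in cl(S(\pi))} c_j = c_G(S(\pi))$, matching the cost of $\pi$ in $\cI$, and reward $\max_{j \in cl(S(\pi))} V_j \geq \max_{j \in S(\pi)} V_j$. Hence $\util_{\cI^G}(\pi_G) \geq \util_\cI(\pi)$, and combining with the previous paragraph I obtain $\max_\pi \util_\cI(\pi) = \max_{\pi_G} \util_{\cI^G}(\pi_G)$.

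Both claims now follow immediately. If $\pi_G^*$ is optimal for $\cI^G$, the corresponding $\pi^*$ opening the same boxes satisfies $\util_\cI(\pi^*) = \util_{\cI^G}(\pi_G^*) = \max_{\pi_G} \util_{\cI^G}(\pi_G) = \max_\pi \util_\cI(\pi)$, so $\pi^*$ is optimal in $\cI$; conversely, if $\pi^*$ is optimal for $\cI$, the closure-opening lift $\pi_G^*$ satisfies $\util_{\cI^G}(\pi_G^*) \geq \util_\cI(\pi^*) = \max_{\pi_G} \util_{\cI^G}(\pi_G)$, so $\pi_G^*$ is optimal in $\cI^G$. No step is truly hard; the only care needed is in the second direction, where one must ensure the lifted $\pi_G$ ignores the ancestor values when relaying observations to $\pi$, so that its decision rule remains a well-defined function of what $\pi$ would see.
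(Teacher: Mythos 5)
Your argument is correct and follows essentially the same two-way reduction the paper uses: in one direction the same opened set gives the same cost and reward, and in the other direction lifting $\pi$ by opening missing ancestors keeps the cost equal (via $c_G(S)=\sum_{j\in cl(S)}c_j$) while only increasing the reward. The one place you are slightly more careful than the paper is in noting that the lifted strategy must discard the ancestor realizations when simulating $\pi$, which is implicit in the paper but worth making explicit.
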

Note that the transformations defined in \Cref{cl:submodular-equivalence} preserve the property of using a fixed-order.
 By \citet{BoodaghiansFLL20}, we know that there always exists a fixed order strategy for $\cI^G$ that is optimal and efficiently computable; this implies that there is a fixed order strategy for $\cI$ (same boxes and cost $c_G$) that is optimal for it.
    \begin{proof}
        Fix any instance $\cI^G$ of Pandora's problem with order constraint on a tree and the corresponding instance $\cI$ of Pandora's problem with the same boxes, but without order constraint and GS cost function $c_G$. Given any strategy $\pi^G$ for $\cI^G$, it is immediate to design a strategy $\pi$ for $\cI$ that attains at least the same expected utility: $\pi$ opens exactly the same boxes as $\pi^G$ (and thus earns exactly the same reward) and, given the structure of the cost function, it pays the same cost. 

    Conversely, fix any strategy $\pi$ for $\cI$,  we can associate a strategy $\pi^G$ for $\cI^G$ that yields at least the same utility. It opens the same boxes as $\pi$ but, before exploring a costly box $i$ (if it is not already opened) always exhausts the boxes within the unique path to box $i$ according to $G$.  All the boxes along this path have a $0$ marginal cost. Clearly, the latter strategy yields larger utility, as it gives larger rewards (as it opens more boxes) but pays at most the same cost (by submodularity).      
    \end{proof}

    \paragraph{General order constraint.} If the precedence graph $G$ is not a tree, we cannot prove an equivalence relation as above. 
    However, we can translate the (non-tree) example given in Theorem 6 of the extended version of \citet{BoodaghiansFLL20} into an instance of our model with a subadditive cost function, which admits no fixed-order strategy that is optimal.

    \begin{claim}
        \label{cl:subadditive-example}
        There exists an instance of Pandora's problem with subadditive cost $c$ that admits no fixed-order strategy that is optimal.
    \end{claim}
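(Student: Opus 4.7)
The plan is to leverage the instance from \citet{BoodaghiansFLL20} in which the underlying precedence graph $G$ is a DAG (not a tree) and no fixed-order strategy is optimal, and to translate it into an instance of our model whose cost function is subadditive and inherits this impossibility. Let $\cI^G = (D_1,\ldots,D_n, c_1,\ldots, c_n, G)$ be such an instance, with additive per-box costs $c_i \geq 0$ and an auxiliary root $b_0$. Generalizing the closure operator of \Cref{app:pandora} to arbitrary DAGs, let $cl(S)$ denote the minimal set of boxes containing $S$ that is closed under taking parents in $G$, and define $c_G(S) := \sum_{b_i \in cl(S)} c_i$. The candidate instance for the claim is $\cI = (D_1,\ldots,D_n, c_G)$.

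First I would verify that $c_G$ is subadditive (though, in light of \Cref{thm:main}, it cannot be submodular when $G$ is not a tree). For any $S, T \subseteq [n]$, the set $cl(S) \cup cl(T)$ is closed under taking parents and contains $S \cup T$, so by minimality $cl(S \cup T) \subseteq cl(S) \cup cl(T)$; summing $c_i$ over both sides gives $c_G(S \cup T) \leq c_G(S) + c_G(T)$. Next I would establish an equivalence between $\cI^G$ and $\cI$ that mirrors \Cref{cl:submodular-equivalence} but remains valid for arbitrary DAGs. (a) Any strategy $\pi^G$ for $\cI^G$ induces a strategy for $\cI$ that opens the same boxes and obtains the same utility: the set $S$ of opened boxes satisfies $cl(S) = S$, so $c_G(S) = \sum_{b_i \in S} c_i$ matches the cost in $\cI^G$, and the reward is identical. (b) Conversely, any strategy $\pi$ for $\cI$ lifts to a strategy $\pi^G$ for $\cI^G$ that, whenever $\pi$ inspects a box $i$, opens every box in $cl(\{i\})$ not yet opened in some topological order of $G$ and then applies $\pi$'s decision rule based only on the value observed at $i$. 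The cost of $\pi^G$ equals $c_G$ evaluated on the boxes inspected by $\pi$, matching $\pi$'s cost, while its reward can only be larger; hence $\utili{\cI^G}{\pi^G} \geq \utili{\cI}{\pi}$. Crucially, if $\pi$ is fixed-order with permutation $\sigma$, then $\pi^G$ follows a fixed topological extension of $\sigma$ and is fixed-order in the sense of \citet{BoodaghiansFLL20}.

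To conclude, suppose for contradiction that some fixed-order strategy $\pi^*$ is optimal for $\cI$. By (b), $\pi^*$ lifts to a fixed-order $\pi^G$ for $\cI^G$ with $\utili{\cI^G}{\pi^G} \geq \utili{\cI}{\pi^*}$. Let $\pi^{G,*}$ be any optimal strategy for $\cI^G$; applying (a) yields a strategy $\pi'$ for $\cI$ with $\utili{\cI}{\pi'} = \utili{\cI^G}{\pi^{G,*}}$, and optimality of $\pi^*$ gives $\utili{\cI}{\pi^*} \geq \utili{\cI}{\pi'}$. Chaining these inequalities shows $\utili{\cI^G}{\pi^G} \geq \utili{\cI^G}{\pi^{G,*}}$, so $\pi^G$ is optimal and fixed-order for $\cI^G$, contradicting the \citet{BoodaghiansFLL20} example. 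The main obstacle is the preservation of the fixed-order property in direction (b): one must check that inserting all ancestors of each primary box in a topological order of $G$ produces a legitimate fixed-order strategy. This rests on the facts that $c_G$ already charges for the whole closure (so opening the ancestors costs nothing extra) and that $\pi^G$'s stopping decisions depend only on observed values rather than on the particular ordering of ancestors within a closure.
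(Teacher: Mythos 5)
Your reduction breaks down precisely where the paper warns it must: for non-tree precedence graphs, the paper explicitly notes that the equivalence relation cannot be proven, and instead constructs a concrete $4$-box instance (with explicit distributions and a cost table) that one verifies by direct computation. The specific failure is in your direction~(a). In the order-constrained model of \citet{BoodaghiansFLL20}, a box may be opened as soon as \emph{at least one} of its parents has been opened, so the set $S$ of boxes opened by a valid strategy for $\cI^G$ is generally \emph{not} closed under taking all parents when $G$ is a DAG. Take a diamond $b_0 \to b_1 \to b_3$, $b_0 \to b_2 \to b_3$: a valid strategy may open $\{b_0,b_1,b_3\}$ without touching $b_2$, so $cl(S) \supsetneq S$ for $S = \{b_0,b_1,b_3\}$. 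Your assertion that $cl(S)=S$ is therefore false; the induced strategy for $\cI$ pays $c_G(S) = \sum_{b_i\in cl(S)} c_i > \sum_{b_i\in S} c_i$, strictly exceeding the cost paid in $\cI^G$. The equality $\utili{\cI}{\pi'} = \utili{\cI^G}{\pi^{G,*}}$ thus degrades to the one-sided $\utili{\cI}{\pi'} \le \utili{\cI^G}{\pi^{G,*}}$, which points the wrong way, and your final chain becomes $\utili{\cI^G}{\pi^G}\ge\utili{\cI}{\pi^*}\ge\utili{\cI}{\pi'}\le\utili{\cI^G}{\pi^{G,*}}$, yielding no contradiction.

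More conceptually, the combinatorial-cost instance $\cI$ with cost $c_G$ is a genuinely different (and more expensive) optimization problem than $\cI^G$ when $G$ is a DAG, because $c_G$ always charges the full ancestor set while $\cI^G$ only requires opening some parent. Optimal utilities and optimal-strategy structure therefore need not transfer: the adaptivity of the optimum for $\cI^G$ is not automatically inherited by $\cI$. This is exactly why the tree case works (each node has a unique parent, so ``at least one parent'' coincides with ``all parents'' and $cl(S)=S$ for any validly opened $S$), and why \Cref{cl:submodular-equivalence} is stated only for trees. To repair your argument you would either need a different cost function for which the equivalence genuinely holds on the relevant DAG, or you should do what the paper does: write down a specific small subadditive instance (the paper uses $4$ boxes, with box~$1$ given zero marginal cost so it serves as the adaptive pivot) and verify directly that its unique optimal strategy branches on the realization of $V_1$ while no fixed-order strategy matches its utility.
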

    \begin{proof}
        Consider the following instance on $4$ boxes. The rewards are weighted Bernoulli defined as follows:
        \[
            V_1=
            \begin{cases}
            100 &\text{ w.p. }1/3 \\
                2.5&\text{ w.p. }1/3\\
                0&\text{ w.p. }1/3  
            \end{cases}
            \quad \quad
            V_2 = 2
            \quad \quad
            V_3 = 
            \begin{cases}
                3&\text{ w.p. }1/2\\   
                0&\text{ w.p. }1/2
            \end{cases}
            \quad \quad
            V_4 = 
            \begin{cases}
                6&\text{ w.p. }1/2\\
                0&\text{ w.p. }1/2    
            \end{cases}
        \]
        The cost function is defined as follows: Box $1$ has a marginal cost of $0$ given any other subset of boxes. For the other boxes, the cost is given by: $c(\{2,3,4\}) = c(\{2,3\}) = 2.1$, $c(\{3\}) = c(\{3,4\}) = 1.1$, $c(\{2\}) = c(\{4\}) = c(\{2,4\})= 1$. 
        One can easily verify that this cost function is subadditive (but not submodular). This function is an adaptation of Theorem~9 of \citet{BoodaghiansFLL20} to our setting.
        
        One can verify that the optimal strategy is the following: 
        Open box $1$. 
        If $V_1=100$, halt. 
        Else, if $V_1=2.5$, open box $4$. 
        If $V_4=6$, halt. 
        Else ($V_4=0$), open box $3$ and halt. 
        Finally, if $V_1=0$, open box $4$. 
        If $V_4=6$, halt. 
        Else, open box $2$ and halt. 
        Moreover, one can also verify that no fixed-order strategy obtains the same expected utility or more.
    \end{proof}

\section{Missing Proofs from Section \ref{sec:preliminaries}}
\label{app:preliminaries}

\OptimalFixThresholds*
\begin{proof}
Let $\sigma:[n] \to [n]$ be a permutation,
and for sake of readability we assume without loss of generality that $\sigma$ is the identity permutation.
Consider any strategy $\pi$ which is restricted to inspecting the boxes in $[n]$ in the order $1,\ldots,n$.
Let $i\in [n]$, and let us assume that the strategy $\pi$ is in the stage where it has already inspected the boxes $\left[i-1\right]$, and it now has to decide whether to inspect box $i$ or halt.
Let $x\geq 0$ be the maximum value among those observed in the previous rounds.

Given these parameters,
we define $f_i(x)$ to be the maximum extra utility attainable by any strategy from this point onwards.
If $\pi$ decides to halt, then this extra utility is 0.
If $\pi$ decides to open box $i$, then the contribution of that box to the utility is $(V_{i} - x)^+ - c(i \mid [i-1])$,
and $\pi$ moves to round $i+1$, from which the maximum extra utility attainable is $f_{i+1}(\max(V_i,x))$.
We thus get the following backward-recursive relation:
$$
f_i(x) = \max\left[0, \E{(V_{i} - x)^+ - c(i \mid [i-1]) + f_{i+1}(\max(V_i,x))}\right]
$$
The base function $f_{n+1}$ is defined to be the constant function 0,
and the maximum utility attainable in total by any strategy with this order is $f_1(0)$.

Now, note that $f_i(x)$ is a monotone decreasing function of $x$.
Furthermore, $f_i$ is $1$-Lipshtitz and thus continuous. To see this, fix any two values $x<y$, and let $g_i(y)$ be the extra utility (with respect to $y$) attainable playing the same strategy underlying $f_i(x)$ (i.e., playing optimally pretending to have as largest reward found so far the value $x$ and not $y$). It is easy to see that $f_i(x) - g_i(y)$ is at most $y-x$, thus we have the following:
\[
    f_i(x) - f_i(y) \le f_i(x) - g_i(y) \le y - x,
\]
where the first inequality follows from the suboptimality of the strategy followed in $g_i$ given reward $y$. 


Consider the set $C_i = \{x \in \mathbb{R}_{\geq 0} \mid f_i(x) = 0\}$.  For every $i$ we define
$$
t_i = 
\begin{cases}
  \infty  & C_i = \emptyset \\
  \min C_i & \text{Otherwise}
\end{cases}
$$

Since $f_i$ is continuous and monotonic, then $t_i$ is well defined.  Note also that $f_i(x) = 0 $ if and only if $x \geq t_i$.  Therefore,
the strategy $\pi$ 
that for every round $i$
halts if and only if the maximum value observed thus far, $x$, is at least $t_i$,
is an optimal strategy.
This concludes the proof since the strategy we described is a fixed order strategy with thresholds $t_i$.
\end{proof}
\section{Missing Proofs from Section \ref{sec:MTT}}\label{app:MTT}
In this appendix we provide missing proofs from Section \ref{sec:MTT}.

\bottomEqualsTopMinusPpi*
\begin{proof}    
The first bullet is clearly implied by Observation \ref{obs:utility_expansion}, and the fact that for every $i \in [n]$ we have
$$
v_i - v_\A \leq (v_i - v_\A)^+ \leq v_i
$$
The second bullet holds since
\begin{align*}
    \util_N(\pi \mid T)  &= \E{\max_{i \in S(\pi)} (V_i)} - \E{c\left(S(\pi) \mid \{\A\} \cup T\right)} \\
                         &= p_{\left(\pi\right)} \cdot \E{\max_{i \in S(\pi)} (V_i) \mid \exists i \in \pi \textrm{ s.t. } V_i = v_i} 
                             - \E{c\left(S(\pi) \mid \{\A\} \cup T\right)} + p_{\left(\pi\right)}v_\A - p_{\left(\pi\right)}v_\A\\
                         &= p_{\left(\pi\right)} \cdot \E{\max_{i \in S(\pi)} (V_i - v_\A) \mid \exists i \in \pi \textrm{ s.t. } V_i = v_i} 
                             - \E{c\left(S(\pi) \mid \{\A\} \cup T\right)} + p_{\left(\pi\right)}v_\A \\
                         &= \util_M(\pi \mid T) +  p_{\left(\pi\right)}v_\A
\end{align*}

Finally, the third bullet is immediately implied by the fact that $\left(v_i-v_\A\right)^+ = v_i - v_\A$ for every $i \in \pi^Y$, and by the fact that 
\begin{align*}
    \util_Y(\pi \mid T) &= \E{\max_{i \in S(\pi)} (V_i - v_\A)^+} - \E{c\left(S(\pi) \mid \{\A\} \cup T\right)} \\
                        &= p_{\left(\pi\right)} \cdot \E{\max_{i \in S(\pi)} (V_i - v_\A)^+ \mid \exists i \in \pi \textrm{ s.t. } V_i = v_i} 
                            - \E{c\left(S(\pi) \mid \{\A\} \cup T\right)}
\end{align*}
\end{proof}

    
\kGreaterThanOne*
\begin{proof}

    Assume towards contradiction that $k=0$.  In this case we have $\pi^N = \left(B^\pre,A^\suff\right)= \left(\pi^B,\pi^A\right)$.
    We split to cases: (a) $\pi^B$ is a permutation of $\pi^Y$, (b) $\pi^B$ is strictly included in $\pi^Y$. 
    
    \textbf{Case (a):}
    Note that we have $\util_N\left(\pi^N\right) = \util_N\left(\pi^B\right) + q_{\left(\pi^B\right)} \util_N\left(\pi^A \mid \pi^B\right)$.
    Furthermore, we have $p_{\left(\pi^Y\right)} = p_{\left(\pi^B\right)}$ since  $\pi^B$ and $\pi^Y$ are permutations of each other.
    Thus, since $\pi^N$ is optimal for the scenario that $V_\A = 0$, then in particular we must have $\util_N\left(\pi^B\right) \geq \util_N\left(\pi^Y\right)$ as otherwise we could replace $\pi^N$ by the strategy $\left(\pi^Y,\pi^A\right)$ and strictly improve utility --- note that the fact that $\pi^B$ and $\pi^Y$ are permutations of one another also implies $q_{\left(\pi^B\right)} \util_N\left(\pi^A \mid \pi^B\right) = q_{\left(\pi^Y\right)}\util_N\left(\pi^A \mid \pi^Y\right)$.
    
    Similarly to the proof of Lemma \ref{lem:easy_cases}, from here we get
    $$
    \util_N\left(\pi^Y\right) = \util_Y\left(\pi^Y\right) + p_{\left(\pi^Y\right)}v_\A \geq \util_Y\left(\pi^B\right) + p_{\left(\pi^B\right)}v_\A
    = \util_N\left(\pi^B\right) \geq \util_N\left(\pi^Y\right),
    $$
    where the first and second equalities hold by Observation \ref{obs:bottom_equals_top_minus_p_pi_v_pi}, 
    and the first inequality holds by the optimality of $\pi^Y$ for the scenario that $V_\A = v_\A$.
    Thus all expressions in the above chain are equal and in particular we have $\util_N\left(\pi^Y\right) = \util_N\left(\pi^B\right)$.
    This implies that the strategy $\pi'$ obtained from $\piStar$ by replacing $\pi^B$ with $\pi^Y$ is also optimal.
    Now consider the impulsive strategy $\pi'' = \left(\pi^Y, \A,\pi^A\right)$.
    Since $v_i \geq v_\A$ for any $i \in \pi^Y$,
    then the maximum value observed by $\pi''$ coincides with that of $\pi'$ for any realization of the boxes.
    On the other hand the cost incurred by $\pi''$ is weakly less then that of $\pi'$, again for any realization of the boxes.
    Thus the impulsive strategy $\pi''$ is optimal as well, a contradiction.

    \textbf{Case (b):}  Note that in this case, since $\pi^Y$ (and therefore also $\pi^B$) does not have a deterministic box,
    then in particular we have $q_{\left(\pi^B\right)} - q_{\left(\pi^Y\right)} > 0$.
    Now, the fact that $\pi^N = \left(\pi^B,\pi^A\right)$ implies
    \begin{align}
    \util_M(\pi^N_A \mid \pi^Y) = q_{\left(\pi^B\right)}\util_M(\pi^A \mid \pi^Y) \label{eq:k=0}
    \end{align}
    since in the impulsive strategy with dummy boxes $\left(\pi^B,\pi^A\right)_A$,
    the strategy $\pi^A$ is executed if and only if all the dummy boxes corresponding to $B$ are not realized, which happens with probability $q_{\left(\pi^B\right)}$.
    From here we get
    \begin{align*}
        0   &\leq \util_Y(\pi^Y) - \util_Y(\pi^B) \\
            &= \util_M(\pi^Y) - \util_M(\pi^B) \\
            &= \util_M(\pi^Y) - \util_M(\pi^N_B) \\
            &\leq \util_M(\pi^N_A \mid \pi^Y) - q_{\left(\pi^Y\right)}\util_M\left(\pi^A \mid \pi^Y\right) \\
            &= q_{\left(\pi^B\right)}\util_M(\pi^A \mid \pi^Y) - q_{\left(\pi^Y\right)}\util_M\left(\pi^A \mid \pi^Y\right) \\
            &= \left(q_{\left(\pi^B\right)} - q_{\left(\pi^Y\right)}\right) \util_M\left(\pi^A \mid \pi^Y\right),
    \end{align*}
    where the first inequality holds since $\pi^Y$ is optimal for the scenario where $V_\A = v_\A$,
    the first equality holds by Observation \ref{obs:bottom_equals_top_minus_p_pi_v_pi},
    the second equality holds since the impulsive strategy with dummy boxes $\left(\pi^B,\pi^A\right)_B$ exactly coincides with the strategy $\pi^B$,
    the second inequality holds by inequality (\ref{eq:low_branch_inequality}),
    and the third equality holds by equation (\ref{eq:k=0}).

    Since $q_{\left(\pi^B\right)} - q_{\left(\pi^Y\right)} > 0$, then this implies $\util_M\left(\pi^A \mid \pi^Y\right) \geq 0$, a contradiction
    to 
    \Cref{clm:u_M(piA mid piY) < 0}.
\end{proof}


\section{Missing Proofs from Section~\ref{sec:reduction}} \label{app:reduction}

In this appendix, we show that the discretization parameter $\kappa$ is well defined and that, similarly to what is done in Proposition~\ref{prop:maintain-sm}, the reduction $\transformber$ preserves several additional properties.

    \begin{lemma}
    \label{lem:kappa}
        Let $V$ be a non-negative random variable with finite expectation, then the following holds: $\lim_{\kappa \to \infty}\E{(V - \kappa)^+} = 0$.
    \end{lemma}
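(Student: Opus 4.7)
The plan is to use the dominated convergence theorem (or, equivalently, the absolute continuity of the integral of the integrable function $V$).

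First, I would observe the pointwise bound $(V-\kappa)^+ \leq V$, which holds almost surely for every $\kappa \geq 0$ since $V \geq 0$. Next, for every fixed outcome (where $V$ takes a finite value), I have $(V-\kappa)^+ = 0$ whenever $\kappa \geq V$, so $(V-\kappa)^+ \to 0$ pointwise as $\kappa \to \infty$. Since $V$ has finite expectation, it is an integrable dominating function, and the dominated convergence theorem immediately yields $\lim_{\kappa \to \infty} \E{(V-\kappa)^+} = \E{0} = 0$.

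If a more elementary approach is preferred (avoiding an appeal to dominated convergence), I would instead write
\[
\E{(V-\kappa)^+} = \E{(V-\kappa)\indicator{V>\kappa}} \leq \E{V \cdot \indicator{V>\kappa}},
\]
and then argue that $\E{V \cdot \indicator{V>\kappa}} \to 0$ as $\kappa \to \infty$. This last fact follows from monotone convergence applied to the increasing sequence $V \cdot \indicator{V \leq \kappa}$, which converges pointwise to $V$; since $\E{V} < \infty$, we get $\E{V \cdot \indicator{V \leq \kappa}} \to \E{V}$, and therefore $\E{V \cdot \indicator{V > \kappa}} = \E{V} - \E{V \cdot \indicator{V \leq \kappa}} \to 0$.

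There is no real obstacle here: the lemma is a standard measure-theoretic fact about integrable random variables, and the only subtlety is invoking the appropriate convergence theorem. The proof fits in a few lines.
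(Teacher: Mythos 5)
Your proof is correct and is in fact cleaner than the paper's. You apply the dominated convergence theorem directly to the sequence of random variables $(V-\kappa)^+$, which satisfies $0 \le (V-\kappa)^+ \le V$ almost surely and converges pointwise to $0$ on $\{V < \infty\}$ (a full-measure set since $\E{V} < \infty$); the integrable dominant $V$ then gives the claim in one step. The paper instead first rewrites the expectation via the layer-cake (tail-integral) formula as $\E{(V-\kappa)^+} = \int_0^\infty \P{V \ge \lambda + \kappa}\,\mathrm{d}\lambda$ and then applies dominated convergence to the \emph{deterministic} functions $f_\kappa(\lambda) = \P{V \ge \lambda + \kappa}$ on $[0,\infty)$ with Lebesgue measure, dominated by $g(\lambda) = \P{V \ge \lambda}$, whose integral is $\E{V}$. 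Both routes are rigorous and invoke the same theorem, but yours avoids the detour through the tail representation. Your alternative elementary argument --- bounding $(V-\kappa)^+ \le V\,\indicator{V>\kappa}$ and using monotone convergence on $V\,\indicator{V\le\kappa} \uparrow V$ --- is also sound and arguably the most self-contained of the three, requiring only monotone convergence rather than dominated convergence.
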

    \begin{proof}
        We use the following equality that holds for any non-negative random variable:
        \[
            \E{V} = \int_0^{+\infty} \P{V \ge \lambda} \text{d}\lambda,
        \]
        where the integral is the standard Lebesgue integral.
        In particular, the previous equality implies that 
        \begin{equation}
        \label{eq:tail}
            \lim_{\lambda \to + \infty} \P{V \ge \lambda} = 0.
        \end{equation}
        For any positive integer $\kappa$, we have the following:
        \begin{align*}
             \E{(V-\kappa)^+} &= \int_0^{+\infty} \P{(V-\kappa)^+ \ge \lambda} \text{d}\lambda = \int_0^{+\infty} \P{V \ge \lambda + \kappa} \text{d}\lambda,
        \end{align*}
        where we used the fact that the integrand is upper bounded by $1$ and $\{\lambda = 0\}$ has $0$ Lebesgue measure.
        We want to use Lebesgue's dominated convergence theorem (see, e.g. Theorem $11.32$ of \citet{Rudin76}) to argue that the last term of the previous equality goes to $0$ as $\kappa$ goes to infinity. 
        
        To this end, define the sequence of functions $f_{\kappa}: [0,\infty) \to [0,1]$ as $f_\kappa(\lambda) = \P{V \ge \lambda + \kappa}$ and the function  $g:[0,\infty) \to [0,1]$ as $g(\lambda) = \P{V \ge \lambda}$.  We verify the hypothesis of Lebesgue's dominated convergence theorem:
        $(i)$ $0 \le f_{\kappa}(\lambda) \le g(\lambda)$ for all $\lambda\in[0,+\infty)$ and all $\kappa$ (monotonicity of probability), $(ii)$ $g$ is integrable in $[0,+\infty)$ as $\int_0^{+\infty} g(\lambda) \text{d}\lambda$ is exactly $\E{V}$, $(iii)$ for any fixed $\lambda$ it holds that $\lim_{\kappa \to +\infty} f_{\kappa}(\lambda) = 0$ (stated differently, $\lim_{k \to +\infty} \P{V \ge k + \lambda} = 0$ for all fixed $\lambda$ because $\E{V} < \infty$, see \Cref{eq:tail}). All in all, we have that
        \[
            \lim_{\kappa \to +\infty}\E{(V-\kappa)^+} = \lim_{\kappa \to +\infty} \int_0^{+\infty} f_{\kappa}(\lambda) \text{d}\lambda = \int_0^{+\infty} \lim_{\kappa \to +\infty} f_{\kappa}(\lambda) \text{d}\lambda = 0.
        \]
    \end{proof}

\begin{claim}\label{cl:mrf}
If $c$ is a matroid rank function (MRF), then $c'$ obtained by transformation $\transformber$ is also MRF. 
\end{claim}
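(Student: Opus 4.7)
\medskip

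\noindent\textbf{Proof plan for Claim~\ref{cl:mrf}.} The plan is to exhibit, from a matroid $\cM$ on $[n]$ whose rank function equals $c$, an explicit matroid $\cM'$ on $[n]\times [m]$ whose rank function equals $c'$. Let $f:[n]\times[m]\to[n]$ denote the projection onto the first coordinate, $f(i,j)=i$, and note that for any $S\subseteq [n]\times[m]$, $f(S) = \{i\mid \exists j\in[m]\text{ s.t. }(i,j)\in S\}$, so by definition of the transformation $\transformber$ we have $c'(S)=c(f(S))$.

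I would then define $\cM'$ to be the collection of all sets $S\subseteq[n]\times[m]$ such that $f$ restricted to $S$ is injective (i.e., $S$ contains at most one ``copy'' $(i,j)$ for each $i$) and $f(S)\in\cM$. Intuitively, $\cM'$ is obtained from $\cM$ by replacing each element $i\in[n]$ by $m$ parallel copies $(i,1),\ldots,(i,m)$. Downward closure is immediate: if $T\subseteq S\in\cM'$, then $f|_T$ is the restriction of the injection $f|_S$ and $f(T)\subseteq f(S)\in\cM$, so $f(T)\in\cM$ by downward closure in $\cM$.

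The only real verification is the augmentation axiom. Given $A,B\in\cM'$ with $|A|>|B|$, injectivity of $f$ on $A$ and on $B$ gives $|f(A)|=|A|>|B|=|f(B)|$, so by augmentation in $\cM$ there is an element $y\in f(A)\setminus f(B)$ with $f(B)\cup\{y\}\in\cM$. Picking the unique $x\in A$ with $f(x)=y$, one checks that $x\notin B$ (since $y\notin f(B)$), that $f|_{B\cup\{x\}}$ is still injective, and that $f(B\cup\{x\})=f(B)\cup\{y\}\in\cM$, so $B\cup\{x\}\in\cM'$, as required.

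Finally, I would compute the rank function $r_{\cM'}$ of $\cM'$ at an arbitrary $S\subseteq [n]\times[m]$. On the one hand, any independent $T\subseteq S$ satisfies $|T|=|f(T)|\le r_\cM(f(S))=c(f(S))$. On the other hand, picking an independent set $U\subseteq f(S)$ of size $c(f(S))$ and choosing one preimage $(i,j_i)\in S$ for each $i\in U$ yields an independent $T\subseteq S$ in $\cM'$ of size $c(f(S))$. Thus $r_{\cM'}(S)=c(f(S))=c'(S)$, which shows $c'$ is the rank function of $\cM'$. I do not expect a serious obstacle here: the construction is standard and the verification is routine once one has the right candidate matroid in mind.
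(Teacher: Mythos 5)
Your construction is the same as the paper's: your condition ``$f$ restricted to $S$ is injective and $f(S)\in\cM$'' coincides exactly with the paper's ``$B^i_S\le 1$ for all $i$ and $A_S\in\cM$'' (with $f(S)=A_S$), and the augmentation argument proceeds identically via $|S|=|A_S|$ for $S\in\cM'$. The only difference is that you explicitly verify $r_{\cM'}=c'$ at the end, whereas the paper merely asserts it; your extra paragraph is correct and makes the proof slightly more complete.
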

\begin{proof}
    Cost function $c$ is MRF, thus there exists a matroid $\mathcal{M}\subseteq 2^{[n]}$ such that for every $S \subseteq [n]$, $c(S) =r_{\mathcal{M}}(S)$.
For a set $S\subseteq [n] \times [m]$, let $A_S= \{i \mid \exists j \mbox{ such that } (i,j)\in S\}$, and let $B^i_S = |\{ j \mid  (i,j)\in S \}|$.
Consider the family of subsets $\mathcal{M}' \subseteq 2^{[n]\times[m]}$, where 
\[
    \mathcal{M}' =\{S\mid A_S\in \mathcal{M}, B_S^i \leq 1 \,  \forall{i\in[n]}\}.
\]
We claim that $\mathcal{M}'$ is a matroid. It is easy to verify that $\mathcal{M}'$ is downward-closed, so we only need to verify the Augmentation Property (see Appendix~\ref{app:combinatorial}).
Let $S,T\in \mathcal{M}'$, such that $|S|>|T|$. 
$\mathcal{M}$ is a matroid, thus, since $A_S,A_T\in\mathcal{M}$, and $|A_S|>|A_T|$, there exists $i \in A_S \setminus A_T$ such that $A_T\cup\{i\}\in \mathcal{M}$. This implies that there exists $j$ such that $ (i,j)\in S$. Therefore, since $(i)$ $A_{T\cup\{(i,j)\}}= A_T \cup \{i\}$, $(ii)$ $B^{i'}_{T\cup\{(i,j)\}} =B^{i'}_{T}$ for $i'\neq i$, and $(iii)$  $B^{i}_{T\cup\{(i,j)\}}=1$, it holds that $T\cup\{(i,j)\}\in \mathcal{M}'$. The function $c'$ is the rank function of the matorid $\mathcal{M}'$.
\end{proof}
\begin{claim}\label{cl:gs}
If $c$ is gross-substitutes, then $c'$ obtained by transformation $\transformber$ is also gross-substitutes. 
\end{claim}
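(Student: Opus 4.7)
The plan is to invoke the Reijnierse--Gellekom--Miranda characterization of gross substitutes recalled at the end of Appendix~\ref{app:combinatorial}: a monotone function is gross substitutes if and only if it is submodular and, for every $S$ and every three distinct elements $i,j,k$ outside $S$, the multiset in Equation~\eqref{eq:gs} has no unique maximum. Submodularity of $c'$ is already provided by Proposition~\ref{prop:maintain-sm}, so only the triple condition remains to be verified.

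I would fix $S \subseteq [n]\times[m]$ and distinct $e_\ell = (i_\ell, j_\ell) \notin S$ for $\ell\in\{1,2,3\}$. The key structural observation is that $c'(T) = c(A_T)$ with $A_T=\{i\mid\exists j,\,(i,j)\in T\}$, so marginals of $c'$ on top of $S$ depend only on which new row indices are added, and a repeated row or a row already in $A_S$ contributes nothing. I would then split on the multiset $\{i_1, i_2, i_3\}$ into three cases: (a) all three row indices are distinct, (b) exactly two coincide, (c) all three coincide.

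Case (c) is immediate: all six single-element and pair marginals equal $c(i_1\mid A_S)$, hence the three sums coincide. Case (a) subdivides according to how many of the $i_\ell$ already lie in $A_S$; when none do, the three sums for $c'$ at $S$ are literally the three sums for $c$ at $A_S$ with respect to $i_1,i_2,i_3$, and the claim follows from applying the RGM condition to $c$ at $A_S$. Each subcase in (a) with one or more $i_\ell\in A_S$ reduces to a short direct check in which several sums collapse to equal values, with submodularity of $c$ handling whichever sum remains. The delicate step is case (b): say $i_1=i_2=:i$ and $i_3=:i'$ with $i\neq i'$. Here the equality $A_{S\cup\{e_1,e_2\}}=A_S\cup\{i\}=A_{S\cup\{e_1\}}=A_{S\cup\{e_2\}}$ collapses the pair marginal into the singleton marginals, giving $c'(\{e_1,e_2\}\mid S)=c'(e_1\mid S)=c'(e_2\mid S)=:\alpha$; similarly $c'(\{e_1,e_3\}\mid S)=c'(\{e_2,e_3\}\mid S)=:\beta$ and $c'(e_3\mid S)=:\gamma$. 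The three sums become $\alpha+\gamma,\ \alpha+\beta,\ \alpha+\beta$, and monotonicity of $c$ yields $\beta=c(A_S\cup\{i,i'\})-c(A_S)\ge c(A_S\cup\{i'\})-c(A_S)=\gamma$, so the two copies of $\alpha+\beta$ tie for the maximum.

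The main obstacle I anticipate is case (b): unlike case (a), the RGM condition for $c$ itself does not directly transfer, and one must instead exploit monotonicity of $c$ to compare $\beta$ and $\gamma$. Aside from this point (and routine bookkeeping of when $i_\ell\in A_S$ causes marginals to vanish), the argument is mechanical, and combining the three cases with submodularity gives that $c'$ is gross substitutes.
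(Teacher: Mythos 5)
Your proof is correct, but it takes a genuinely different route from the paper's. The paper works directly with the price--demand definition of gross substitutes: it maps a price vector $p$ on $[n]\times[m]$ to the reduced vector $q$ on $[n]$ with $q_i=\min_j p_{i,j}$, establishes a two-way correspondence between demand sets of $c'$ under $p$ and demand sets of $c$ under $q$, and then transfers the GS substitution property across that correspondence. Your approach instead applies the local Reijnierse--Gellekom--Miranda characterization (submodularity plus the no-unique-max condition on triples), which the paper only records as background in Appendix~\ref{app:combinatorial} and actually uses for \Cref{prop:gs} rather than here. Since $c'(T)=c(A_T)$ makes all marginals of $c'$ depend only on the added row indices, the triple condition for $c'$ at $S$ collapses to a small number of cases over $\{i_1,i_2,i_3\}$ and their intersection with $A_S$, and you correctly identify the one non-trivial case (exactly two coinciding rows) and dispose of it using monotonicity of $c$ to get $\beta\ge\gamma$, so the repeated value $\alpha+\beta$ is (weakly) maximal. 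The paper's demand-side argument is shorter to state but needs a careful verification that the pullback $q$ really captures the demand correspondence; your RGM argument is more mechanical and self-contained, at the price of the casework, and it also makes the dependence on \Cref{prop:maintain-sm} explicit (since $c$ being GS implies $c$ is submodular, whence $c'$ is submodular). Both proofs are sound; neither subsumes the other.
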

\begin{proof}

Given $T \subseteq [n] \times [m]$, let $A_T =\{i \mid \exists j, (i,j)\in T\}$.

To prove the claim, we first prove the following two observations. Given a vector of prices $p$ for $[n]\times [m]$, let $q$ be the vector of prices for $[n]$ such that $q_i=\min_j p_{i,j}$. As a convention, we denote with $p(S)$ the sum of all the prices of the elements in $S$.
 Then 
\begin{equation} \label{eq:1}
     S\in \arg\max_{T \subseteq [n] \times [m]} c'(T) -p(T) \implies A_S \in \arg\max_{T \subseteq [n]} c(T)-q(T) 
\end{equation}
\begin{equation} \label{eq:2}
       q(A_S) =p(S) \wedge A_S \in \arg\max_{T \subseteq [n]} c(T)-q(T)  \implies 
      S\in \arg\max_{T \subseteq [n] \times [m]} c'(T) -p(T)
\end{equation}
We first observe that: $(i)$ for every $S\subseteq [n]\times [m]$, it holds that $c'(S)=c(A_S)$, and $q(A_S)\leq p(S)$. $(ii)$ For each set $A\subseteq [n]$, then for the set $T=\{(i,\arg\min_{j\in[m]} p_{i,j}) \mid i\in A\}$, it holds that $p(T)=q(A)$.

Equation~\eqref{eq:1} follows since  given a set $S\in\arg\max_{T \subseteq [n] \times [m]} c'(T) -p(T)$, assume towards contradiction that there exists $A\subseteq [n]$ such that $c(A)-q(A)> c(A_S)-q(A_S)$, then if we let $T=\{(i,\arg\min_{j\in[m]} p_{i,j}) \mid i\in A\}$, it holds that $p(T)=q(A)$, then $c'(T)-p(T)= c(A)-q(A) > c(A_S)-q(A_S) \geq c'(S)-p(S)$ which is a contradiction.

Equation~\eqref{eq:2} follows since given $S\subseteq [n]\times[m]$ that satisfies the conditions of the equation, for every $S' \subseteq [n]\times [m]$, it holds that $$ c'(S')-p(S') \leq c(A_{S'})-q(A_{S'}) \leq c(A_S)-q(A_S) =c'(S)-p(S).$$

To conclude the proof, we need to argue that for any vectors of prices $p,p'$ over $[n]\times [m]$, such that $p'\geq p$ component-wise,  and for every set $S\in \arg\max_T c'(T) -p(T)$ there exists $S'\in \arg\max_T c'(T)-p'(T)$  that contains  $R=\{(i,j)\in S \mid p_{i,j}=p'_{i,j}\}$.
It holds that $p(S) =q(A_S)$ (otherwise $T=\{(i,\arg\min_{j\in[m]} p_{i,j}) \mid i\in A_s\}$ has the same value with lower price).
Let $q'_i =\min_j p'_{i,j}$.
Since for every $i\in A_R$, it holds that $q_i=q'_i$, and 
by gross-substitutes of $c$, it holds that there exists $A \in \arg\max_{T\subseteq [n]} c(T)-q'(T)$, that contains $A_R$. 
It holds that $q'(A_R) =p'(R) =p(R)$ since otherwise, there exists $R'$ that $A_{R'}=A_R$, and  $p(R')< p(R)$ contradicting the optimality of $S$.
Thus, for $S'=R\cup \{(i,\arg\min_{j\in[m]} p_{i,j}) \mid i\in A \setminus A_R\}$, it holds that $q'(A_{S'})=p(S')$ and $A_{S'}\in \arg\max c(T)-q'(T)$, then by Equation~\eqref{eq:2}, we found $S'$ that is in the demand and contain $R$, which concludes the proof.
\end{proof}

\begin{claim}\label{cl:coverage}
If $c$ is coverage, then $c'$ obtained by transformation $\transformber$ is also coverage. 
\end{claim}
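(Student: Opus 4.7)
The plan is to lift the coverage representation of $c$ directly to a coverage representation of $c'$, without introducing new ground elements or new weights. Recall that $c$ being a coverage function means there exist a ground set $E$, a weight map $w : E \to \mathbb{R}_{\geq 0}$, and a map $g : E \to 2^{[n]}$ such that
$$c(A) \;=\; \sum_{j \in E} w(j) \cdot \indicator{A \cap g(j) \neq \emptyset} \qquad \text{for all } A \subseteq [n].$$

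To build a coverage representation of $c' : 2^{[n] \times [m]} \to \mathbb{R}_{\geq 0}$, I would take the same element set $E$, the same weight map $w$, and define a new set-valued map $g' : E \to 2^{[n] \times [m]}$ by $g'(j) := g(j) \times [m]$ for every $j \in E$. Intuitively, each coverage element $j$ in the new representation is covered by any pair $(i,k)$ whose first coordinate was already covering $j$ in the original representation; this mirrors the fact that $\transformber$ simply duplicates each box $i$ into $m$ Bernoulli copies $(i,1),\ldots,(i,m)$ that share their coverage behavior.

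The only remaining step is a routine verification. For every $S \subseteq [n] \times [m]$, let $A_S := \{i \in [n] \mid \exists k \in [m],\, (i,k) \in S\}$. Then
$$S \cap g'(j) \neq \emptyset \;\iff\; \exists (i,k) \in S \text{ with } i \in g(j) \;\iff\; A_S \cap g(j) \neq \emptyset.$$
Combining this equivalence with the defining identity $c'(S) = c(A_S)$ of $\transformber$ and the coverage representation of $c$, I would conclude
$$c'(S) \;=\; c(A_S) \;=\; \sum_{j \in E} w(j) \cdot \indicator{A_S \cap g(j) \neq \emptyset} \;=\; \sum_{j \in E} w(j) \cdot \indicator{S \cap g'(j) \neq \emptyset},$$
which is exactly the defining condition for $c'$ to be coverage with the tuple $(E, w, g')$.

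There is no real obstacle in this argument: the construction is essentially tautological, reflecting that $\transformber$ only duplicates boxes without creating any new interaction between them. The only thing worth emphasizing is that it is important to replicate each $g(j)$ across the full product $g(j) \times [m]$ (rather than, say, $g(j) \times \{1\}$), so that every Bernoulli copy of a box inherits the coverage behavior of the original box.
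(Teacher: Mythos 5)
Your construction is exactly the paper's: keep the ground set $E$ and weights $w$, and set $g'(j) = g(j)\times[m]$. The paper states this construction without spelling out the verification; you've just made explicit the routine check that $S\cap g'(j)\neq\emptyset$ iff $A_S\cap g(j)\neq\emptyset$, which is correct.
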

\begin{proof}
    Let $E$, $w$, and $g$ be elements and the mappings that are defined in the definition of coverage with respect to $c$. We construct $g':E \rightarrow 2^{[n]\times[m]}$ such that $E$, $w$, and $g'$ define $c'$ as a coverage.
    In particular, we have 
    \[
        g'(e) = g(e) \times [m], \, \forall e\in E. \qedhere
    \]
\end{proof}

\begin{claim}\label{cl:xos}
If $c$ is XOS, then $c'$ obtained by transformation $\transformber$ is also XOS. 
\end{claim}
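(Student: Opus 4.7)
The plan is to exhibit an XOS representation of $c'$ directly from an XOS representation of $c$. Let $\{c_k\}_{k \in [K]}$ be a family of non-negative additive functions on $[n]$ such that $c(A) = \max_k c_k(A)$ for every $A \subseteq [n]$; such clauses exist since $c$ is XOS (and monotone, so the clauses may be taken non-negative). For every $k \in [K]$ and every map $r: [n] \to [m]$, I will define an additive function $c'_{k,r}: 2^{[n]\times[m]} \to \mathbb{R}_{\geq 0}$ by assigning weight $c_k(i)$ to the pair $(i, r(i))$ and weight $0$ to every other pair. The proposed XOS representation of $c'$ is the family $\{c'_{k,r}\}_{k, r}$.

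To verify this works, I will prove $c'(S) = \max_{k, r} c'_{k,r}(S)$ for every $S \subseteq [n]\times[m]$. Let $A_S := \{i \mid \exists j,\, (i,j) \in S\}$, so that $c'(S) = c(A_S)$ by definition of $\transformber$. For the ``$\geq$'' direction, for any $k$ and $r$ the sum $c'_{k,r}(S)$ only collects the weights of pairs $(i, r(i)) \in S$, so every contributing index $i$ lies in $A_S$; hence $c'_{k,r}(S) \leq \sum_{i \in A_S} c_k(i) = c_k(A_S) \leq c(A_S) = c'(S)$. For the ``$\leq$'' direction, pick $k^*$ with $c_{k^*}(A_S) = c(A_S)$ (by XOS of $c$) and let $r^*$ send each $i \in A_S$ to some $j$ satisfying $(i,j) \in S$ (such $j$ exists by definition of $A_S$), with $r^*$ arbitrary on $[n] \setminus A_S$. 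Then for $i \notin A_S$ no pair $(i, \cdot)$ lies in $S$, so $c'_{k^*, r^*}(S) = \sum_{i \in A_S} c_{k^*}(i) = c(A_S) = c'(S)$.

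The main conceptual ingredient, and essentially the only non-trivial point, is the ``representative choice'' $r$: without this extra degree of freedom, a single fixed assignment of weights to pairs $(i,j)$ could not simultaneously handle every subset $S \subseteq [n] \times [m]$, because different subsets $S$ witness the presence of an index $i \in A_S$ through different copies $j$. Once the parametrization by $(k, r)$ is in place, both inequalities follow directly from the XOS property of $c$ and the identity $c'(S) = c(A_S)$, so I anticipate no further obstacle.
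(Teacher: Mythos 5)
Your proposal is correct and matches the paper's argument exactly: both define, for each XOS clause $k$ of $c$ and each choice function $r:[n]\to[m]$, an additive function on $[n]\times[m]$ that places weight $c_k(i)$ on the designated pair $(i,r(i))$, and then show $c'=\max_{k,r}c'_{k,r}$. The paper states the final identity without elaboration, whereas you spell out both inequalities, but the construction and the underlying idea are identical.
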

\begin{proof}
    Cost function $c$ is XOS on $[n]$, this means that there exist additive functions $a^1,\ldots,a^\ell$ over $[n]$, such that for every $S \subseteq [n]$, $c(S)=\max_{t\in [\ell]} a^t(S)$. We want to show that $c'$ is XOS over $[n] \times [m]$. To this end we construct the following family of additive functions over $[n] \times [m]$: for every $t\in [\ell]$, and $r\in [m]^n$, define the additive function $a^{t,r}(i,j)= a^t(i) \cdot \indicator{r_i=j}$. We conclude the proof by observing that 
    \[
        c'(S) = \max_{t\in[\ell], r\in [m]^n}a^{t,r}(S), \, \forall S\subseteq [n]\times [m]. \qedhere
    \]
\end{proof}
\begin{claim}\label{cl:subadditive}
If $c$ is subadditive, then $c'$ obtained by transformation $\transformber$ is also subadditive. 
\end{claim}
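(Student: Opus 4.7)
The plan is to exploit the key structural property of the transformation $\transformber$: the value of $c'$ on a set $S \subseteq [n] \times [m]$ depends only on the projection of $S$ onto its first coordinate. Specifically, if we define $A_S = \{i \in [n] \mid \exists j \in [m] \text{ such that } (i,j) \in S\}$, then by the definition of the transformation, $c'(S) = c(A_S)$. This was already used implicitly in the proofs of \Cref{prop:maintain-sm} and \Cref{cl:mrf}, so the same notation carries over.

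The main observation I would use is that projection commutes with union: for any $S, T \subseteq [n] \times [m]$, we have $A_{S \cup T} = A_S \cup A_T$. This is immediate from the definition of $A_{(\cdot)}$, since $(i,j) \in S \cup T$ iff $(i,j) \in S$ or $(i,j) \in T$.

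Given these two facts, the proof of subadditivity is a one-line chain of inequalities. For any $S, T \subseteq [n] \times [m]$:
\[
c'(S \cup T) = c(A_{S \cup T}) = c(A_S \cup A_T) \leq c(A_S) + c(A_T) = c'(S) + c'(T),
\]
where the first and last equalities use the definition of $c'$, the second equality uses the commutation of projection with union, and the single inequality uses the assumed subadditivity of $c$ (applied to the sets $A_S, A_T \subseteq [n]$).

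I do not anticipate any obstacle here: the transformation essentially collapses the second coordinate, so any pointwise set-algebraic inequality satisfied by $c$ lifts to $c'$ in a mechanical way. This is in the same spirit as (and strictly simpler than) the corresponding arguments already carried out for submodular, XOS, and coverage cost functions in \Cref{prop:maintain-sm}, \Cref{cl:xos}, and \Cref{cl:coverage}, which also proceeded by reducing the relevant inequality on $[n] \times [m]$ to the corresponding inequality on $[n]$ via the map $S \mapsto A_S$.
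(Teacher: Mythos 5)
Your proof is correct and is essentially identical to the paper's: both reduce the subadditivity inequality on $[n]\times[m]$ to the subadditivity of $c$ on $[n]$ via the projection map $S\mapsto A_S$, using the identities $c'(S)=c(A_S)$ and $A_{S\cup T}=A_S\cup A_T$.
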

\begin{proof}
We need to prove that for any pair of sets $S,T \subseteq [n] \times [m]$, it holds that $c'( S) + c'( T) \geq c'(S\cup T) $.
Let $A_S= \{i \mid \exists j\in [m]\mbox{ such that } (i,j)\in S \}$,  $A_T= \{i \mid \exists j\in [m]\mbox{ such that } (i,j)\in T \}$, and $A_{S\cup T} = \{i \mid \exists j\in [m]\mbox{ such that } (i,j)\in S\cup T \}$.
Then it holds that 
\[
    c'( S) + c'(T) = c(A_S)+c(A_T) \geq c(A_{S\cup T}) =c'(S\cup T),
\]
where the inequality is by subadditivity of $c$ and since $A_S \cup A_T = A_{S\cup T}$.
\end{proof}
\begin{claim}\label{cl:budget}
Transformation $\transformber$ does not maintain budget additive. 
\end{claim}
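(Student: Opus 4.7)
The plan is to exhibit an explicit budget-additive cost function $c$ together with distributions whose image under $\transformber$ is not budget-additive. I would take $n=2$ boxes with identical weighted Bernoulli distributions supported on $\{0,1\}$, so that after renaming the union of supports one has $m=2$, and I would choose the budget-additive cost with weights $w_1=w_2=2$ and budget $B=3$; that is, $c(S)=\min(2|S|,3)$, giving $c(\emptyset)=0$, $c(\{1\})=c(\{2\})=2$, and $c(\{1,2\})=3$. (This $c$ is also submodular, consistently with \Cref{prop:maintain-sm}.)

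Applying $\transformber$ yields a cost function $c'$ on the ground set $\{(1,1),(1,2),(2,1),(2,2)\}$ given by $c'(S) = c(A_S)$, where $A_S = \{i : \exists j,\ (i,j)\in S\}$. The two critical values I would highlight are
\[
c'\bigl(\{(1,1),(1,2)\}\bigr) = c(\{1\}) = 2
\quad\text{and}\quad
c'\bigl(\{(1,1),(2,1)\}\bigr) = c(\{1,2\}) = 3.
\]
Now, suppose for contradiction that there exist weights $\{w'_{(i,j)}\}_{(i,j)\in[2]\times[2]}$ and a budget $B'$ with $c'(S)=\min(\sum_{(i,j)\in S} w'_{(i,j)},\, B')$. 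From $c'(\{(i,j)\})=2$ for every singleton I get $w'_{(i,j)}\ge 2$ and $B'\ge 2$. Then $w'_{(1,1)}+w'_{(1,2)}\ge 4>2$, so $c'(\{(1,1),(1,2)\})=2$ forces $B'=2$. But this gives $c'(\{(1,1),(2,1)\})\le B'=2$, contradicting $c'(\{(1,1),(2,1)\})=3$.

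There is no real technical obstacle; the entire content of the claim lies in finding an example in which the budget is strictly between $\max_i w_i$ and $\sum_i w_i$. The underlying intuition is that $\transformber$ inserts several ``copies'' of each original element whose collective cost is still only $c(\{i\})$, so any budget-additive representation of $c'$ must cap at $B'=\max_i w_i$ (to accommodate a fully duplicated single box), which is too small to reproduce the original cost of a genuine two-element subset whenever $c(\{1,2\})>\max_i w_i$.
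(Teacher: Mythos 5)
Your proof is correct and follows essentially the same approach as the paper: both exhibit a small explicit budget-additive $c$ and derive a contradiction by noting that the duplicated copies of a single original box under $\transformber$ force the budget too low to reproduce the cost of a genuine multi-box set. The paper uses $n=3$, $m=2$ with $c(S)=\min(|S|,2)$ rather than your $n=2$, $m=2$ with $c(S)=\min(2|S|,3)$, but the mechanism and contradiction are the same.
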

\begin{proof}
Consider the case where $n=3,m=2$ and $c(S) = \min(|S|,2)$ (a symmetric function with 3 values that equal $1$ and a budget of $2$, and we create two copies of every original box).
$c'$ must define a value of $1$ for all boxes, and a budget of $2$, but then $c'(\{(1,1),(1,2)\})=2$, which should be $1$ by the definition of the transformation.
\end{proof}

\end{document}